\DeclareOldFontCommand{\rm}{\normalfont\rmfamily}{\mathrm}
\DeclareOldFontCommand{\sf}{\normalfont\sffamily}{\mathsf}
\DeclareOldFontCommand{\tt}{\normalfont\ttfamily}{\mathtt}
\DeclareOldFontCommand{\bf}{\normalfont\bfseries}{\mathbf}
\DeclareOldFontCommand{\it}{\normalfont\itshape}{\mathit}
\DeclareOldFontCommand{\sl}{\normalfont\slshape}{\@nomath\sl}
\DeclareOldFontCommand{\sc}{\normalfont\scshape}{\@nomath\sc}
\tikzset{->-/.style={decoration={
  markings,
  mark=at position .5 with {\arrow{>}}},postaction={decorate}}}
\tikzset{-<-/.style={decoration={
  markings,
  mark=at position .5 with {\arrow{<}}},postaction={decorate}}}
\def\bR {\mathbb{R}}
\def\bZ {\mathbb{Z}}
\def\bC {\mathbb{C}}
\newcommand{\cA}{{\mathcal A}}
\newcommand{\cN}{{\mathcal N}}
\newcommand{\cP}{{\mathcal P}}
\newcommand{\cQ}{{\mathcal Q}}
\newcommand{\cS}{{\mathcal S}}
\newcommand{\cV}{{\mathcal V}}
\newcommand{\cM}{{\mathcal M}}
\def\bea{\begin{eqnarray}}
\def\eea{\end{eqnarray}}
\def\ie{\begin{equation}\begin{aligned}}
\def\fe{\end{aligned}\end{equation}}
\newcommand{\A}{{\alpha}}
\newcommand{\D}{{\delta}}
\newcommand{\ra}{\rangle}
    \newcommand{\secref}[1]{\S\ref{#1}}
    \newcommand{\figref}[1]{Figure~\ref{#1}}
    \newcommand{\appref}[1]{Appendix~\ref{#1}}
    \def\ie{\begin{equation}\begin{aligned}}
    \def\fe{\end{aligned}\end{equation}}
    \def\ria{\rightarrow}
    \newcommand{\E}{{\epsilon}}
    \newcommand{\pa}{\partial}
    \newcommand{\ld}{\lambda}
    \newcommand{\cI}{{\mathcal I}}
\theoremstyle{plain}
  \newtheorem{proposition}{Proposition}
  \newtheorem{lemma}{Lemma}
  \newtheorem{corollary}{Corollary}
\theoremstyle{definition}
\theoremstyle{remark}
  \newtheorem{remark}{Remark}
\newcommand{\be}{\begin{equation}}
\newcommand{\ee}{\end{equation}}
\newcommand{\GX}{{\mathcal X}}
\newcommand{\TX}{\mathcal T_\GX}
\newcommand{\TNk}{\mathbf{TN}_k}
\newcommand{\ALEk}{\mathbf{ALE}_k}
\numberwithin{equation}{section}
\title{\vspace{-2cm}
Evidence for an Algebra \\ of $\boldsymbol{G_2}$ Instantons \\ [0.5cm]}
\author{Michele Del Zotto$^{\dagger}$, Jihwan Oh$^{\ddagger}$, and Yehao Zhou$^{\ast}$
\\[1cm]
	\small\slshape$^\dagger$ Mathematics Institute, Uppsala University,  \\[-0.2cm] 
	\small\slshape Box 480, SE-75106 Uppsala, Sweden\\
	\small\slshape$^\dagger$ Department of Physics and Astronomy, Uppsala University,  \\[-0.2cm] 
	\small\slshape Box 516, SE-75120 Uppsala, Sweden\\
	\small\slshape$^\ddagger$ Mathematical Institute, University of Oxford,  \\[-0.2cm] 
	\small\slshape Woodstock Road, Oxford, OX2 6GG, United Kingdom\\
	\small\slshape$^\ast$ Perimeter Institute for Theoretical Physics,\\[-0.2cm] 
	\small\slshape 31 Caroline St. N., Waterloo, ON N2L 2Y5, Canada\\
	}
\date{}
\begin{document}

\maketitle

\paragraph{\hspace{.9cm}\large{Abstract}}
\vspace{-1cm}
\begin{abstract}

\noindent In this short note, we present some evidence towards the existence of an algebra of BPS $G_2$ instantons. These are instantonic configurations that govern the partition functions of 7d SYM theories on local $G_2$ holonomy manifolds $\mathcal X$. To shed light on such structure, we begin investigating the relation with parent 4d $\mathcal N=1$ theories obtained by geometric engineering M-theory on $\mathcal X$. The main point of this paper is to substantiate the following dream: the holomorphic sector of such theories on multi-centered Taub-NUT spaces gives rise to an algebra whose characters organise the $G_2$ instanton partition function. As a first step towards this program we argue by string duality that a multitude of geometries $\mathcal X$ exist that are dual to well-known 4d SCFTs arising from D3 branes probes of CY cones: all these models are amenable to an analysis along the lines suggested by Dijkgraaf, Gukov, Neitzke and Vafa in the context of topological M-theory. Moreover, we discuss an interesting relation to Costello's twisted M-theory, which arises at local patches, and is a key ingredient in identifying the relevant algebras.

\end{abstract}

\vfill{}
--------------------------

September 2021

\thispagestyle{empty}

\newpage

\tableofcontents

\allowhyphens

\section{Introduction}

The study of string compactifications on backgrounds with $G_2$ holonomy has received a recent renewal of interest in the physics literature \cite{Halverson:2014tya,Halverson:2015vta,Braun:2016igl,daCGuio:2017ifs,Braun:2017ryx,Braun:2017uku,Braun:2017csz,Braun:2018fdp,Acharya:2018nbo,Braun:2018vhk,Braun:2019lnn,Braun:2019wnj,Barbosa:2019bgh,Acharya:2020vmg,Acharya:2021rvh,Ashmore:2021pdm,Cvetic:2021maf,Hubner:2020yde,Cvetic:2020piw} due to the discovery of infinitely many novel examples of such varieties. We can group the two kinds of novel $G_2$ backgrounds in two broad classes: on the one hand we have the compact ones, which arise from twisted connected sums of asymptotically cylindrical Calabi Yau 3-folds \cite{MR2024648,Kovalev2010,MR3109862,Corti:2012kd}, on the other hand we have non-compact local models that are obtained either from specific circle fibrations over asymptotic Calabi-Yau cones \cite{Foscolo:2017vzf} or as asymptotically conical $G_2$ spaces in their own right \cite{Foscolo:2018mfs}.

\medskip

Our paper builds on the remark that many of these non-compact cases might admit equivariant-like twists as M-theory backgrounds \cite{Nekrasov:2014nea,Costello:2016mgj,Costello:2016nkh} --- see also \cite{Eager:2021ufo,Saberi:2021weg}. In analogy with what happens for Calabi-Yau 3-folds \cite{Iqbal:2003ds,Nekrasov:2014nea,DelZotto:2021gzy} it is natural to expect these twisted backgrounds are related to suitable generalisations of topological string partition functions, the so called topological M-theories \cite{Dijkgraaf:2006um,Dijkgraaf:2004te}, in terms of equivariant Donaldson-Thomas (DT) theories \cite{Nekrasov:2014nea}.\footnote{\ See also \cite{deBoer:2005pt,Bonelli:2005ti,Bonelli:2005rw,Bonelli:2006ph} for other attempts at a topological string or topological M-theory on $G_2$ manifolds.} From this perspective in particular, it is natural to expect an interpretation of topological M-theory in terms of a generating function of Witten indices of supersymmetric quantum mechanical sigma models with target suitable supersymmetric versions of the $G_2$ instanton moduli spaces of a given geometry. 

\medskip


This short note is motivated by the question of understanding the structures governing the enumerative geometry of local $G_2$ manifolds \cite{Donaldson:1996kp}, as well as their physical applications in the context of (equivariantly) twisted M-theory \cite{Nekrasov:2014nea,Costello:2016nkh,Costello:2017fbo,Gaiotto:2019wcc} and stringy correspondences \cite{DelZotto:2021gzy,CORR}.

\medskip

Geometric engineering techniques in M-theory can be exploited to associate to a given local $G_2$ manifold $\GX$ a four-dimensional theory with $\mathcal N=1$ supersymmetry which we denote $\TX$ in this paper. Whenever $\TX$ has a conserved $U(1)_R$ symmetry one can consider twisting this theory on any Kähler four-manifold, $\mathcal M$, by identifying the $U(1)_R$ symmetry with a $U(1)$ subgroup of its $U(2)$ holonomy \cite{Johansen:1994aw,Witten:1994ev,Johansen:1994ud,Johansen:2003hw} -- see also \cite{Closset:2013vra,Closset:2014uda}. The resulting theory is not topological in this case, rather it has a dependence on the holomorphic structure of $\mathcal M$. Often it is possible to compute the corresponding partition function $Z_{\mathcal T}(\mathcal M)$, for example by means of localization methods. Then, we have a natural correspondence
\be\label{eq:4d}
Z_{\TX}(\mathcal M) = \mathcal Z_M(\mathcal M \times \mathcal X)
\ee
between an M-theory partition function on a background with topology $\mathcal M \times \mathcal X$ and the twisted 4d $\mathcal N=1$ partition function on the background $\mathcal M$ of the field theory $\TX$. This poses the question of constructing $G_2$ manifolds corresponding to theories $\TX$ with an unbroken $U(1)_R$ symmetry. Thanks to string duality, we argue in this paper that an infinite class of such geometries exist, giving rise to local $G_2$ manifolds that geometrically engineer 4d $\mathcal N=1$ SCFTs.\footnote{\ We warmly thank I{\~n}aki Garc{\'i}a Etxebarria for suggesting to exploit this duality to us, as well as explaining several of his results that are crucial to argue for the stability of these backgrounds in M-theory.}

\medskip

In this paper we are interested in the case when either $\mathcal M$ is a multicentered Taub-NUT space with $k$ centers, which we denote $\TNk$, or $\mathcal M$ is an ALE space of type $A_k$.\footnote{\ We refer our readers to the papers \cite{Witten:2009xu} and \cite{DelZotto:2021gzy} for a through review of the properties of these spaces and their relations. Of course our results can be generalised to include other ALE spaces, as well as discrete $C_3$ form fluxes in M-theory to give rise to $G_2$ instantons in all other simple gauge groups.} Recall that the theory obtained from M-theory on $\TNk$ is a 7d $U(k)$ maximally supersymmetric gauge theory. Similarly, the theory  obtained from M-theory on the $\ALEk$ space is a 7d $SU(k)$ maximally supersymmetric gauge theory.\footnote{\ In this paper we are neglecting all the subtleties related to the choices of global structures for these models \cite{GarciaEtxebarria:2019caf,Albertini:2020mdx,Morrison:2020ool}. We plan to revisit our analysis in the near future to take the latter into account.}  Therefore, the backgrounds of the form $\TNk \times \mathcal X$ and backgrounds $\ALEk \times \mathcal X$ are such that they admit a distinct interpretation in geometric engineering, as 7d partition functions for these gauge theories, namely
\be\label{eq:7d}
Z^{7d}_{U(k)}(\mathcal X) = \mathcal Z_M(\TNk \times \mathcal X) \qquad\text{and}\qquad Z^{7d}_{SU(k)}(\mathcal X) = \mathcal Z_M(\ALEk \times \mathcal X)
\ee
Whenever the $G_2$ manifolds have compact associative 3-cycles, we argue that the 7d partition functions on the LHS of these identities receive contributions from $G_2$ instantons for the groups $U(k)$ and $SU(k)$ respectively, which therefore gives an interpretation in terms of enumerative geometry of local $G_2$ manifolds of the partition functions we are interested in. Moreover, if the manifolds $\mathcal X$ admit compact co-associative four-cycles, we expect the parition functions can receive contributions also from 7d monopoles.

\medskip

From the relation with the M-theory partition function, combining  \eqref{eq:4d} with  \eqref{eq:7d} we expect a 4d/7d correspondence of the schematic form \cite{CORR}
\be\label{eq:4d7d}
\begin{gathered}
\xymatrix{
Z^{\,4D}_{\TX}(\mathbf{TN}_k)\ar[d]_{\mathbf{TN} \text{ radius} \atop \to \infty }  \ar@{<->}[r] & Z_{U(k)}^{\,7D}(\mathcal X)\ar[d]^{\text{decouple} \atop U(1)} \\
Z^{\,4D}_{\TX}(\mathbf{ALE}_{A_k})  \ar@{<->}[r]& Z_{SU(k)}^{\,7D}(\mathcal X)
}
\end{gathered}
\ee
which extends the results of \cite{DelZotto:2021gzy} to this example, thus giving rise to a dictionary between holomorphically twisted partition functions of 4d $\mathcal N=1$ SCFTs on multi-centered Taub-NUT spaces and the $U(k)$ $G_2$-instanton partition function. Whenever $\mathcal X$ admits a $U(1)$ isometry,  the $k=1$ case of this correspondence (in particular, the bottom line in equation \eqref{eq:4d7d}) coincides with the definition of topological M-theory as given in \cite{Dijkgraaf:2004te}, in terms of a generating function of $D2$-$D6$ boundstates arising from exploiting the $U(1)$ fibration of $\mathcal X$ as an M-theory circle to reduce to IIA. In this paper we will discuss how this perspective can be exploited to begin unraveling the properties of these partition functions in a simple example. In future papers of this series we will address more complicated geometries and constructions. \textit{The main message of this paper is that building on the 4d/7d correspondence, the algebra of operators which govern the holomorphic sector of 4d $\mathcal N=1$ theory is such that the resulting $G_2$ instanton partition function can be interpreted as a character.}

\medskip

This paper is organised as follows. In section \ref{sec:G2instantonpartitionfun} we argue that the partition function of the 7d $U(K)$ gauge theory, in presence of compact associative cycles, localises on $G_2$ instantons. This gives rise to a definition of the partition function $Z_{U(K)}^{\,7d}(\mathcal X)$ as a generating function for some sort of equivariant volumes of $G_2$-instanton moduli spaces. In section \ref{sec:infinity} we discuss a string duality that allows to construct infinitely many novel examples of $G_2$ manifolds with an interpretation as 4d $\mathcal N=1$ SCFTs that will play a role in the construction of the correspondence and give rise to the details of the corresponding dictionary as suggested by geometric engineering. The resulting local $G_2$ geometries can be understood in terms of fibrations of Taub-NUT spaces over a collection of intersecting associatives. For this class of geometries a strategy to reconstruct the full $G_2$ instanton partition function is presented, in terms of the 4d side of the correspondence in section \ref{sec:example}. Since the $G_2$ instantons are engineered by M2 branes wrapping these loci, this gives a way of determining their contributions in terms of a matrix model which can also be uplifted to an index for an associated supersymmetric quantum mechanics which we derive by a chain of string dualities. 

\medskip

Exploiting such quantum mechanics it is the key towards understanding the algebra of $G_2$ instantons for the example we consider, in particular, zooming to the intersection point of a pair of associatives, one has a local model that can be traced back to a well-known orbifold of the Bryant-Salamon metric for the $G_2$ cone over the three-dimensional complex projective plane $\mathbf{CP}^3$ \cite{BryantSalamon}, that were discussed in \cite{Atiyah:2001qf,Acharya:2001gy}. This is a universal building block for the $G_2$ instanton partition functions that can be understood in terms of the twisted M-theory {\textit{\`a} la} Costello \cite{Costello:2016mgj} (see also  \cite{Eager:2021ufo,Saberi:2021weg}), along the lines of \cite{Oh:2021bwi}. 

Indeed, our search for the algebra of $G_2$ instantons was initiated by the fascinating observation by Costello\footnote{We learned this from Kevin Costello's lecture ``Cohomological hall algebras from string and M theory" in Perimeter Institute, which can be found in https://pirsa.org/19020061.}: if we instead consider twisted M-theory on $\bR\times\bC\times TN_K$ not the $G_2$ cone, one can argue that the Donaldson-Thomas partition function on $\bC\times TN_K$ forms a character of an algebra of operators of twisted M-theory. In this paper, we will consider twisted M-theory on the $G_2$ cone over $\bf{CP}^3$ and compute the $G_2$ instanton partition, which is presumably a part of the $G_2$ Donaldson-Thomas partition function\footnote{It will be nice to study a relation between our work and Joyce's conjecture \cite{Joyce:2016fij}.}. Along the way, we will be able to see that the analogous statement to that of Costello can be made.

As a final comment, let us remark that the main aim of this paper is to pose this problem: here we show that the study of the $G_2$ instanton parition functions have interesting interconnections with various aspects of geometric engineering and M-theory, which already unveils interesting algebraic structures: this is our evidence for the existence of an algebra of $G_2$ instantons.

\medskip

For the benefit of our readers, the vast majority of the technical aspects of the discussion are confined in the Appendix of the manuscript.

\section{M-theory and $G_2$ instantons}\label{sec:G2instantonpartitionfun}

In this section we discuss our dictionary between the 7d $G_2$ instanton partition function and the underlying M-theory geometry. We stress that in this paper, as in \cite{DelZotto:2021gzy}, the explicit expression of the equivariantly twisted M-theory background is not yet known explicitly for the cases of interest. The non-twisted version of these backgrounds are well known to exist in M-theory, and have been discussed recently in \cite{Dedushenko:2017tdw,Cecotti:2019hyk}: these are 11 dimensional spaces with topology $\mathcal M \times \mathcal X$ where $\mathcal M$ is a HyperK\"ahler space and $\mathcal X$ is a $G_2$ holonomy space. The backgrounds are supersymmetric and preserve 2 supercharges. Strong evidence towards the existence of similar equivariantly twisted structures comes from the supergravity limit \cite{Nekrasov:2014nea,Costello:2016nkh,Eager:2021ufo,Saberi:2021weg}, whenever the spaces $\mathcal M$ and $\mathcal X$ have sufficiently large isometry groups.

\subsection{7d gauge theory and geometry}

It is well known that M-theory on $\TNk  \times\mathbf{R}^7 $ gives rise to a $U(k)$ gauge theory along $\mathbf R^7$. The space of normalisable forms on $\TNk$ can be parametrised by $k$ elements,
\be
\omega_{0}, \omega_\alpha \qquad \alpha = 1,...,k-1
\ee
where $\omega_0$ is such that $\int \omega_0 \wedge \omega_{\alpha} = 0$ for all $\alpha$ and $\int \omega_\alpha \wedge \omega_\beta = C_{\alpha,\beta}$, the Cartan matrix of $A_{k-1}$ type. Consequently, the reduction of the $C_3$ field of M-theory on such a basis gives rise to 7d vectors:
\be
C_3 = A^{(0)} \wedge \omega_0 + \sum_\alpha  A^{(\alpha)} \wedge \omega_\alpha
\ee
where $A^{(0)}$ is a $U(1)$ gauge potential, while the remaining $A^{(\alpha)}$ are associated to the Cartan of $SU(k)$. In the limit in which the Taub-NUT radius of $\TNk$ is sent to infinite size, the form $\omega_0$ stops being normalisable and one is left with an $\ALEk$ metric, which captures only the dynamics of a 7d $SU(k)$ gauge theory. In the moduli space of these geometries we have $k-1$ compact $\mathbb P^1_\alpha$ whose volumes correspond to the vevs in the Cartan of $SU(k)$ for the scalars in the corresponding 7d vectormultiplets
\be
\langle \phi_\alpha \rangle = \text{vol } \mathbb P^1_\alpha \qquad\qquad \int_{\mathbb P^1_\alpha}\omega_\beta = \delta_{\alpha\beta} \,.
\ee
M2 branes wrapping such 2-cycles give rise to the corresponding W-bosons. Sending these volumes to zero we have an enhancement of symmetry and the gauge group becomes non-Abelian.

\medskip

Besides these degrees of freedom, the resulting 7d gauge theory has the following non-perturbative solitonic degrees of freedom:
\begin{itemize} 
\item Instantons with worldvolumes of codimension four that are parametrized by M2 branes strechced along $\mathbf{R}^7$;
\item Monopoles with worldvolumes of codimension three that are parametrized by M5 branes wrapped around $\mathbb P^1_\alpha$;
\item Domain walls with worldvolumes of codimension six that are parametrized by M5 branes strechced along $\mathbf{R}^7$.
\end{itemize}
In this paper, we are interested in understanding some features of the partition function of this theory coupled to a rigid supersymmetric background. In general, one expects to receive contributions to these partition functions from BPS solitons that depend on the properties of the localising BPS background.

\subsection{Localising on $G_2$ instantons}

We are interested in the parition function for 7d SYM on a non-compact space with $G_2$ holonomy $\mathcal X$, suitably regularized to avoid the potential IR divergence with a choice of boundary condition at infinity. Several results about this theory on curved spaces have been obtained recently, see e.g. \cite{Minahan:2015jta,Prins:2018hjc,Iakovidis:2020znp}, working \textit{\`a la} Festuccia-Seiberg \cite{Festuccia:2011ws}, however a detailed study of the case of pure $G_2$ holonomy is complicated by the fact that in this latter case one has only a single parallel spinor, and therefore the localising locus is not reduced to a combination of fixed points arising from a Reeb-like structure. Nevertheless, twisted versions of this theory are well-known to exist and lead to an interesting cohomological field theory depending on the $G_2$ structure of $\mathcal X$ --- see \cite{Acharya:1997gp,Baulieu:1997em,Baulieu:1997jx}. In particular, it is expected that these theories receive contributions from analogues of $G_2$ instanton configurations \cite{Fubini:1985jm,Popov:1992sy,Ivanova:1992nj,Harland:2010ojo}.

\medskip

The main aim of this section is indeed to argue heuristically that the 7d instanton partition function is governed by a suitable supersymmetrization of the topological term 
\be\label{eq:toptop}
\varkappa(F) =  \int_{\mathcal X} \Big(\varphi_3 \wedge ch_2(F)\Big)
\ee
where $\varphi_3$ is the parallel $G_2$ form, which is closed and co-closed, i.e. it satisfies
\be
d \varphi_3 = 0 \qquad\qquad d*\varphi_3 = 0\,.
\ee 
This ensures that $\varkappa(F)$ is a characteristic class. The action for 7d SYM is obtained by dimensional reduction from the 10d SYM Lagrangian (see e.g. \cite{Green:1987mn})
\be
\mathcal L = - {1\over 4} F_{MN}F^{MN} - {i \over 2} \overline{\Psi} \Gamma^M D_M \Psi
\ee
where $\Psi$ is a Majorana-Weyl 10d spinor in the Adjoint of the gauge group. The supersymmetry transformations are
\be
\delta A_M = {i \over 2} \overline{\epsilon} \Gamma_M \Psi \qquad\qquad \delta \Psi = - {1 \over 4} F_{MN}\Gamma^{MN}\epsilon
\ee
The resulting theory has an Adjoint scalar triplet of $SU(2)_R$ symmetry, $L_a$, that arises from the components of $A_M$ in the directions 789. The corresponding supersymmetric action can be found in \cite{Townsend:1983kk}. Here we consider the partition function of the 7d $U(k)$ gauge theory on $\mathcal X$, which can be written schematically as
\be\label{partition1}
Z^{7d}_{U(K)}(\mathcal X)=\int \mathcal D\Phi e^{- \int_{\mathcal X} S[\Phi]},
\ee
where we collectively denote the fields in the 7d $\mathcal N=1$ SYM as $\Phi$, $S[\Phi]$ is the action of the 7d SYM on the $G_2$ manifold $\mathcal X$, and of course
\be\label{eq:ym}
S[\Phi] \supseteq S_{YM}(F) = \, {1\over g^2} \int_{\mathcal X} \text{tr}_{\mathfrak g}  F_A\wedge \ast F_A
\ee
where $F_A$ is the 7d $U(k)$ curvature with potential $A$. The trace is taken in the the normalisation for which a single instanton configuration of $SU(2)$ has unit instanton number, and $g$ is the gauge coupling constant of the 7d theory. We proceed by arguing that the corresponding action is dominated by $G_2$ instantons in the saddle point approximation, following \cite{Tian:2000fu}. We present our argument on the Yang-Mills side  \eqref{eq:ym}.

\medskip

First of all it is necessary to remind our readers that two forms on a $G_2$ manifold are naturally organised according the representation theory of $G_2$ as follows \cite{joyce2000compact} (see also the nice review \cite{Acharya:2004qe})
\be
\wedge^2 T^* \mathcal X =\wedge^2_{\mathbf{7}} \oplus \wedge^2_{\mathbf{14}}, \qquad\qquad H^2(\mathcal X) = H^2_{\mathbf{7}}(\mathcal X) \oplus H^2_{\mathbf{14}} (\mathcal X),
\ee
where the splitting is orthogonal with respect to the canonical pairing
\be\label{eq:hodge}
\langle\eta,\xi\rangle_{\mathcal X} = \int_{\mathcal X} \eta \wedge \ast \xi
\ee
induced by the Hodge star of the $G_2$ manifold. We call $\pi_{\mathbf 7}$ and $\pi_{\mathbf{14}}$ the corresponding projectors.\footnote{\ Sometimes we use the short hand $\eta_{\mathbf{7}} = \pi_\mathbf{7} \eta$ and $\eta_{\mathbf{14}} = \pi_\mathbf{14} \eta$.} Then, of course
\be
S_{YM}(F) = S_{YM}(F_{\mathbf{7}}) + S_{YM}(F_{\mathbf{14}}) 
\ee
There is a natural $G_2$--equivariant linear map, which is compatible with such a splitting, meaning that \cite{BryantMetrics}
\be\label{eq:Tmap}
T: \wedge^2 T^* \mathcal X \to \wedge^2 T^* \mathcal X \qquad T(\eta) \equiv \ast (\eta \wedge \varphi_3) = 2 \pi_{\mathbf 7}(\eta) - \pi_{\mathbf{14}}(\eta).
\ee
It is easy to see that 
\be
\int_{\mathcal X} \eta \wedge \eta \wedge \varphi_3 = 2 \langle \pi_{\mathbf 7}(\eta),\pi_{\mathbf 7}(\eta)\rangle_{\mathcal X} - \langle\pi_{\mathbf{14}}(\eta),\pi_{\mathbf{14}}(\eta)\rangle_{\mathcal X}
\ee
follows from the definition \eqref{eq:hodge} and the property \eqref{eq:Tmap} by orthogonality. 

\medskip

Therefore, the topological term in \eqref{eq:toptop} is such that
\be
\begin{aligned}
\alpha \varkappa(F) &= \text{ tr}_{\mathfrak g} \int_{\mathcal X} F_A \wedge F_A \wedge \varphi_3 = 2 \text{ tr}_{\mathfrak g} \langle \pi_{\mathbf 7}(F_A),\pi_{\mathbf 7}(F_A)\rangle_{\mathcal X} - \text{tr}_{\mathfrak g} \langle\pi_{\mathbf{14}}(F_A),\pi_{\mathbf{14}}(F_A)\rangle_{\mathcal X}\\
&= g^2 \Big( \, 2  S_{YM}(F_{\mathbf{7}}) -  S_{YM}(F_{\mathbf{14}})\Big)
\end{aligned}
\ee
Here $\alpha$ is a constant to keep track of the correct normalisation for the second Chern class.  This gives rise to the following identities \cite{earpthesis,Earp:2011dh}:
\be
\begin{aligned}
S_{YM}(F) =  {1 \over 2} \Big(3 S_{YM}(F_{\mathbf{14}}) + \frac{\alpha}{g^2} \varkappa(F) \Big) = 3 S_{YM}(F_{\mathbf{7}}) - \frac{\alpha}{g^2} \varkappa(F)
\end{aligned}
\ee
Since $S_{YM}(F)$ has to be positive, then only two possibilities are allowed, either $F_{\mathbf{7}} = 0$ and $\varkappa(F) < 0$, or $F_{\mathbf{14}} = 0$ and $\varkappa(F) > 0$, which are the $G_2$ analogues of the self-dual and anti-self dual equations. For compact $G_2$ manifolds, $F_{\mathbf{7}}$ is always zero and for this reason here we will also focus on those instanton configurations.  The topological term in equation \eqref{eq:toptop} is then saturated by $G_2$ instanton configurations that are supported on an associative 3-cycle $\sigma_3$ and have a non-trivial Chern class in the corresponding Poincaré dual four-form. By definition, these $G_2$ instantons are solutions to the equation
\be
\ast F_A = - F_A \wedge \varphi_3\,.
\ee
It is interesting to remark that $G_2$ instanton equations are rigid for compact $G_2$ spaces, meaning that the corresponding moduli spaces have virtual dimension zero, and the corresponding moduli space is a collection of points \cite{Doanthesis,Walpuskithesis}, however for non-compact manifolds the moduli problem becomes, of course richer, and more interesting moduli spaces are allowed \cite{driscoll2021deformations}. Of course, for 7d SYM field configurations the relevant moduli spaces will be enhanced by the contribution from the other fields in the 7d SYM Lagrangian.

\medskip

\textbf{Remark.} In general,  $\mathcal X$ can have calibrated associative 3-cycles as well as calibrated co-associative 4-cycles. For the examples that we are considering in this paper, however, we always have a phase of the geometry where there are no compact co-associative 4-cycles, so we will assume it to be the case for the time being. We will comment on the phases with compact coassociative 4-cycles below.

\medskip

 Let us denote by $\sigma_{3,i}$ a basis of $H_3(\mathcal X,\mathbb Z)$ and by abuse of notation also their calibrated representatives. The complexified volumes will give rise to complex scalar moduli for M-theory, which we denote
\be
q_i =  \int_{\sigma_{3,i}} {\boldsymbol \varphi}_3 \qquad\qquad {\boldsymbol \varphi}_3 = \varphi_3 + {i \over 2\pi} \, C_3.
\ee
Here, of course, $C_3$ is the M-theory 3-form. A $G_2$ instanton configuration for $U(k)$ with instanton number $\kappa_i$ along $\sigma_{3,i}$ is engineered in this setting by a stack of Euclidean M2 branes wrapping $\sigma_{3,i}$ for $\kappa_i$ times. The contribution from such a $G_2$ instanton to the parition function is proportional to
\be\label{eq:M2inst}
Z^k_{\kappa_i}(\sigma_{3,i}) Q_{i}^{\kappa_i},
\ee
where $Q_{i}^{\kappa_i}  = e^{-q_i \kappa_i}$ and $Z^k_{\kappa_i}(\sigma_{3,i})$ is the partition function of $\kappa_i$ M2 branes wrapping the associative 3-cycle $\sigma_{3,i} \subset \mathcal X$ in the background of $\TNk \times \mathcal X$. Whenever we have multi-wrapping, this gives rise to further contributions proportional to the same power of $Q_{i}$. In general, the coefficient multiplying $Q_i^N$ is of the from
\be
\sum_{\kappa_i, m_i \text{ such that}\atop N = \kappa_i m_i} (Z^k_{\kappa_i}(\sigma_{3,i}))^{m_i}\,.
\ee
Whenever the instanton is supported on $\sigma_{3} = \sum_{i=1}^{b_3(\mathcal X)} m_i \sigma_{3,i}$ with instanton numbers $\kappa_i$ on each $\sigma_{3,i}$, its contribution to the partition function receives a weight
\be
Q_{\sigma_3}^{\kappa} = \prod_{i=1}^{b_3(\mathcal X)} Q_{i}^{m_i \kappa_i}.
\ee
Therefore, schematically, for examples without co-associatives we can write
\be
\mathcal Z_M = \sum_{\sigma \in H_3(\mathcal X)} \sum_\kappa Z^k_{\kappa}(\sigma) Q_{\sigma}^{\kappa}.
\ee

\medskip

\noindent \textbf{Remark.} The M-theory partition function could in principle receive contributions also from M5 branes wrapping $\mathbb P^1_\alpha \times \mathcal S_{4,a}$ where $\mathcal S_{4,a}$ is a compact co-associative 4-cycle. These contributions will be relevant for slight generalisations of our backgrounds, which are associated to a different twist of the 4d $\mathcal N=1$ theory, in particular, the vevs of such co-associatives will break the $U(1)_R$ symmetry and so the corresponding backgrounds would no longer be twisted holomorphically. A construction of such partition functions on the 4d side of the 4d/7d correspondence can be achieved  by exploiting rigid curved supersymmetry and non-minimal $\mathcal N=1$ supergravity coupled to the corresponding Ferrara-Zumino $\mathcal F$ multiplet. Since the latter reduces to the $\mathcal R$ multiplet whenever we are switching off the offending vevs, we believe that such partition functions are the most natural extension to the partition functions we consider in this note.


\bigskip

\noindent Based on the above analysis then, naively one expects that
\be\label{eq:7dpartf}
\widehat{Z}^{\,7D}_{U(k)}(\mathcal X)= {Z^{7D}_{U(k)}(\mathcal X) \over Z_{U(k)}^{(0)}(\mathcal X) } = \sum_{\sigma_3 \in H_3(X)} \sum_{\kappa \geq 0} Q_{\sigma_3}^{\kappa}  \int_{\mathcal M^{U(k)}_{\sigma_3,\kappa}} \mathbf{1},
\ee
where:
\begin{itemize}
\item $Z_{U(k)}^{(0)}(\mathcal X)$ is a perturbative contribution;
\item $\mathcal M^{U(k)}_{\sigma_3,\kappa}$ is the moduli space of a BPS version of the $U(k)$ $G_2$ instanton, with instanton numbers $\kappa_i$ over the collection of normal bundles in the associative 3-cycles in the support of the vector $m_i$ associated to $\sigma_3 = \sum_i m_i \sigma_{3,i}$. Each $G_2$ instanton configuration contributes to the partition function with a suitably regularized volume of such moduli space.\footnote{\ A possible regularization arise in the equivariant setting, where we expect that parameters for each instanton contribution are associated to a local $\Omega$ deformation \cite{Lossev:1997bz,Nekrasov:2002qd} of the $SO(4)$ rotation of $\mathcal N_{\sigma_{3,i}}$. The explicit form of the glueing rules among such parameters into global ones can be read off by the existence of the global $G_2$ structure.}
\end{itemize}
Comparison between \eqref{eq:7dpartf} and \eqref{eq:M2inst} leads to identify
\be
Z_{\kappa_i}^k(\sigma_{3,i}) = \int_{\mathcal M^{U(k)}_{\kappa_i,\sigma_{3,i}}} \mathbf{1}\,.
\ee
Building on the idea of topological M-theory \cite{Dijkgraaf:2004te}, for the class of examples that we will consider in this paper we claim that the matrix model computing  $Z_{\kappa_i}^k(\sigma_{3,i}) $ can be obtained by considering bound states of D2 and D6 branes in the type IIA frame. For the special example we consider in section \ref{sec:example} below, we also comment on an interesting duality to IIB where the same configuration is related to a 3d $\mathcal N=2$ theory with impurities (a close cousin of the theories studied in \cite{Cherkis:2009hpw,Cherkis:2011ee}), which reminds also of the discussions in\cite{Dedushenko:2017tdw,Cecotti:2019hyk}.\footnote{\ This last connection is very interesting: the theory with impurities considered in these references is associated to an M5 wrapping a co-associative, precisely of the type that would contribute to our parition function in presence of compact 4-cycles. We believe this perspective could be the key to model certain associative transitions, which are the $G_2$ manifold uplift of the conifold transitions in IIA \cite{Kachru:2001je,Atiyah:2001qf}.}
 
\section{$\mathcal N=1$ SCFTs, $G_2$ manifolds, and dimers}\label{sec:infinity}

In this section we give a simple argument for the existence of infinitely many examples of $\TX$ that have a $U(1)_R$ isometry.

\subsection{Lightning review of dimers and mirrors}
It is well known that probing singularities at the tip of Calabi-Yau cones $\mathbf{X}$ with a stack of $N$ D3 branes in type IIB one obtains 4d $\mathcal N=1$ SCFTs, which we label $\mathcal T_{\mathbf{X},N}$.\footnote{\ There is a multitude of references about this subject and therefore it is hard to properly cite them, some references there were key for the preparation of this section are \cite{Ooguri:1997ih,Lawrence:1998ja,Klebanov:1998hh,Beasley:1999uz, Hanany:2001py,Feng:2005gw}.} In particular, when $\mathbf{X}$ is a local toric CY threefold, the corresponding theories on the stack of $N$ D3 branes have a 4d $\mathcal N=1$ Lagrangian description that is straightforward to compute, being encoded in a dimer model. In general, many of these theories can be captured by a quiver diagram (also in the non-toric cases) that can be read off directly from the structure of the singular orbifold geometry exploiting standard techniques that we are not going to review here. Some examples of such quivers can be found in Figure \ref{fig:quiversfromIIB}. 

\begin{figure}

\begin{center}
$$
\begin{gathered}
\xymatrix{&\circ \ar[dr]\ar@<-0.4ex>[dr]\ar@<+0.4ex>[dr] & \\ \circ \ar[ur]\ar@<-0.4ex>[ur]\ar@<+0.4ex>[ur] &&\circ \ar[ll]\ar@<-0.4ex>[ll]\ar@<+0.4ex>[ll] }
\end{gathered} \qquad\qquad \begin{gathered}
\xymatrix{\circ \ar@<-0.4ex>[rr]\ar@<+0.4ex>[rr] && \circ \ar@<-0.4ex>[dd]\ar@<+0.4ex>[dd] \\ \\ \circ \ar@<-0.4ex>[uu]\ar@<+0.4ex>[uu] &&\circ \ar@<-0.4ex>[ll]\ar@<+0.4ex>[ll] }
\end{gathered} \qquad\qquad \begin{gathered}\xymatrix{}\end{gathered}
$$
\end{center}
\caption{Some well-known examples of quivers from dimers.}\label{fig:quiversfromIIB}
\end{figure}
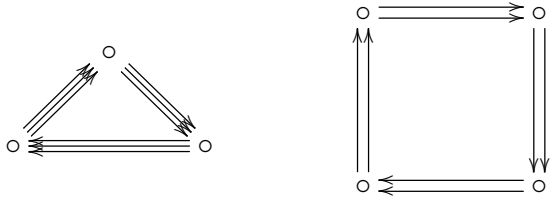

\medskip

A well-known feature of these geometries is that the resulting theories can be represented in Type IIA by exploiting mirror symmetry. By mirror symmetry, the IIB theory on the toric threefold $\mathbf{X}$ is equivalent to IIA on the mirror 3-CY $\mathbf{X}^\vee$. The D3 brane stack in IIB is mapped to a configuration of wrapped $D6$ branes that have support on special Lagrangian 3-cycles in the mirror geometry $\mathbf X^\vee$, which can be understood, e.g. in terms of the SYZ fibration \cite{Strominger:1996it}. We denote a basis of such cycles $\mathcal C_i \in H_3(\mathbf{X}^\vee,\mathbb Z)$. For the mirrors of the toric cases we are interested in, the topology of these 3-cycles is either an $S^3$ or an $S^1 \times S^2$.

\medskip

On the IIB side, the D3 brane fractions at the singularity $\mathbf{X}$ in a number $b_3$ of fractional D3 branes \cite{Douglas:1996sw}. These are in one-to-one correspondence with the cycles $\mathcal C_i$ in the mirror dual. The original D3 brane is a bound state of fractional 3-branes, consisting of $N_i$ copies of the $i$-th 3-brane. This is mapped in the mirror to $N_i$ D6 branes wrapping the 3-cycle $\mathcal C_i$. These D6 branes give rise to gauge groups $SU(N_i)$ on the 4d theories:\footnote{\ The corresponding $U(1)$ are expected to decouple in the IR.} if $\mathcal C_i$ is an $S^3$ we obtain an $\mathcal N=1$ vectormultiplet, if instead $\mathcal C_i \simeq S^1 \times S^2$ the resulting degrees of freedom are those of an $\mathcal N=2$ vectormultiplet. The intersection pairing 
\be\label{eq:adjacia}
B_{ij} = \mathcal C_i \cdot \mathcal C_j
\ee
in the mirror geometry corresponds to an adjacency matrix of the quiver theory describing the D3 brane worldvolume. In particular, if a pair of wrapped cycles $\mathcal C_i, \mathcal C_j$ are intersecting, then the corresponding wrapped D6 branes are intersecting, in which case we obtain extra massless matter from open strings ending on them. Generically, we will have configurations in which the 3-cycles are $S^3$'s that intersect over a collection of $|B_{ij}|$ points. In correspondence to each point with $B_{ij} > 0$ we obtain an $\mathcal N=1$ chiral multiplet in a bifundamental $(\overline{\mathbf{N}}_i,\mathbf{N}_j)$. In cases when the 3-cycles overlap along an $S^1$ the resulting contribution is an $\mathcal N=2$ hypermultiplet, which however gives $B_{ij} = 0$. For most of the toric models, the topology of the $\mathcal C_i$ is always that of $S^3$ and one has an $\mathcal N=1$ theory. In particular, in this case, the $N_i > 0$ must satisfy 
\be\label{eq:noano}
B_{ij}N_j = 0,
\ee
which guarantees anomaly cancellation on the 4d $\mathcal N=1$ side. Whenever the quiver has loops, the resulting 4d $\mathcal N=1$ theory can also have an interesting tree-level superpotential, which, in favourable circumstances, is also determined using geometry. For the toric cases $N_i = N$ for all $i$.

\subsection{From IIA to M-theory}

The $G_2$ geometries that we will consider are captured by the IIA geometry of $\mathbf{X}^\vee$ by duality.\footnote{\ It would be interesting to investigate generalising slightly the construction of Foscolo, Haskins, and N{\"o}rdstrom \cite{Foscolo:2017vzf} to include orbifold singularities. We believe many of the geometries in this section are orbifolds of the FHN metrics.}  One has to uplift the configuration of intersecting D6 branes over the collection of special Lagrangian 3-cycles $\mathcal C_i$ to a hyperK\"ahler fibration on a collection of associative cycles $\sigma_{3,i} \simeq \mathcal C_i$, which gives rise to a $G_2$ manifold that is a $U(1)$ fibration over $\mathbf{X}^\vee$. The fibration is such that the $U(1)$ bundle degenerates along the cycles $\sigma_{3,i}$ that correspond to the special lagrangian 3-spheres $\mathcal C_i$ that are wrapped by $N_i$ $D6$ branes, giving rise to a local $\mathbb C^2/\mathbb Z_{N_i}$ singularities with locus $\sigma_{3,i}$. The singularities enhance to $\mathbb C^2/\mathbb Z_{N_i+N_j + 1}$ at the intersection where $N_i$ D6s meet with $N_j$ D6s, similar to what happens in other geometric engineering settings \cite{Katz:1996xe}. In this context, it is natural to conjecture that the condition for the absence of anomalies \eqref{eq:noano} is equivalent to the condition that the corresponding $\varphi_3$ form is not only closed but also co-closed. For toric examples, we have that $N_i = N$, where $N$ is the number of D3-branes probing $\mathbf{X}$. Therefore, we label this class of $G_2$ manifolds $\mathcal X_{\mathbf{X}^\vee, N}$. From the perspective of geometric engineering, we have that
\be
\mathcal T_{\mathbf{X},N} = \mathcal T_{\mathcal X_{\mathbf{X}^\vee,N}}.
\ee

\medskip

In the toric examples, the mirror geometries are known explicitly. One starts from the toric diagram of $\mathbf{X}$ and obtains the mirror geometry
\be
\mathbf{X}^\vee = \{uv = W(X,Y)\} \subset \mathbf C^2 \times (\mathbf C^*)^2,
\ee
where $W(X,Y)$ is the Newton polynomial obtained from the toric diagram by assigning to the internal dots with coordinates $(n,m)$ the monomial $X^nY^m$, and $X$ and $Y$ are $\mathbb C^*$ variables \cite{Hori:2000kt}. We refer our readers to the beautiful detailed discussion in \cite{Feng:2005gw} for more aspects of these geometries. 

\medskip

An application of the fact that these models have an explicit description in terms of a toric quiver, is that it is possible to use these gauge theories to argue for the lack of contributions to the superpotential from euclidean M2 branes \cite{Harvey:1999as,Pantev:2009de,Hubner:2020yde} wrapping the associative 3-cycles $\sigma_{3,i}$. This follows because for all toric dimer models the quiver associated to the corresponding geometry is necessarily such that $N_f = \ell N_c$ where $\ell \geq 2$. Since each gauge node can be treated independently from the perspective of the generation of a non-perturbative superpotential in gauge theory, this is precisely the condition that there are as many incoming as there are outgoing arrows from each node, which is a condition that every toric quivers satisfy. Then there is no non-perturbative superpotential generated \cite{Intriligator:1994jr}: in all these cases the instanton generates no superpotential, but rather higher F-terms \cite{Beasley:2004ys}.\footnote{\ We thank I\~naki Garc{\'i}a Etxebarria for an important clarifying discussion about this point.}   The higher F-terms involve lots of fermion insertions and therefore most likely will be irrelevant deformations from the perspective of the SCFT. This guarantees that these geometries are not receiving non-perturbative superpotential contributions from the associative 3-cycles, and therefore evade the main issue in geometric engineering of 4d $\mathcal N=1$ SCFTs, namely that non-perturbative contributions to the superpotential could spoil geometric intuition giving rise to a quantum moduli space that differs from the classical geometrical one --- see e.g. \cite{Apruzzi:2018oge} for a recent discussion of this effect in the context of F-theory geometric engineering of 4d $\mathcal N=1$ SCFTs.

\medskip

This concludes our brief overview of these geometries. We stress that there are many applications of this geometric setting, which uplift to $G_2$, for instance, the geometrization of dualities by Ooguri and Vafa \cite{Ooguri:1997ih}. We will return to the study of more details of these geometries in future papers in this series.

\section{Towards the algebra of $G_2$ instantons}\label{sec:example}

\subsection{Evidence from 4d/7d correspondence}
In the previous section we have given an argument to provide infinitely many examples of local $G_2$ geometries that give rise to systems with an unbroken $U(1)_R$ symmetry, and therefore can be exploited to construct 4d holomorphically twisted $\TNk$ partition functions.\footnote{\ It would be interesting to explicitly construct the holomorphic partition function of these models on multicentered Taub-NUTs.} We can use this fact to our advantage while proceeding in our investigation of the structure governing the BPS $G_2$ instantons on the stringy side for the corresponding $G_2$ manifolds $\mathcal X_{\mathbf{X}^\vee,N}$. 

\medskip

The main feature of all the local models of $G_2$ geometries we introduced in the previous section is that those are dual to collections of $N$ D6 branes wrapping compact special Lagrangians in the mirrors of toric geometries $\mathcal X_{\mathbf{X}^\vee}$. We are interested in a configurations of Euclidean D2 branes, wrapping the same special Lagrangians. From geometry it is clear that the most interesting contributions will arise from intersecting loci (other loci would be closer to ordinary instantons). Our task therefore is the following: (1) exploiting branes, identify the contribution of intersecting loci to our putative BPS $G_2$ instanton partition function; (2) establish gluing rules corresponding to the quiver diagram associated to the corresponding geometry. This is in a sense the analogue of the topological vertex \cite{Aganagic:2003db,Iqbal:2007ii} for this special class of $G_2$ manifolds.

\medskip

This task is seemingly impossible, however some recent progresses in the context of twisted M-theory can be exploited to our advantage. The main point is that the twisted M-theory background by Costello \cite{Costello:2016nkh} provides a local model of the form\footnote{\ We refer our readers to appendix \ref{app:costellobackground} for a quick review.}
\be\label{eq:costellone}
\TNk \times \mathbf{R}^3 \times \mathbf{TN}_{N_i} 
\ee
where the twist is holomorphic on $\TNk$, and topological on $\mathbf{R}^3 \times \mathbf{TN}_{N_i}$. Such a geometry can be identified with a local patch in the structure of any of the geometries discussed in section \ref{sec:infinity}: it is enough to identify the $\mathbf{R}^3$ in equation \eqref{eq:costellone} with a local neighborhood of one of the associatives 3-cycles $\sigma_{3,i}$. The fact that the twist is holomorphic on $\TNk$ guarantees that this twisted M-theory construction contributes a subalgebra of the cohomological field theory associated to the holomorphic twist of one of the 4d $\mathcal N=1$ models $\mathcal T_{\mathcal X_{\mathbf{X}^\vee,N}}$. More precisely, the contribution from a subsector like \eqref{eq:costellone} is encoded in an algebra that we schematically denote $\mathcal A^k_{N_i}$. It is natural to conjecture that 
\be\label{eq:ninja}
\mathcal A^k_{N_i} \simeq U_{\epsilon_1}(\mathcal O_{\epsilon_2} (\TNk) \otimes \mathfrak{gl}_{N_i}) 
\ee
which is the quantum deformation with parameter $\epsilon_1$ of the universal enveloping algebra of the algebra of deformed holomorphic functions on $\TNk$ tensored by $\mathfrak{gl}_{N_i}$.

\medskip

For the class of examples in this note, whenever we have a pair of intersecting D6 branes in the background of $\TNk$, we expect to obtain a more general algebra that has two such factors
\be\label{eq:bifalg}
\mathcal A_{N_i}^{k} \otimes_{B_{ij}} \mathcal A_{N_j}^{k},
\ee
where the notation $\otimes_{B_{ij}}$ indicates that there are some further relations which arise from the number of points of intersection among the associatives, and from gauging, and hence the corresponding system is not just a tensor product (although being close to one).

\medskip



Based on the structure of the geometries $\mathcal X_{\mathbf{X}^\vee,N}$ it is natural to expect that the factors $\mathcal A_{N_i}^{k} \otimes_{B_{ij}} \mathcal A_{N_j}^{k} $
fully determine the algebra of the associated quiver theory. Therefore, by 4d/7d correspondence we claim that for all these examples, there is an algebraic structure underlying the $G_2$ instanton partition function, which we schematically denote $\mathcal A_k(\mathcal X_{\mathbf{X^\vee},N})$. This is the algebra of $G_2$ instantons, meaning that the corresponding 7d $G_2$ instanton partition function is organized according to characters of $\mathcal A_k(\mathcal X_{\mathbf{X^\vee},N})$. The task of finding a precise definition of such an algebra is extremely hard, and in this section we build some evidence towards its existence exploiting string dualities, which allows to capture some of the features of its building blocks. From the perspective of the 4d $\mathcal N=1$ theory, we expect the $\mathcal A_k(\mathcal X_{\mathbf{X^\vee},N})$ to be computable in terms of the structure of the complexes associated to the differential arising from the holomorphic twist on $\TNk$. We will devote to this computation another paper in this series.

\medskip

Exhibiting the details of the gluing rule goes beyond the scope of this short note. In what follows we proceed with a consistency check of \eqref{eq:bifalg}: we consider the $k=1$ case, and we prove that indeed equation \eqref{eq:ninja} is verified, by considering a local model for the D6 brane intersection.

\subsection{Building blocks from intersecting $D6$ branes}\label{sec:G2instantonsfromIIA}

As a local model for the intersection between two D6 branes stacks, we can exploit a famous class of asymptotcally conical $G_2$ spaces, which are metric cones on complex weighted projective spaces $\mathbf{WCP}^3_{N,N,K,K}$
\be
\mathcal X_{N,K} \equiv \mathcal C(\mathbf{WCP}^3_{N,N,K,K}),
\ee
here the case $N=K=1$ coincides with the Bryant-Salamon $G_2$ cone over the three-dimensional complex projective plane $\mathbf{CP}^3$ \cite{BryantSalamon}, while the other cases are natural generalization of the Bryant-Salamon construction \cite{Atiyah:2001qf,Acharya:2001gy}. The physics of these models is understood in terms of a single chiral $U(N)\times U(K)$ bifundamental multiplet \cite{Atiyah:2001qf,Acharya:2001gy}:
\be\label{eq:4dthy}
\mathcal T_{\mathcal X_{N,K}} = \left\{\textit{4d } \mathcal N=1 \text{ chiral } \atop U(N)\times U(K) \text{ bifundamental}\right\}\,.
\ee

It is well-known that the $G_2$ manifolds $\mathcal X_{N,K}$ are dual to a IIA configuration of $N$ D6 branes intersecting with $K$ D6 branes at angles. We choose a frame of reference such that the stack of $N$ D6 branes is located along the directions $0,1,2,3,4,5,6$ and the stack of $K$ D6 branes is along the directions $0,1,2,3$ and it wraps a plane $\mathbf{L}_{\boldsymbol\theta} \simeq \mathbf{R}^3$ which extends in the directions $4,5,6,7,8,9$, such that each of its components form an angle $\theta_i$ with the directions $4,5,6$ respectively for $i=1,2,3$.\footnote{In other words,
\be
\mathbf{L}_{\boldsymbol \theta} = \begin{cases}x_4 = \tan \theta_1 x_5 \\ x_6 = \tan \theta_2 x_7 \\  x_8 = \tan \theta_3 x_9\end{cases} \qquad \boldsymbol\theta = (\theta_1,\theta_2,\theta_3)
\ee}
The M-theory background $\TNk \times \mathcal X_{N,K}$ is therefore dual to the following type $II_A$ background
\be\label{eq:IIAbraneconf}
\begin{tabular}{c|cccccccccc}
&0&1&2&3&4&5&6&7&8&9\\
\hline
$N$ D6$_1$ & $\bullet$ & $\bullet$ & $\bullet$ & $\bullet$ & $\bullet$ & $\bullet$ & $\bullet$  &  &  & \\
$M_1$ D2$_1$ &  &  &  &  & $\bullet$ & $\bullet$ & $\bullet$ &  &  & \\
$K$ D6$_2$ & $\bullet$ & $\bullet$ & $\bullet$ & $\bullet$  &  &  & $\mathbf{L}_{\boldsymbol\theta}$ & & & \\
$M_2$ D2$_2$ &  &  &  &  &  &  & $\mathbf{L}_{\boldsymbol\theta}$ & & &\\
$\mathbf{TN}_L$ & & & & & $\bullet$ & $\bullet$ & $\bullet$ & $\bullet$ & $\bullet$ & $\bullet$ \\
\end{tabular}
\ee
 where we are marking the directions that are filled  by the corresponding object. In the Type IIA frame the M2 branes wrapping the corresponding associative cycles are dual in this context to two stacks of D2 branes, one is parallel to D6$_1$ and is wrapping the 456 directions, the other is parallel to D6$_2$ and it is wrapping $\mathbf{L}_{\boldsymbol\theta}$.
 
 \medskip

From the 4d/7d correspondence we expect that the algebra $\mathcal A_{N}^L \otimes_{B_{12} = 1} \mathcal A_{K}^L$ organises the generating function of some suitably regularised volumes of moduli spaces encoded in a quiver matrix model supported at the common intersection of these two stacks of D2 branes. These moduli spaces control the contributions of the M2 branes to the 7d partition function on the $G_2$ manifold.

\medskip

Here, we are interested in probing the conjecture with an explicit computation. therefore, we consider the case $L=1$ and we send the Taub-NUT radius to infinity, thus focusing on the topological M-theory limit of our parition function. The resulting algebra is the $L=0$ version of the $G_2$ instanton algebra. In the absence of the Taub-NUT background it is relatively straightforward to derive the quiver which captures the D2 brane boundstates from the building blocks we described above, namely we will have two distinct ADHM-like quivers, each corresponding to one of the two factors of the algebra $\mathcal A_{N}^0$ and $\mathcal A_{K}^0$, arising from the D2-D6 systems. On top of that from the intersection of the two D2 branes, we expect to obtain a single chiral bifundamental contributing to the quiver. As a result we obtain the quiver in figure \ref{fig:d2d6quiver}.\footnote{\ We review the relevant computation using string quantization in appendix \ref{app:stringquant}, and here we only report on the main result.}
\begin{figure}[H]
\begin{center}
\includegraphics[scale=0.3]{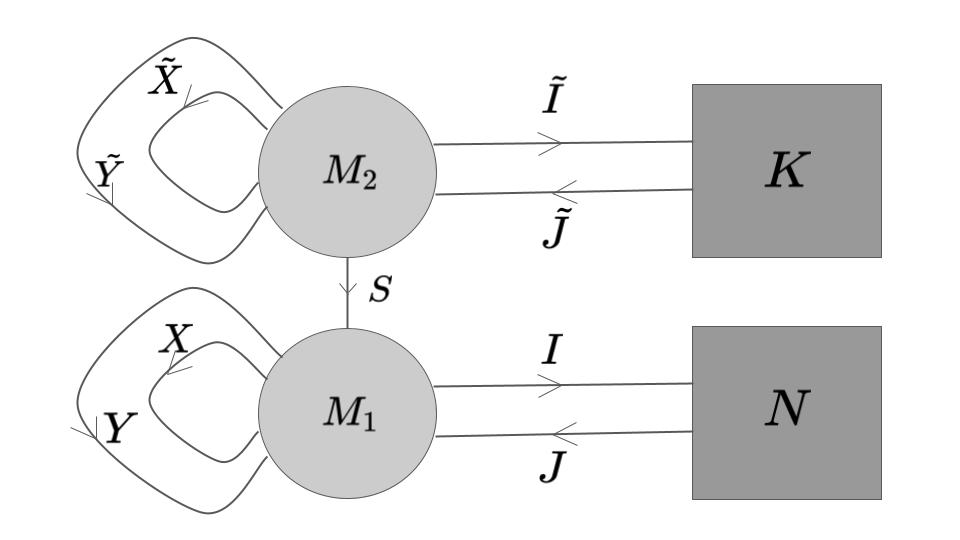}
\caption{The quiver for the matrix model related to the $\mathcal A_{N}^0 \otimes_{B_{12} = 1} \mathcal A_{K}^0$ algebra.}\label{fig:d2d6quiver}
\end{center}
\end{figure}
 Increasing $B_{12}$, the quiver is essentially the same, only we obtain a collection $S^{\ell}$ of chiral bifundamental fields, at the intersections, which would carry an index $\ell = 1,..., B_{12}$. Here, we will focus on the case $\ell=1$, which corresponds to the orbifolds of the Bryant-Salamon $G_2$ cones that we are using as local models.

\medskip

In appendix \ref{app:stringquant}, we discuss a proposal for the corresponding action for the matrix model supported at the intersection of the two D2 branes. From the D-term of the associated quiver description, we see that the two ADHM quivers satisfy their own complex moment map equations:
\ie\label{ADHM eqn}
\left[X,Y\right]+IJ=0,\\
\left[\tilde X,\tilde Y\right]+\tilde I\tilde J=0,
\fe
and we also have a modified real moment map equation, which includes the contribution from the chiral bifundamental field $S$
\ie
\left[X,X^\dagger\right]+\left[Y,Y^\dagger\right]+II^\dagger-JJ^\dagger-S^\dagger S=\xi\cdot\text{I}_{M_1\times M_1},\\
\left[\tilde X,\tilde X^\dagger\right]+\left[\tilde Y,\tilde Y^\dagger\right]+\tilde I\tilde I^\dagger-\tilde J\tilde J^\dagger+SS^\dagger=\xi\cdot\text{I}_{M_2\times M_2},
\fe
where $\xi$ is an FI term. This is the starting point for a more formal analysis for the structures of the corresponding moduli spaces, which we describe in appendix \ref{app:proofs}. 

In Appendix \ref{app:proofs}, we compute the equivariant K-theory index of the corresponding moduli space. Here, we briefly summarize the result, relegating the details to the appendix. The first step is to utilize a well-known fact, namely that the virtual tangent bundle of the moduli space encoded in the quiver diagram is captured by the cohomology of a specific complex:
\ie
\mathrm{End}(\mathcal V_{1})\oplus \mathrm{End}(\mathcal V_{2})\overset{\sigma}{\longrightarrow} \mathrm{End}(\mathcal V_{1})^{\oplus 2}\oplus \mathrm{End}(\mathcal V_{2})^{\oplus 2} \oplus \mathcal V_{1}^{\oplus N} \oplus \mathcal V_{1}^{*\oplus N}\oplus \mathcal V_{2}^{\oplus K} \oplus \mathcal V_{2}^{*\oplus N}\\
\oplus \mathrm{Hom}(\mathcal V_{1},\mathcal V_{2})\oplus \mathrm{Hom}(\mathcal V_{2},\mathcal V_{1})
\overset{d\mu_{\bC}}{\longrightarrow} \mathrm{End}(\mathcal V_{1})\oplus \mathrm{End}(\mathcal V_{2}).
\fe  
Here $\cV_1$, $\cV_2$ are universal bundles of rank $M_1$, $M_2$, $\sigma$ is related to the stability condition or real D-term equation, and $\mu$ is complex moment map equation. Our goal is to compute the Euler characteristics of the virtual tangent bundle of this moduli space\footnote{This Euler character is interpreted physically in terms of a Witten index.}. To do so, it is necessary to find a suitable compactification of the moduli space, which typically is achieved by means of equivariance. We can achieve this by turning on five-torus $\mathbf{T}'_1\times\mathbf{T}'_2\times \mathbf{T}_3$ acting on $\cM^{N,K}(M_1,M_2)$, defined by 
\ie 
(r_1,r_2)\in \mathbf{T}'_{1}&:(X,Y,I,J)\mapsto (r_1X,r_2Y,I,r_1r_2J),\\
(s_1,s_2)\in\mathbf{T}'_{2}&:(\tilde X,\tilde Y,\tilde I,\tilde J)\mapsto (s_1\tilde X,s_2\tilde Y,\tilde I,s_1s_2\tilde J),\\
c\in\mathbf{T}_{3}&:S\mapsto cS.
\fe
We show in the appendix that the torus fixed points of the moduli space are in one-to-one correspondence with $M_1+2$ copies of Young diagrams, see Proposition \ref{counting fixed pts}. Using this fact, we are able to compute the equivariant K-theoretic index of the moduli space of the $G_2$ instantons in \appref{sec:5.5}. Especially, the equivariant K-theory index nicely factorizes into two pieces in the limit $r_1r_2\ria1$, $s_1s_2\ria1$ for the equivariant parameters of the torus action. We show this in Proposition \ref{prop: factorization}. We conjecture that each piece is captured by the Witten index that we will discuss in \secref{sec:4.4}. Finally, in \appref{app:c.6} we explicitly show that in a certain limit of the equivairant parameters the equivariant K-theory index admits a structure of Verma modules of $\mathcal A_{1}^{1} \otimes_{1} \mathcal A_{1}^{1}$. This series of propositions is a strong evidence toward the existence of an algebra of $G_2$ instantons.

\medskip

In sum, we claim that the contributions from the equivariant volumes of the $G_2$ instantons to the 7d partition function are captured by the equivariant K-theory index described above, which we denote $\mathcal I_{M_1,M_2}^1(N,K)$. This gives rise to the following explicit contribution to the partition function
\be
\widehat{Z}^{\,7D}_{U(1)}(\mathcal X_{1,1}) = \sum_{M_1,M_2} Q_1^{M_1 1}Q_2^{M_2 1} \, \mathcal I_{M_1,M_2}^1(1,1)\,.
\ee
The main outcome of such analysis is that indeed, the resulting partition function is organised as a character of $\mathcal A^1_{1}\otimes_1 \mathcal A^1_1$, which is a first consistency check for the ideas discussed here. 
\medskip

\subsection{$G_2$ instantons from an SQM Witten index}\label{sec:4.4}
In this section, we present an alternative way to compute the $G_2$ instanton partition function as a Witten index \cite{Witten:1982df}.  We can achieve this by performing T-duality along two directions on the $D2_1/D6_1/D2_2/D6_2$ brane system. The following table summarizes the resulting D-brane configuration $D0_1/D4_1/D4_2/D8_2$. 
\begin{table}[H]\centering
\begin{tabular}{|c|cccc|cccccc|}
\hline
   &  0 &  1 &  2 &  3 &  4 &  5 &  6 &  7 &  8 &  9 \\
\hline\hline
Background & & & & &A &B &B &A & B & B\\
\hline
$(D2_1\ria)D0$ & & & & &$+$& & && &\\
\hline
$(D6_1\ria)D4_1$ &$+$ &$+$&$+$&$+$&$+$&&& &&\\
\hline
$(D2_2\ria)D4_2$ & & & & & $-$ &$+$ &$+$ &$|$&$+$ &$+$\\
\hline
$(D6_2\ria)D8$ &$+$ &$+$&$+$&$+$&$-$& $+$& $+$& $|$&$+$&$+$\\
\hline
\end{tabular}
\caption{Brane configuration after the T-duality. A and B denote angles and B-field. For instance, D4${}_2$ and D8 branes extend over an oblique line in 47 plane.}
\label{table:D2, D6-brane}
\end{table}
\noindent Note that we T-dualize two of the directions previously occupied by a stack of $D6_1(D2_1)$ branes, which are forming angles with a stack of $D6_2(D2_2)$ branes. As we explained in appendix \ref{app:stringquant}, this T-duality converts D-branes at angle configuration to B-field backgrounds along 5689 directions. It does not modify the result of the string quantization, since the open string boundary condition do not change under the T-duality. In other words, we may simply promote the matrix model field content we discussed in the previous section to a 1d $\cN=4$ quiver supersymmetric quantum mechanics.

\subsubsection{Supersymmetric quantum mechanics}\label{sec:6.1}

Let us first collect the quantum mechanical fields that are related to D0 and D4${}_1$ branes in table \ref{table:D0-D4-D8}.  
\begin{table}[H]
\centering
\begin{tabular}{|c|c|c|c|c|}
\hline
strings &${\cal N}=4$ multiplets & fields & {\footnotesize SU(2)$_-\times$SU(2)$_+\times$SU(2)$^R_-\times$SU(2)$^R_+$} & {\footnotesize{$U(M_1)\times U(N)$}}
\\ \hline
\multirow{8}{*}{D0-D0}& \multirow{3}{*}{vector} & gauge field &$({\bf 1},{\bf 1},{\bf 1},{\bf 1})$ & \multirow{8}{*}{({\bf{Adj,1}})}
\\ \cline{3-4}
& & scalar & $({\bf 1},{\bf 1},{\bf 1},{\bf 1})$ & 
\\ \cline{3-4}
&  & fermions &$({\bf 1},{\bf 2},{\bf 1},{\bf 2})$ & 
\\ \cline{2-4}
&Fermi & fermions & $({\bf 2},{\bf 1},{\bf 2},{\bf 1})$ &
\\ \cline{2-4}
&\multirow{2}{*}{twisted hyper}  & scalars & $({\bf 1},{\bf 1},{\bf 2},{\bf 2})$ &
\\ \cline{3-4}
&& fermions & $({\bf 1},{\bf 2},{\bf 2},{\bf 1})$ &
\\ \cline{2-4}
&\multirow{2}{*}{hyper} & scalars & $({\bf 2},{\bf 2},{\bf 1},{\bf 1})$ &
\\ \cline{3-4}
&& fermions & $({\bf 2},{\bf 1},{\bf 1},{\bf 2})$ &
\\ \hline
\multirow{3}{*}{D0-D4${}_1$}& \multirow{2}{*}{hyper} & scalars & $({\bf 1},{\bf 2},{\bf 1},{\bf 1})$ & \multirow{3}{*}{({\bf{M${}_1$,N}})}
\\ \cline{3-4}
& & fermions & $({\bf 1},{\bf 1},{\bf 1},{\bf 2})$ &
\\ \cline{2-4}
& Fermi & fermions & $({\bf 1},{\bf 1},{\bf 2},{\bf 1})$ &
\\ \hline
\end{tabular}
\caption{The field content of the D0-D4 quantum mechanics. Note that 1d twisted multiplets does not have a matrix model counterpart. Indeed, the field decouples from the computation of the index.}
\label{table:D0-D4-D8}
\end{table}
\noindent They are the standard ADHM mechanical fields \cite{Witten:1994tz,Douglas:1995bn,Douglas:1996uz,Douglas:1996sw} (see also \cite{Hwang:2014uwa,Hori:2014tda}). In the table, we organize the fields by their representation under $SO(4)=SU(2)_-\times SU(2)_+$ and $SO(4)_R=SU(2)^R_-\times SU(2)^R_+$, where $SO(4)$ acts on 0123 directions and $SO(4)_R$ acts on 5689 directions.

Next, the fields that are not related to the usual ADHM quiver are from D0-D4${}_2$ strings and D0-D8 strings, which are T-dual to D2${}_1$-D2${}_2$ strings and D2${}_1$-D6${}_2$ strings. However, we have seen in appendix \ref{app:stringquant} that the D2${}_1$-D6${}_2$(D0-D8) are massive and decouple in the IR. Hence, we only need to discuss the D0-D4${}_2$ strings. Their T-dual is summarized in Table \ref{table:D0-D42}.
\begin{table}[H]
\centering
\begin{tabular}{|c|c|c|c|c|}\hline
strings &${\cal N}=4$ multiplets & fields & {\footnotesize SU(2)$_-\times$SU(2)$_+\times$SU(2)$^R_-\times$SU(2)$^R_+$}  & {\footnotesize U(M${}_1)\times$U(M${}_2$)}
\\ \hline
\multirow{2}{*}{D0-D4${}_2$}& \multirow{2}{*}{twisted hyper} & scalar & $({\bf 1},{\bf 1},{\bf 1},{\bf 2})$ &\multirow{2}{*}{(\bf M${}_1$,M${}_2$)}
\\ \cline{3-4}
& & fermions & $({\bf 1},{\bf 2},{\bf 1},{\bf 1})$&
\\ \cline{2-4}
& Fermi & fermions & $({\bf 1},{\bf 1},{\bf 2},{\bf 1})$ &
\\ 
\hline
\end{tabular}
\caption{The quantum mechanical fields from the D0-D4${}_2$ strings. These will introduce local operators that we can integrate out to form 1-loop determinants-- see \eqref{d0d42}.}
\label{table:D0-D42}
\end{table}

\begin{figure}[H]
\centering
\includegraphics[scale = 0.3]{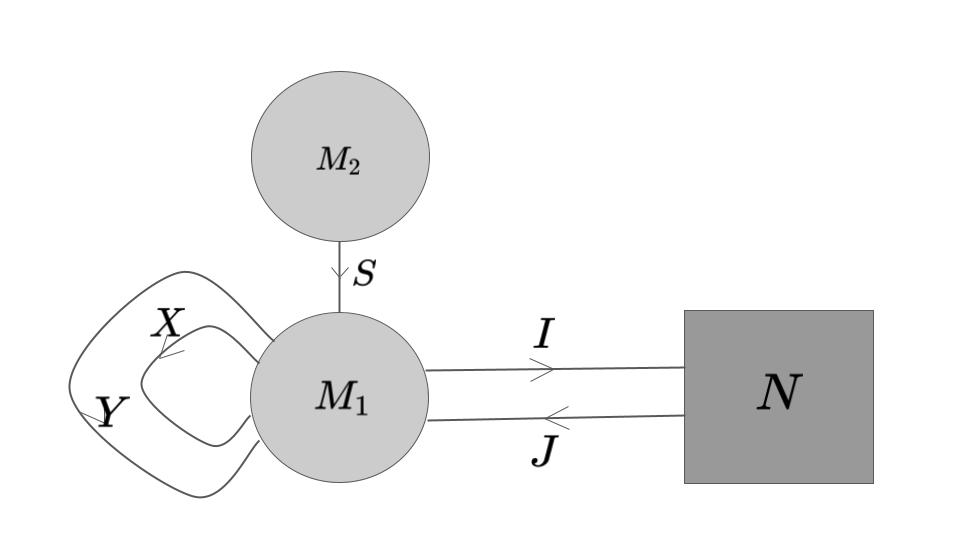}
  \vspace{-5pt}
\caption{A subquiver of the whole quiver that is captured by the Witten index.}
\label{WittenIndexQuiver}
 \centering
\end{figure}

The quantum mechanical fields collected in Table \ref{table:D0-D4-D8} and Table \ref{table:D0-D42} are fields that appear in a subquiver of the entire quiver for the system: compare figure \ref{WittenIndexQuiver} with figure \ref{fig:d2d6quiver}. This is the same quiver that was needed for our analysis in appendix \ref{app:proofs} --- see figure \figref{quiver for fibers} there. We showed in Lemma 3 that $G_2$ instanton moduli space $\cM^{N,K}(M_1,M_2)$ is an entire space of a trivial fibration over a single ADHM moduli space $\cM^N(M_1)$ with the fiber being the moduli space described by \figref{WittenIndexQuiver}. Here we will mimic that construction from the perspective of a Witten index computation.

\subsubsection{Witten index of SQM}\label{sec:6.2}
Here we are interested in computing the following Witten index for our quiver in figure \ref{WittenIndexQuiver}
\ie
\cI^{M_1,N}=Tr_{\mathcal{H}_{\text{QM}}}\left[(-1)^Fe^{\beta\{Q,Q^\dagger\}}t^{2(J_++J_R)}u^{2J_-}v^{2J'_R}w_i^{2\Pi_i}x_l^{\Pi'_l}\right]. 
\fe
Here $F$, $J_\pm$, $J_R$, $J'_R$, $\Pi_i$, and $\Pi'_l$ are the Fermion number and charges under the Cartan generators of $SU(2)_\pm$, $SU(2)^R_+$, $SU(2)^R_-$, $U(N)$, $U(M_2)$. $\mathcal{H}_{\text{QM}}$ is the Hilbert space of the quantum mechanics. 

The Witten index can be computed by supersymmetric localization of the quantum mechanical path integral. As it has now become a standard technique, we will simply take formulas from the original references and direct the reader to \cite{Hwang:2014uwa}. The difference between the original set-up  and our set-up is the presence of $D0_1-D4_2$ strings, which are T-dual of $D2_1-D2_2$ strings. Since the other extra fields are all massive, the computation reduces to
\ie\label{theintegral}
\cI^{M_1,N}=\int[d\phi_I]Z^{M_1}_{D0-D0}(\phi_I,\E_+,\E_-,\tilde\E)Z^{M_1}_{D0-D4_1}(\phi_i,\E_+,\E_-,\tilde\E,a_l)Z^{M_1}_{D0-D4_2}(\phi_I,\E_+,\E_-,m_l),
\fe
where $\phi_i$ are Cartan elements of $U(M_1)$, $\E_\pm$, $\tilde\E$ are Cartan elements of $SU(2)_\pm$, $SU(2)^R_-$, and $a_i$, $m_l$ are Cartan elements of $U(N)$ and $U(M_2)$. 

$Z^{M_1}_{D0-D0}Z^{M_1}_{D0-D4_1}$ is given in \cite{Hwang:2014uwa,Kim:2016qqs}:
\ie\label{d0d0d0d4}
Z^{M_1}_{D0-D0}Z^{M_1}_{D0-D4_1}=\prod^{M_1}_{I,J=1}\frac{\sinh'\frac{\phi_I-\phi_J}{2}\sinh\frac{\phi_I-\phi_J+2\E_+}{2}}{\sinh\frac{\phi_I-\phi_J+2\E_1}{2}\sinh\frac{\phi_I-\phi_J+2\E_2}{2}}\prod^{M_1}_{I,J=1}\frac{\sinh\frac{\phi_{IJ}\pm \tilde\E-\E_-}{2}}{\sinh\frac{\phi_{IJ}\pm \tilde\E-\E_+}{2}}\prod^{M_1}_{I=1}\prod^N_{i=1}\frac{\sinh\frac{\phi_I-a_i\pm \tilde\E}{2}}{2\sinh\frac{\phi_I-a_i\pm\E_+}{2}},
\fe
where the prime on sinh indicates that sinh(x) is omitted when x = 0. 

We can read $Z^{M_1}_{D0-D4_2}$ from \cite[(3.6)]{Kim:2016qqs}. The difference here is that we do not have massless $D4_1-D4_2$ strings. The chiral multiplet $S$ maps to the twisted hypermultiplet in SQM.\footnote{\ Here by twisted hypermultiplet we mean a multiplet whose scalar transforms non trivially under $SU(2)^R_\pm$ but trivially under $SU(2)_\pm$.} The corresponding 1-loop determinant is 
\ie\label{d0d42}
Z^{M_1}_{D0-D4_2}=\prod^{M_1}_{I=1}\prod^{M_2}_{l=1}\frac{\sinh\frac{\phi_I-m_l\pm\E_-}{2}}{\sinh\frac{\phi_I-m_l\pm\E_+}{2}}.
\fe
After substituting the ingredients \eqref{d0d0d0d4}, \eqref{d0d42} in \eqref{theintegral}, we can use the JK prescription \cite{Jeffrey:aa,Benini:2013xpa}, which guides what poles to choose when one evaluates multidimensional complex integrals. Let us explain the contour of the integral \eqref{theintegral} following \cite{Kim:2016qqs}. We can classify the poles selected by the JK prescription in two types:
\ie\label{poles}
\cP^{1}:~&\phi_I-a_i+\E_+=0,\quad\phi_I-\phi_J\pm\E_-+\E_+=0,\\
\cP^{2}:~&\phi_I-m_l-\E_+=0,\quad\phi_I-\phi_J\pm\tilde\E-\E_+=0.
\fe
When evaluating the rank $M_1$ $\phi_I$ integral, one needs to choose $M_1$ poles out of \eqref{poles} and evaluate the residue integral. In other words, since we have two sets, if we can pick $M$ poles in $\cP^1$, we need to pick the rest $M_1-M$ poles in $\cP^2$. It was shown in \cite{Kim:2016qqs} that each of the first and second set of poles is classified by the N-colored Young diagrams $Y=\{Y_1,\cdots,Y_N\}$ with total size $M$ and $M_2$-colored\footnote{The case for $M_2=1$ was discussed in \cite{Kim:2016qqs}, but the statement can be extended for general $M_2$. For instance, quiver gauge theory with higher rank gauge nodes was discussed in \cite{Haouzi:2019jzk}.} Young diagrams $\tilde Y$ with total size $M_1-M$, respectively. Each of the contributions can be naturally understood as the D0 bound states on the $D4_1$ branes and on the $D4_2$ branes. With this information, we can organize the result of the integral as the following double summation over two sets of Young diagrams:
\ie\label{doublsum}
\cI^{M_1,N}=\sum_{Y}\sum_{\tilde Y}f_{Y,\tilde Y}(\E_+,\E_-,\tilde\E,a_i,m_l)
\fe
It is assumed in the above formula that we take $e^{\tilde\E}\ria0$ limit to decouple the twisted multiplet contribution, as we commented in the caption of Table \ref{table:D0-D4-D8}. This is a valid limit, as discussed in \cite{Kim:2016qqs}.

Compared to \cite{Kim:2016qqs}, we do not have a Fermi multiplet, which would come from $D4_1-D4_2$ strings. The corresponding 1-loop determinant is
\ie\label{fermi1-loop}
Z_{D4_1-D4_2}=\prod_{i=1}^N\prod_{j=1}^{M_2}2\sinh\frac{a_i-m_j}{2}.
\fe
Since $D4_1-D4_2$ strings are not related to D0 branes(i.e. \eqref{fermi1-loop} does not depend on $\phi_I$), this 1-loop determinant does not affect the $\phi_I$ integral. Hence, the structure of our integral \eqref{doublsum} is the same as that of \cite{Kim:2016qqs}.

Computing $\cI^{M_1,N}$ for each $M_1$, we can form a generating series by weighting each $\cI^{M_1,N}$ with the instanton counting parameter $Q_{1}$ with power $M_1$:
\ie\label{genseries1}
1+\sum_{M_1=1}Q^{M_1}_{1}\cI^{M_1,N}.
\fe
This gives a part of the $G_2$ instanton partition function that is associated to the fiber of the bundle 
\ie
\pi:\cM^{N,K}(M_1,M_2)\ria\cM^N(M_1).
\fe
We naturally conjecture that \eqref{genseries1} is equivalent to the half \eqref{halffactor} of the factorization formula in Proposition \ref{prop: factorization}, since both of them are labeled by the same information: two sets of Young diagrams $Y$ and $\tilde Y$.

Similarly, one can do the same computation for the second saddle of the $G_2$ instanton partition function that is represented by the following subquiver:
\begin{figure}[H]
\centering
\includegraphics[width=10cm]{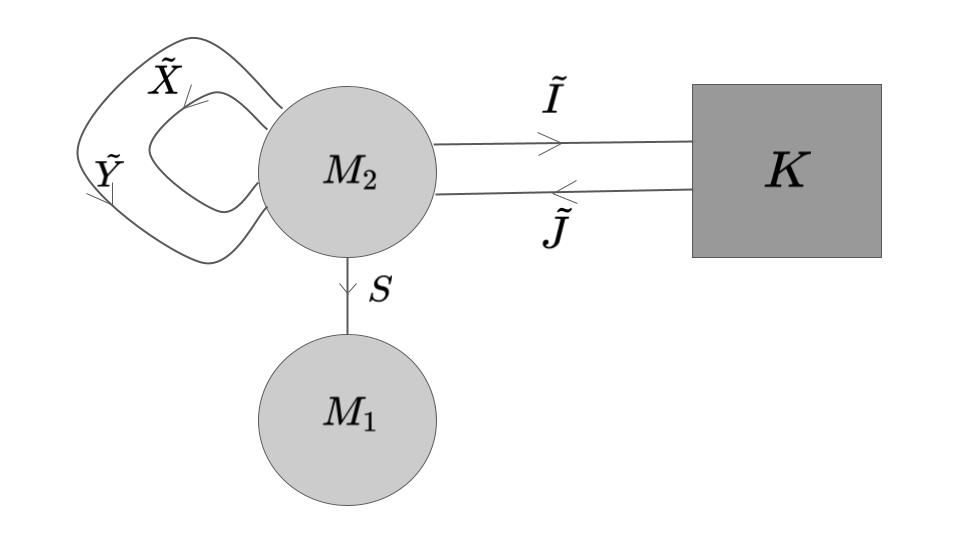}
  \vspace{-5pt}
\caption{}
\label{}
 \centering
\end{figure}
This is done by first choosing a coordinate frame where we find $D2_2$, $D6_2$ branes aligning in the 4,5,6 coordinate axis and on the other hand $D2_1$, $D6_2$ branes forming angles with 4,5,6 axis. Performing the same T-duality, $D2_2$ brane stack into $D0'$ brane stack. The new configuration is as follows
\begin{table}[H]
\centering
\begin{tabular}{|c|cccc|cccccc|}
\hline
   &  0 &  1 &  2 &  3 &  4 &  5 &  6 &  7 &  8 &  9 \\
\hline\hline
Background & & & & &A &B &B &A & B & B\\
\hline
$(D2_1\ria)D4'_2$& & & & & $-$ &$+$ &$+$ &$|$&$+$ &$+$\\ 
\hline
$(D6_1\ria)D8'$&$+$ &$+$&$+$&$+$&$-$& $+$& $+$& $|$&$+$&$+$\\ 
\hline
$(D2_2\ria)D0'$ & & & & &$+$& & && &\\
\hline
$(D6_2\ria)D4'_1$ &$+$ &$+$&$+$&$+$&$+$&&& &&\\
\hline
\end{tabular}
\caption{Brane configuration after the T-duality. A and B denote angles and B-field. For instance, D4'${}_2$ and D8' branes extend over a line in 47 plane.}
\label{table:D2, D6-brane}
\end{table}
Doing the exactly same computation as above, we can form a generating series by weighting each $\cI^{M_2,K}$ with the instanton counting parameter $Q_{2}$.
\ie\label{genseries2}
1+\sum_{M_2=1}Q^{M_2}_{2}\cI^{M_2,K}.
\fe
This gives a part of the $G_2$ instanton partition function that is associated to the fiber of the bundle 
\ie
\pi':\cM^{N,K}(M_1,M_2)\ria\cM^K(M_2).
\fe
Similarly to the other half, we expect that \eqref{genseries2} is equivalent to the other half of the factorization formula in Proposition \ref{prop: factorization}, since both of them are labeled by the same information: two sets of Young diagrams $Y$ and $\tilde Y$.

\subsection{A relation with 3d $\mathcal N=2$ theories with impurities}
We conclude this section with a remark about a relation between our setup and 3d $\mathcal N=4$ systems with supersymmetry breaking impurities. Exploiting the Taub-NUT radius to perform a T-duality with IIB, starting from equation \eqref{eq:IIAbraneconf}, we obtain the following setup:
\be
\begin{tabular}{c|cccccccccc}
&0&1&2&3&4&5&6&7&8&9\\
\hline
$N$ D5$_1$ & & $\bullet$ & $\bullet$ & $\bullet$ & $\bullet$ & $\bullet$ & $\bullet$ &  &  & \\
$M_1$ D3$_1$ & $\bullet$  &  &  &  & $\bullet$ & $\bullet$ & $\bullet$ &  &  & \\
$K$ D5$_2$ &  & $\bullet$ & $\bullet$ & $\bullet$  &  &  & $\mathbf{L}_{\boldsymbol\theta}$   \\
$M_2$ D3$_2$ & $\bullet$  &  &  &  &  &  & $\mathbf{L}_{\boldsymbol\theta}$   \\
$L$ NS5 & & & & & $\bullet$ & $\bullet$ & $\bullet$ & $\bullet$ & $\bullet$ & $\bullet$ \\
\end{tabular}
\ee
From this T-dual frame, it is interesting to notice a connection between our computation and 3d $\mathcal N=4$ theories with supersymmetry breaking impurities, which is somehow reminiscent of the analysis in \cite{Cherkis:2009hpw,Cherkis:2011ee,Dedushenko:2017tdw}. Indeed, the brane setup involving D3$_1$, D5$_1$, and NS5 is the famous Hanany-Witten brane engineering for 3d $\mathcal N=4$ systems \cite{Hanany:1996ie}. On top of that, we have the injection of the D3$_2$ and D5$_2$ stacks, which can be described as impurities localised at points from the perspective of the corresponding 3d $\mathcal N=4$ theories. The partition functions of M2 branes contributing to the 7d instanton partition function should have an interpretation also in terms of this kind of systems. In a sense, these would be captured by 3d $\mathcal N=4$ $S^3$ or $S^1 \times S^2$ parition function with the insertion of impurity operators.



\section*{Acknowledgements}

We thank Christopher Beem, Kevin Costello, I\~naki Garc{\'i}a Etxebarria, Dominic Joyce, Raeez Lorgat, Maxim Zabzine for useful discussion and email correspondences. We are especially grateful to Kevin Costello for asking a fascinating question that led us to initiate this project. The work of MDZ and RM has received funding from the European Research Council (ERC) under the European Union’s Horizon 2020 research and innovation programme (grant agreement No. 851931).  Research of JO was supported by ERC Grant numbers 864828 and 682608. Research at the Perimeter Institute is supported by the Government of Canada through Industry Canada and by the Province of Ontario through the Ministry of Economic Development $\&$ Innovation.

\medskip

\appendix

\section{A brief review of the twisted M-theory background}\label{app:costellobackground}
Twisted M-theory is eleven-dimensional supergravity background that is specified by three data $(g,C,\Psi)$: the metric, M-theory 3-form, and bosonic ghost for a certain component of local supersymmetry. The metric is in general a product metric of a hyper-Kahler 4-manifold $\cM_4^{HK}$ and a $G_2$ 7-manifold $\cM_7^{G_2}$. In the original work \cite{Costello:2016nkh}, the geometry was given by
\ie
\underbrace{\bC_z\times\bC_w}_{\cM_4^{HK}}\times\underbrace{\bR_t\times\bC_1\times\text{TN}_K}_{\cM_7^{G_2}}.
\fe

The generalized Killing spinor $\Psi$ imposes a topological and holomorphic twist on $\cM_4^{HK}$ and $\cM_7^{G_2}$ respectively. More precisely, we turn a nonzero bosonic ghost $\Psi$ \cite{Costello:2016mgj} for the local supersymmetry of the supergravity. We determine $\Psi$ in the same way we get the scalar supercharge in a twisted QFT \cite{Eager:2021ufo}. Start with an 11d Killing spinor $\Psi_{11}$ and decompose it into 7d and 4d components. Each of the component spinors decomposes into representations of $G_2$ and $SU(2)_-\times U(1)_+$, where $U(1)_+$ is the Cartan of $SU(2)_+$.
\ie
\Psi_{11}=\Psi_7\otimes\Psi_4=({\mathbf{1}_{G_2}\oplus7_{G_2})\otimes(1_{-1}\oplus1_{+1}\oplus2_0)}.
\fe
We keep a non zero bosonic ghost for the scalar $\Psi=(\mathbf{1_{G_2}\otimes1_{-1}})$. There are $\Psi_m$ and $\Psi_{\dot\A}$ such that 
\ie
\{\Psi,\Psi_m\}=P_m,\quad\{\Psi,\Psi_{\dot\A}\}=P_{-\dot\A}.
\fe
This implies that the dynamics in  $\cM^{G_2}_7$ direction becomes topological and that in $\cM^{HK}_4$ becomes holomorphic. 

We can deform $\Psi$ into $\Psi_\E$ \cite{Costello:2016nkh} such that
\ie
\Psi_\E^2=\E \vec{\cV},
\fe
where $\vec{\cV}$ is a Killing vector field on $\cM_7^{G_2}$.
With this deformation, we can incorporate an Omega deformation $\Omega_{\E_1}$  on $\bC_1$ and $\Omega_{\E_2}\times\Omega_{\E_3}$ on $TN_K\sim(\bC_2\times\bC_3)/\bZ_K$. $\E_i$ are deformation parameters and from now on we will denote $\Omega_{\E_i}$ deformed $\bC_i$ plane as $\bC_{\E_i}$.

Lastly, the M-theory 3-form is 
\ie
C=\E_2V^d d\bar zd\bar w,
\fe
where $V^d$ is a linear dual 1-form of the vector field $V$, which generates a rotation along the Taub-NUT circle $S^1_{TN}$.

As a result of the Omega deformation, we can expect that a localization happens in the twisted M-theory. We can explicitly see this by going to the type IIA frame, by reducing along $S^1_{TN}$. The Taub-NUT geometry converts into $K$ D6 branes and the closed string states decouple from the Q-closed spectrum as discussed in \cite{Costello:2016nkh}. Moreover, due to $\Omega_{\E_1}$, the 7d SYM living on $K$ D6 branes localizes to the 5d $U(K)$ topological holomorphic Chern-Simons theory \cite{Costello:2018txb}:
\ie
\int_{\bC_z\times\bC_w\times\bR_t}dz\wedge dw\wedge\left(AdA+\frac{2}{3}A\star_{\E_2} A\star_{\E_2} A\right).
\fe
Here, $\star_{\E_2}$ is a wedge product deformed by the non-commutative B-field $B=\E_2 d\bar zd\bar w$, which descends from the M-theory 3-form. This is a typical Moyal product, which is defined between two holomorphic functions as
\ie
f\star_{\E_2}g=fg+\E_2\frac{1}{2}\E_{ij}\frac{\pa}{\pa_{z_i}}f\frac{\pa}{\pa_{z_j}}g+\cdots.
\fe
Another important difference with the usual 5d Chern-Simons action is that the 5d gauge field has only three components:
\ie
A=A_tdt+A_{\bar z}d\bar z+A_{\bar w}d\bar w,
\fe
due to the presence of the holomorphic top-form $dzdw$ in the action.

Due to the non-commutativity, the naive gauge symmetry algebra $\mathfrak{gl}_{K}$ is modified to $\mathfrak{g}=\text{Diff}_{\E_2}\bC_z\otimes\mathfrak{gl}_{K}$. Without the quantum deformation, i.e. $\E_1 = 0$, the symmetry algebra $A_{\text{classical}}=A_{\E_1=0,\E_2}$ of the 5d CS theory is generated
by the Fourier modes of the ghost $c[m, n]$ for the gauge symmetry, together with the BRST differential $\D$.
As a graded associative algebra, $A_{\text{classical}}$ is isomorphic to $C^*(\mathfrak{g})$, which is the Chevalley-Eilenberg algebra of cochains on the Lie algebra $\mathfrak{g}$.

One of the major achievements of \cite{Costello:2017fbo} was to use Kozsul duality $C^*(\mathfrak{g})=U(\mathfrak{g})$ and identify the 5d CS algebra of operators with the protected operator algebra $\cA^{(N)}_K$ of $N$ M2 branes(on $\bR_t\times\bC_{\E_1}$) probing the twisted M-theory background in the large $N$ limit. Moreover, there is a surjective map $s$ \cite{Costello:2017fbo}
\ie
s:\mathrm{U}_{\epsilon_1}(\mathfrak{gl}(K)\otimes \mathrm{Diff}_{\epsilon_2}(\mathbb{C}))\rightarrow \mathcal A^{(N)}_K,
\fe
which is compatible with a sequence of surjective maps $s_i:\mathcal A^{(i+1)}_K\to \mathcal A^{(i)}_K$. Moreover, the intersection of the kernels of $\mathrm{U}_{\epsilon_1}(\mathfrak{gl}(K)\otimes \mathrm{Diff}_{\epsilon_2}(\mathbb{C}))\to \mathcal A^{(N)}_K$ for all $N$ is zero. In this sense, one can define $\mathrm{U}_{\epsilon_1}(\mathfrak{gl}(K)\otimes \mathrm{Diff}_{\epsilon_2}(\mathbb{C}))$ as a {\textit{large N limit}} of $\mathcal A^{(N)}_K$, and denote it by $\mathcal{A}_K$ by dropping the superscript $(\infty)$.

Given the abstract description of $\cA^{(N)}_K$ and its relation with the 5d CS algebra of operators, let us provide the explicit description of $\cA^{(N)}_K$. One convenient UV descriptions of the M2 branes worldvolume theory is 3d $\cN=4$ $U(N)$ gauge theory with 1 adjoint hypermultiplet(with scalars $X,Y$) and $K$ fundamental hypermultiplets(with scalars $I,J$) \cite{Bashkirov:2010kz}. Due to the topological twist applied to $\bR_t\times\bC_{\E_1}\subset\cM^{G_2}_7$, we pass to a particular Q-cohomology that only captures the Higgs branch chiral ring. This consists of gauge-invariant operators made of $X,Y,I,J$, divided by the ideal generated by the F-term relation,
\ie\label{fterm}
X^a_cY^c_b-X^c_bY^a_c+I_bJ^a=\E_2\D^a_b,
\fe
where we only presented gauge indices $a,b,c,d$, suppressing flavor indices.

$\Omega_{\E_1}$ deforms the chiral ring into an algebra $\cA_K$ \cite{Yagi:2014toa,Bullimore:2016nji,Bullimore:2016hdc} by making Poisson brackets into commutators
\ie\label{basiccom}
\left[X^a_b,Y^c_d\right]=\E_1\D^a_d\D^c_b,\quad[J^b,I_a]=\E_1\D^b_a,
\fe 
and the theory localizes on one-dimensional topological quantum mechanics \cite{Mezei:2017kmw,Dedushenko:2016jxl}
\ie
\frac{1}{\E_1}\int_{\bR_t}\text{Tr}[\E_2A_t+XD_tY+JD_tI]dt.
\fe
Note that $\epsilon_1$ is a deformation parameter, and $\epsilon_2$ is an FI parameter of the 3d $\cN=4$ gauge theory. 

The algebra $\cA^{(N)}_K$ is generated by
\ie\label{algebraelements}
t_{m,n}=\frac{1}{\E_1}\text{STr}X^mY^n,
\fe
where $STr[\bullet]$ means to take a trace of a symmetrization of a polynomial $\bullet$. The generating relations of the algebra was proposed in \cite{Gaiotto:2020vqj}, proved in \cite{Oh:2020hph}, and used in \cite{Gaiotto:2020vqj,Hatsuda:2021oxa} to compute a large set of correlation functions in 3d $\cN=4$ ADHM gauge theories.

\section{The D2 matrix model from string quantization}\label{app:stringquant}
In this appendix we review the computation that leads to the equation describing the moduli space of vacua for the matrix model supported at the intersection of the two D2 brane stacks in our local model arising from the $\mathcal X_{N,K}$ geometry. We will use the language of 3d supersymmetry because the multiplets we obtain organise naturally accordingly. However, these are all fields supported at a point (the point of intersection of the two D2 brane stacks).

\subsection{Supersymmetry}\label{sec:supersymmetry}
We can determine the supersymmetry preserved by the intersecting D6 branes \cite{Berkooz:1996km} by solving the following equations of 10d spinor $(\E_L,\E_R)$ equation($\E_L$ and $\E_R$ are 10d Majorana-Weyl spinors with different chiralities).
\ie\label{susyconditiond6}
\E_L&=\Gamma^{0123456}\E_R,\\
\E_L&=\Gamma^{01234'5'6'}\E_R,
\fe
where 456, 4'5'6' denote the directions along two stacks of D6 branes in the 456789 direction and $\Gamma^{i_1\ldots i_n}$ denote the fully anti-symmetrized products of Gamma matrices $\Gamma^{i_1},\ldots,\Gamma^{i_n}$. 

We need to remember that there is a B-field turned on 0123 directions. Let us parametrize it as
\ie
B_{01}=-B_{10}=b_1,\quad B_{23}=-B_{32}=b_2.
\fe
and set $b_a=\tan\pi v_a$. Hence,
\ie\label{bfieldrange}
-\frac{1}{2}<v_a<\frac{1}{2}.
\fe
By 
\cite{Witten:2000mf}, the Gamma matrix equation above changes to
\ie\label{susyconditiond62}
\E_L&=\mathcal{B}_{0123}\Gamma^{0123456}\E_R,\\
\E_L&=\mathcal{B}_{0123}\Gamma^{01234'5'6'}\E_R,
\fe
where
\ie
\mathcal{B}_{0123}=\exp\pi\left((v_1+1/2)\Gamma^0\Gamma^1+(v_2+1/2)\Gamma^2\Gamma^3\right)
\fe
For now, this change will not affect the calculation, since both of \eqref{susyconditiond6} and \eqref{susyconditiond62} reduce into
\ie\label{spinoreqn}
\Gamma^{456}\E_R=\Gamma^{4'5'6'}\E_R,
\fe
where we used $(\Gamma^{ij})^2=1$. Therefore, it only constrains the 6d part $\E_{6d}$ of the 10d spinor $\E_R=\E_{10d}=\E_{4d}\otimes\E_{6d}$ whose eigenvalues under the action of
\ie
\Gamma^{01},\quad\Gamma^{23},\quad\Gamma^{47},\quad\Gamma^{58},\quad\Gamma^{69}.
\fe
is $(\pm,\pm,\pm,\pm,\pm)$. Those are raised and lowered by the action of
\ie
&\Gamma^{0\pm}=\frac{1}{2}(\pm\Gamma^0+\Gamma^1),~\Gamma^{1\pm}=\frac{1}{2}(\Gamma^{2}\pm i\Gamma^{3}),\\
&\Gamma^{2\pm}=\frac{1}{2}(\Gamma^{4}\pm i\Gamma^{7}),~\Gamma^{3\pm}=\frac{1}{2}(\Gamma^{5}\pm i\Gamma^{8}),~\Gamma^{4\pm}=\frac{1}{2}(\Gamma^{6}\pm i\Gamma^{9}).
\fe
Let $\mathbb{R}$ be the $SO(6)$ rotation that takes the first stack of D6 branes to the second stack of D6 branes. 
\ie
\mathbb{R}=\text{diag}(e^{i\theta_1},e^{-i\theta_1},e^{i\theta_2},e^{-i\theta_3},e^{i\theta_3},e^{-i\theta_3}).
\fe
Then we have
\ie
\Gamma^{0123456}=\mathbb{R}\left(\Gamma^{01234'5'6'}\right)\mathbb{R}^{-1}.
\fe
The solution for \eqref{spinoreqn} only exists when $\mathbb{R}$ belongs to $SU(3)$ subgroup of the $SO(6)$ that rotates 456789. The $SU(3)$ condition is
\ie\label{anglesusy}
\theta_1\pm\theta_2\pm\theta_3=0~\text{mod}~2\pi.
\fe
There are 4 spinors $\E_{10d}=\E_{4d}\otimes\E_{6d}$ such that $\E_{4d}=(\pm,\pm)$, $\E_{6d}$ with one choice of signs out of 8 possible choices $(\pm,\pm,\pm)$. In other words, a 4d theory along 0123 directions preserves a chiral $\cN=1$ supersymmetry.

Now, introduce $D2_1$ branes along 456 directions. This gives another spinor equation:
\ie
\E_L=\Gamma^{456}\E_R.
\fe
Coupling with the first line of \eqref{susyconditiond62}, we get
\ie\label{4dn1chiral1}
\mathcal{B}_{0123}\Gamma^{0123456}\E_R=\Gamma^{456}\E_R
\fe
Decomposing $E_R=\E_{4d}\otimes\E_{6d}$, we get
\ie
(\mathcal{B}_{0123}\Gamma^{0123}\E_{4d})\otimes(\Gamma^{456}\E_{6d})=\E_{4d}\otimes(\Gamma^{456}\E_{6d})
\fe
Hence, it reduces to
\ie\label{0123}
\mathcal{B}_{0123}\Gamma^{0123}\E_{4d}=-\E_{4d}
\fe
It has a solution only if 
\ie
\exp\pi(\pm i(v_1+1/2)\pm i(v_2+1/2))=-1.
\fe
In other words, if $v_1$, $v_2$ satisfy
\ie\label{v1v2}
v_1=\pm v_2,
\fe
we have 
\ie
\Gamma^{0123}\E_{4d}=\E_{4d}.
\fe
Note, however, by \cite{Witten:2000mf} only $v_1=-v_2$ is allowed.\footnote{
Our B-field is anti-self-dual, which is consistent with $v_1+v_2=0$, as shown in \cite{Witten:2000mf}.
\ie\nonumber
\star B=\E_2\star(d\bar z_1\wedge d\bar z_2)=\E_2\star\left[(dx_1-idx_3)\wedge(dx_2-idx_4)\right]=-\E_2d\bar z_1\wedge d\bar z_2=-B
\fe
}
In any case, out of 4 spinors $\E_{4d}=(\pm,\pm)$, only 2 spinors survive: $\E_{4d}=(+,+),(-,-)$; the number of the preserved supersymmetry is independent of $B$ as shown in \cite{Seiberg:1999vs}.

Similarly, consider $D2_2$ branes along 4'5'6' directions. This gives
\ie
\E_L=\Gamma^{4'5'6'}\E_R.
\fe
Coupling with the second line of \eqref{susyconditiond6}, we get \eqref{0123}. Therefore, out of 4 spinors $\E_{4d}=(\pm,\pm)$, only 2 spinors survive: $\E_{4d}=(+,+),(-,-)$. We are allowed have both $D2_1$ and $D2_2$ branes and preserve the same amount of supersymmetry, since both lead to the same equation \eqref{0123}.

Therefore, we conclude that D2/D6/D6' configuration preserves 2 supercharges. In D2 brane worldvolume, this is 3d $\cN=1$ supersymmetry. 

Before doing the sting quantization, let us review supermultiplets of 3d $\cN=1$ SUSY \cite{Gates:1983nr}. First, $\cN=1$ vector multiplet $\cV$ consists of a gauge field $A_\mu$ and a Majorana fermion $\ld_\A$
\ie
\cV=(A_\mu,\ld_\A),\quad\text{where }\mu=1,2,3,~\A=+,-.
\fe
Second, $\cN=1$ scalar multiplet $\Phi$ consists of a real scalar $\phi$, a Majorana fermion $\psi_\A$, and an auxiliary field $D$
\ie
\Phi=(\phi,\psi_\A,D).
\fe

\subsection{String quantization}\label{sec:quantization}
We will follow \cite{Nekrasov:2016gud} to account for the B-field background-- recall we have turned B-field on $TN_1$\footnote{Note that the D6-branes at angles configuration is equivalent to D0-D6 brane system with B-field on ND directions \cite{Witten:2000mf}.}. To quantize open strings, we need to mode-expand worldsheet bosons $Z$ and fermions $\Psi$ that respect two types of boundary conditions at two boundaries. In the presence of the B-field, the equation of motion is given by \cite{Abouelsaood:1986gd}
\ie
\pa_{++}Z=e^{-2\pi i\nu}\pa_{--}Z&,\quad\Psi^+=e^{2\pi i\nu}\Psi^-\text{ at }\sigma=0,\\
\pa_{++}Z=e^{-2\pi i\mu}\pa_{--}Z&,\quad\Psi^+=e^{2\pi i\mu}\Psi^-\text{ at }\sigma=\pi,\\
\fe
Solving the equations above, we see the worldsheet boson $Z$ and R-sector fermion $\Psi$ have modes $\bZ+\theta$ with $\theta=\mu-\nu$. The NS-sector fermions have modes $\bZ+\E$ with $\E=\theta+\frac{1}{2}$. We will denote the Neumann, Dirichlet, Twisted(by the B-field effect) boundary conditions as {\textbf{{N}}, {\textbf{{D}}, {\textbf{{T}}.

To arrange the spectra by their energy, we need to find a zero-point energy. The NS sector zero-point energy is
\ie
E_0^{NS}=\frac{1}{8}-\frac{1}{2}||\theta|-\frac{1}{2}|.
\fe
The first excited state has energy $E_1^{NS}=E_0^{NS}+|\E|$ if $-\frac{1}{2}\leq\E\leq\frac{1}{2}$ and $E_1^{NS}=E_0^{NS}+|1-\E|$ if $\frac{1}{2}<\E<\frac{3}{2}$. The R sector zero-point energy $E_0^R=0$. The first excited state energy is $E_1^R=|\theta|$ for $0\leq|\theta|\leq\frac{1}{2}$ and $1-|\theta|$ for $\frac{1}{2}\leq|\theta|\leq1$.

Let us recall the brane configuration from equation \eqref{eq:IIAbraneconf}. 
We can quantize strings that stretch between two stacks of D-branes. Start with $D2_1-D2_1$ strings. Although we have determined the minimal amount of supersymmetry that the brane configuration preserves is 2, we will later see that the genuine 3d $\cN=1$ multiplet is massive and decoupled in the low energy. Therefore, we will supplement the discussion with the 3d $\cN=2$ representation. Moreover, we will sometimes find helpful to reorganize the $\cN=2$ supermultiplets into the 3d $\cN=4$ supermultiplets, even though we do not have such an enhanced supersymmetry. There will be just one genuine 3d $\cN=2$ massless multiplet after all.\\
\newline
{\textbf{D2-D2 string NS sector}}: We know $E_0^{NS}=-\frac{1}{2}$, so to get the massless states, we need to act with NS fermion oscillators $d^{\mu}_1{}^\dagger$. For $\mu=4,5,6$, which are {\textbf{NN}} directions, $d^{\mu}_1{}^\dagger$ gives rise to three states, which consist of the 3d $G$ gauge field $A_\mu$. For other d's, which are {\textbf{TT}} and {\textbf{DD}}, they give rise to 7 scalars $\phi_a$ valued in adj$(\mathfrak{g})$ of G. \\
\newline
{\textbf{D2-D2 string R sector}}: R-sector vacuum energy is always zero. Since we have 10 NN+DD directions, we can use all 10 zero modes to act on the R-sector vacuum state and form 32 dimensional ground state. The GSO projection projects out half of 32 and the remaining 16 fermionic states pair up with the bosons determined from the NS sector to complete the 3d $\cN=1$ supermultiplet.

Therefore, we see
\ie
D2_1-D2_1\text{ strings }\equiv~\cV\text{ and }7~\Phi_a
\fe
One can split 7 $\Phi_a$ into $\{\Phi_7,\Phi_8,\Phi_9\}$ and $\{\Phi_0,\Phi_1,\Phi_2,\Phi_3\}$. In the 3d $\cN=2$ notation, one of the scalars in the first set can be identified with the real scalar $\sigma$ and the other two scalars in the first set can be thought of as the complex scalar $\varphi$.  On the other hand, one can split the 4 scalars in the second set into 2 complex scalars. In the presence of $\cN=2$ SUSY, each of them forms $\cN=2$ chiral multiplets $X$, $Y$.\\
\newline
{\textbf{D2-D6 string NS sector}}: Let us denote two complex directions in 4 ND directions as $a$ and $b$. We first need to derive NS zero point energy. Recall B-field is only turned on the 4 ND directions. Hence, the zero point energy in these directions is
\ie\label{zero1}
\frac{1}{8}-\frac{1}{2}|v_a|+\frac{1}{8}-\frac{1}{2}|v_b|=\frac{1}{4}-\frac{1}{2}(|v_a|+|v_b|).
\fe
Other directions do not have twisted boundary conditions by the B-field. Taking two directions out of three D2 world volume directions as the lightcone, we can compute the zero point energy for the {\textbf{NN}} and {\textbf{DD}} directions:
\ie\label{zero2}
\frac{1}{8}-\frac{1}{4}+\frac{1}{8}-\frac{1}{4}=-\frac{1}{4}.
\fe
Summing \eqref{zero1} and \eqref{zero2}, we get
\ie
E^{NS}_0=-\frac{1}{2}(|v_a|+|v_b|).\fe 
By applying a suitable oscillator that raises the energy by $|v_a|$ and $|v_b|$, we get 4 states with energies $\frac{1}{2}(\pm v_a\pm v_b)$. 2 out of 4 states are projected out by the GSO projection and the remaining ones have energy $\pm\frac{1}{2}(v_a-v_b)$. Those are two complex scalars transforming in $(N,M_1)$ of $U(N)\times U(M_1)$ of the D6 and D2 gauge groups with masses
\ie
m^2=\pm\frac{1}{2}(v_a-v_b).
\fe
One of the two is tachyonic, which needs further treatment later.
Since for 3d $\cN=1$ the elementary multiplet is a real scalar multiplet, it indicates that there are 4 scalars $\Phi^I$.\\
\newline
{\textbf{D2-D6 string R sector}}: As before, the zero point energy of R-sector is zero. For this case, we have 6 zero modes from 6 NN+DD directions. Therefore, we have $8=2^3$ dimensional ground states. We can label those as $|\A,A,\dot A\ra$, where $\A$ is 3d spinor index, $A,\dot A$ are the spinor index of $SU(2)\times SU(2)=SO(4)$ that rotates the transverse direction of $D2$ inside D6 worldvolume. Then, these fermionic massless fields make a supersymmetric completion of the bosons obtained from the NS sector analysis.

Therefore, we get 4 scalar multiplets:
\ie
D2_1-D6_1\text{ strings }\equiv~4~\Phi^I.
\fe
In the presence of 3d $\cN=2$ supersymmetry, one can recombine the 4 real scalars into 2 complex scalars $I$, $J$. Each of them is a part of the 3d $\cN=2$ fundamental chiral multiplet.\\
\newline
{\textbf{Equivalence between B-field and Angle configuration}}\\
\newline
It is helpful to recognize the equivalence between B-field background and D-branes at angles \cite{Balasubramanian:1996uc} to unify the background in a single frame before we do $D2_1-D6_2$ quantization, whose ND directions involve both B-field background and the angle background.

B-field background along $2n$ directions is T-dual to the D-branes at angles along $2n$ direction. For instance, our $D6_1/D6_2$ at angles configuration is T-dual to $D0/D6$ brane system with B-field on 6 ND directions.

To see this, let us look at the open string boundary condition on 456789 directions in the original $D6_1/D6_2$ system. 
\ie
&\pa_nX^{4,5,6}=\pa_\tau X^{7,8,9}=0,\\
&\pa_n\left[\text{cos}\varphi_{ij}X^i+\text{sin}\varphi_{ij}X^j\right]=0,\\
&\pa_\tau\left[-\text{sin}\varphi_{ij}X^i+\text{cos}\varphi_{ij}X^j\right]=0,
\fe
where $i\in\{4,5,6\}$ and $j\in\{7,8,9\}$. 

T-dualizing along $X^4$, $X^5$, $X^6$ modifies the equations as
\ie
\pa_\tau X^{4,5,6,7,8,9}&=0,\\
\text{tan}\varphi_{ij}\pa_nX^{j}+\pa_\tau X^{i}&=0,\\
\text{tan}\varphi_{ij}\pa_nX^i-\pa_\tau X^j&=0.
\fe
These boundary conditions are exactly that of D0 branes bounded on D6 branes with B-field on 6 ND directions. The explicit dictionary is the following:
\ie
B_{ij}=-B_{ji}=\text{tan}\varphi_{ij}
\fe
Recall that we had a useful parametrization for the B-field to utilize in the string quantization
\ie
B_{i,i+3}=b_{i-2}=\text{tan}\pi v_{i-2}.
\fe
Hence, we can identify 
\ie
\varphi_{i,i+3}/\pi=v_{i-2}.
\fe
\newline
{\textbf{D2${}_1-$D6${}_2$, D2${}_2-$D6${}_1$ strings NS sector:}} First of all, this configuration preserves 2 supersymmetries, which are the minimal amount among all configuration. We can see this from the following Gamma matrix exercise:
\ie
\Gamma^{4'5'6'}\E_l=\E_r,\quad\Gamma^{0123456}\E_l=\E_r~\Rightarrow~\Gamma^{0123456}\E_l=\Gamma^{4'5'6'}\E_l.
\fe
Decompose $\E_l=(\E_l)_4\otimes(\E_l)_6$ then the above becomes
\ie
\left(\Gamma^{0123}(\E_l)_4\right)\otimes\left(\Gamma^{456}(\E_l)_6\right)=(\E_l)_4\otimes\left(\Gamma^{4'5'6'}(\E_l)_6\right)~\Rightarrow~\begin{cases}\Gamma^{0123}(\E_l)_4=(\E_l)_4\\
\Gamma^{456}(\E_l)_6=\Gamma^{4'5'6'}(\E_l)_6.
\end{cases}
\fe
We know from \cite{Berkooz:1996km} the second equation gives one solution and the first equation gives two solutions for $(\E_l)_4$: $(+,+)$ or $(-,-)$.

Let us resume the $D2_1-D6_2$ quantization. We T-dualize 4, 5, 6 directions to convert the D branes intersecting with angles into transversely intersecting D branes with B-field background. We will parametrize the B-field background by $v_2$, $v_3$, $v_4$. Now, the $D2_1-D6_2$ configuration becomes $D(-1)_1-D9_2$: D-instantons probing D9 brane worldvolume with 4 ND directions with the original B-field and the remaining 6 ND directions with the new B-field. In this unified background, we compute the NS zero point energy. First, start with the 0123 direction where the B-field is applied:
\ie
\frac{1}{8}-\frac{1}{2}|v_0|+\frac{1}{8}-\frac{1}{8}|v_1|=\frac{1}{4}-\frac{1}{2}(|v_0|+|v_1|).
\fe
Second, for 456789 directions with the new B-field, we have
\ie
\frac{1}{8}-\frac{1}{2}|v_2|+\frac{1}{8}-\frac{1}{8}|v_3|+\frac{1}{8}-\frac{1}{8}|v_4|=\frac{3}{8}-\frac{1}{2}(|v_2|+|v_3|+|v_4|).
\fe
Summing them up, we have
\ie
E^{NS}_0=\frac{5}{8}-\frac{1}{2}(|v_0|+|v_1|+|v_2|+|v_3|+|v_4|).
\fe
We produce the excited states by acting NS fermion oscillators that increase energy by $|v_i|$ on the NS vacuum state. First 32 states have energies
\ie\label{NSzeroD2D6}
\frac{5}{8}+\frac{1}{2}(\pm v_0\pm v_1\pm v_2\pm v_3\pm v_4).
\fe
Recall that there is a list of constraints:
\ie\label{listofconstraints}
v_0+v_1&=0,\\
\pi(v_2\pm v_3\pm v_4)&=0\text{ mod }2\pi,\\
-\frac{1}{2}<&v_i<\frac{1}{2},
\fe
where each of the conditions comes from \eqref{v1v2}, \eqref{anglesusy}, \eqref{bfieldrange}. Moreover, by the construction of the twisted M-theory, we have a small non-commutativity parameter
\ie
v_0=-v_1=\E_2\ll1.
\fe
As for the second constraint in \eqref{listofconstraints}, we have a freedom to fix 
$v_2=v_3=v_4=\frac{2}{3}$. To be consistent with the third line, we shift $v_2$, $v_3$, $v_4$ by the period of the tangent function, 1:
\ie
v_2=v_3=v_4=\frac{2}{3}-1=-\frac{1}{3}.
\fe
Hence, \eqref{NSzeroD2D6} becomes
\ie\label{32states}
\frac{5}{8}+\frac{1}{2}(\pm \E_2\pm \E_2\pm \frac{1}{3}\pm \frac{1}{3}\pm \frac{1}{3}).
\fe
Notice that only the minimal energy configuration can potentially be negative or zero energy depending on the value of $\E_2$ being greater or equal to $\frac{1}{8}$, since
\ie
E_0=\frac{5}{8}+\frac{1}{2}(- \E_2- \E_2- \frac{1}{3}- \frac{1}{3}- \frac{1}{3})=\frac{1}{8}-\E_2.
\fe
From now, we will denote 32 states by the signs in \eqref{32states}, e.g. $(\pm,\pm,\pm,\pm,\pm)$.

Now, let us consider GSO projection
\ie
\Gamma_{GSO}=\frac{1}{2}(1+(-1)^{F_{NS}}).
\fe
We assign $F_{NS}=1$ to the NS vacuum state(the minimal energy configuration). In the above convention, it corresponds to the state $(-,-,-,-,-)$. Then, all states with even number of `+' are projected out, since their $(-1)^{F_{NS}}=-1$. Therefore, $|E_0\ra$ is projected out and we do not have any massless or tachyonic state in the spectrum. In other words, in the low energy limit that we are interested in, there is no interesting state coming from the NS sector of $D2_1-D6_2$ or $D2_2-D6_1$ strings. 

Still, there are massive scalars. The lightest modes are from 0123 directions, due to the relative strength of the B-field. After the GSO projection, we get 2 bosons. \\
\newline
{\textbf{D2${}_1-$D6${}_2$ and D2${}_2-$D6${}_1$ strings R sector:}} Since two D-branes are fully transverse to each other, there are no NN or DD directions in this configuration. Hence, there are no fermionic zero modes. Since R-sector zero point energy is zero, there is still one zero energy state, the R-sector vacuum. However, the GSO projection projects this state out. Hence, we do not obtain any massless fermionic state. 

Still, there are massive fermions that pair up with the massive scalars determined from the NS sector to complete the two lightest massive supermultiplet.\\
\newline
{\textbf{D6${}_1$-D6${}_2$ strings:}} We already know that they give rise to the 4d $\cN=1$ bi-fundamental chiral multiplets at the 4d intersection\cite{Berkooz:1996km}. Since we are only interested in the 3d worldvolume theory of either $D2_1$ or $D2_2$ branes, we treat the D6 branes as a heavy background. Hence, $D6_1-D6_2$ strings will not play any important role in our story. \\
\newline
{\textbf{D2${}_1$-D2${}_2$ strings:}} We can T-dualize the $D6_1-D6_2$ configuration in the 4 directions of intersection and arrive at $D2_1-D2_2$ configuration. Since T-duality does not change the amount of preserved supersymmetry, the $D2_1-D2_2$ string quantization will yield the dimensional reduction of the 4d $\cN=1$ bifundamental chiral multiplet determined above and live at the intersection of $D2_1$ and $D2_2$. It smears out the 3d worldvolume of both $D2_1$ and $D2_2$ branes. Hence, we will treat it as 3d $\cN=2$ bifundamental chiral multiplet $S$, which consists of two $\cN=1$ scalar multiplets $S_1$ and $S_2$. As we will see later, $S$ will provide an edge between two conventional ADHM quivers and make our moduli space more interesting.  \\
\newline
{\textbf{Summary}}\\
Let us summarize the result of the string quantization of our brane system in the following
\ie\label{fieldcontent}
D2_1-D2_1\text{ strings }&\equiv~\text{3d }\cN=1\text{ vector multiplets }\cV_2\text{ and }7~\text{scalar multiplets }\Phi_a\\
D2_2-D2_2\text{ strings }&\equiv~\text{3d }\cN=1\text{ vector multiplets }\cV_2\text{ and }7~\text{scalar multiplets }\Phi'_a\\
D2_1-D2_2\text{ strings }&\equiv~\text{3d }\cN=1\text{ bifundamental scalar multiplets } S_1,~S_2\\
D2_1-D6_1\text{ strings }&\equiv~\text{3d }\cN=1~4~\text{scalars multiplets }\Phi^I\\
D2_2-D6_2\text{ strings }&\equiv~\text{3d }\cN=1~4~\text{scalars multiplets }\Phi'{}^I\\
D2_1-D6_2\text{ strings }&\equiv~\text{3d }~\cN=1~\text{massive scalar multiplet }\\
D2_2-D6_1\text{ strings }&\equiv~\text{3d }~\cN=1~\text{massive scalar multiplet }\\
D6_1-D6_1\text{ strings }&\equiv~\text{7d }U(N)\text{ vector multiplet}\\
D6_2-D6_2\text{ strings }&\equiv~\text{7d }U(K)\text{ vector multiplet}\\
D6_1-D6_2\text{ strings }&\equiv~\text{4d }\cN=1\text{ bifundamental chiral multiplet}
\fe
For our purpose that will be described in the next subsection, we can focus on the strings that are associated with the 3d massless fields: $D2_i-D2_i$, $D2_i-D6_i$, $D2_1-D2_2$ strings. Considering only the massless states, we get 3d $\cN=2$ theory on the $D2$ branes.
\subsection{Deriving the equations for the moduli space}\label{sec:Dterm}
Given the field content of the 3d $\cN=2$ theory on each of the D2 brane stacks, we can proceed to study the 
matrix model supported at the point of intersection. It is useful first to recall our situation. We have converted the $G_2$ geometry and the $G_2$ instantons into the D6 brane and D2 brane stacks. Hence, the relevant moduli space $\cM$ is the moduli space of the intersecting D2 brane stacks fluctuating inside the intersecting D6 brane stacks. The directions of the fluctuation are 0123. Therefore, we need to focus on the quantum fields that arise from the quanta of the strings that correspond to 0123 directions. We have denoted those fields as $\Phi_{0,1,2,3}$, $\Phi^{0,1,2,3}$, $\tilde\Phi_{0,1,2,3}$, $\tilde\Phi^{0,1,2,3}$. We also need to include the chiral bifundamental $S$ that is supported at the intersection of the D2 stacks.

In terms of 3d $\cN=2$ superfields, $\cM$ is parametrized by the scalar components of  
\ie
X=\Phi_1+i\Phi_2,~Y=\Phi_3+i\Phi_4,&~I=\Phi^1+i\Phi^2,~J=\Phi^3+i\Phi^4,\\
\tilde X=\tilde\Phi_1+i\tilde\Phi_2,~\tilde Y=\tilde\Phi_3+i\tilde\Phi_4,&~\tilde I=\tilde\Phi^1+i\tilde\Phi^2,~\tilde J=\tilde\Phi^3+i\tilde\Phi^4,\\
S=S_1&+iS_2.
\fe
The moduli space is given by the zeros of the scalar potential, which is supposed to be
\ie
V=\frac{g^2}{2}\text{Tr}D^2+\ldots,
\fe
where $D$ is the auxiliary field in the off-shell multiplets. The zeros of $V$ is obtained by setting $D=0$. Our goal is to derive a set of D-term equations in the 3d $\cN=2$ system that fully specifies the moduli space. To do that, the first step is to figure out all auxiliary fields in the supermultiplets. 

Let us start from $D2_1-D2_1$ strings:
\ie
\cV=&(A_\mu,\ld_\A),\\
\Phi_{7}=&(\phi_{7},\psi_{7,\A},D_7),\quad\Phi_{8}=(\phi_{8},\psi_{8,\A},D_8),\quad\Phi_{9}=(\phi_{9},\psi_{9,\A},D_9),\\
\Phi_0=&(\phi_0,\psi_{0,\A},D_0),\quad\Phi_1=(\phi_1,\psi_{1,\A},D_1),\quad\Phi_2=(\phi_2,\psi_{2,\A},D_2),\quad\Phi_3=(\phi_3,\psi_{3,\A},D_3).
\fe
We will think of $\phi_7$ as the real scalar $\sigma$ and $(\phi_8,\phi_9)$ as the complex scalar $\varphi$ in 3d $\cN=2$ notation. We find a triplet $(D_7,D_8,D_9)$ of the auxiliary D's. Combine them into $D_{ab}$, where $a,b$ are SU(2) indices. In the last line, we find 4 scalars $\phi_0$, $\phi_1$, $\phi_2$, $\phi_3$. In $\cN=4$ notation, we can combine $X=\phi_0+i\phi_1$, $Y=\phi_2+i\phi_3$ and collect them into a fictitious\footnote{Even though there is no such $SU(2)$ symmetry in the 3d $\cN=2$ system, this formal bookkeeping tool enables us to present the material in a compact way.} $SU(2)$ doublet $\mathcal{Q}=(X,Y)$. There is a coupling between $\mathcal{Q}_a$ and $D_{ab}$:
\ie
\text{Tr }\mathcal{Q}^aD_{a}^b\mathcal{Q}_b.
\fe
Also, there is a quartic term of $D$:
\ie
\text{Tr }|D_a|^2.
\fe

From $D2_1-D6_1$ strings, we got
\ie
\Phi^0=(\phi^0,\psi^0_\A,G^0),\quad\Phi^1=(\phi^1,\psi^1_\A,G^1),\quad\Phi^2=(\phi^2,\psi^2_\A,G^2),\quad\Phi^3=(\phi^3,\psi^3_\A,G^3),
\fe
where we distinguish the auxiliary fields from those of $D2_1-D2_1$ strings by denoting them as $G$, not $D$. In $\cN=4$ notation, we can combine $I=\phi^0+i\phi^1$, $J=\phi^2+i\phi^3$ and further form a fictitious $SU(2)$ doublet $\cI=(I,J)$. It also couples to $D_{ab}$:
\ie
\text{Tr }\cI^aD^b_a\cI_b.
\fe

We can do the similar analysis for D2${}_1-$D2${}_2$ strings as they also form 3d $\cN=2$ chiral multiplets. Let us combine two real scalar superfields into a complex chiral multiplet $S=S_1+iS_2$ and treat it as a component of the fictitious $SU(2)$ doublet $\cS=(S,T)$. Since there is no $T$ to pair with, one should treat this expression formally. Then, one can write down, at least formally, the relevant Lagrangian: 
\ie\label{firstd}
(\cS^a)^I_i(D^b_a)^i_j(\cS_b)^j_J,
\fe
where $i,j$ are $U(M_1)$ gauge index for $M_1$ D2 branes on 456 direction and $I,J$ are $U(M_2)$ gauge index for $M_2$ D2 branes on 4'5'6' direction. 

Finally, as our gauge group contains  $U(1)$ as a subgroup, there is an FI term:
\ie
\text{Tr }\xi^b_aD^a_b.
\fe


Collecting the D-related Lagrangian terms, we have the following matrix model action
\ie
\text{Tr}\left[|D_a|^2+\cQ^bD^a_b\cQ_a+\cI^bD^a_b\cI_a+\xi^b_aD^a_b+\text{Tr }\cS^aD^b_a\cS_b\right],
\fe
which gives the equation of motion of $D_7=D$
\ie\label{Dtermsp1}
D=\left[X,X^\dagger\right]+\left[Y,Y^\dagger\right]+II^\dagger-JJ^\dagger-S^\dagger S-\xi\cdot\text{I}_{M_1\times M_1}=0.
\fe
The equation of motion of $D_8+iD_9=D_\bC$ is
\ie\label{CDtermsp1}
D_\bC=\left[X,Y\right]+IJ=0.
\fe
We notice that there is no $S$ in this equation, due to the absence of the $\cN=4$ completion of $S$. In other words, $T$ is zero in the formal expression \eqref{firstd}.

We can do the same for the second quiver. Collecting all the relevant D-terms and solving the EOM for $\tilde D_7=\tilde D$, we get
\ie\label{Dtermsp2}
\tilde D=\left[\tilde X,\tilde X^\dagger\right]+\left[\tilde Y,\tilde Y^\dagger\right]+\tilde I\tilde I^\dagger-\tilde J\tilde J^\dagger+SS^\dagger-\xi\cdot\text{I}_{M_2\times M_2}=0.
\fe
The equation of motion of $\tilde D_8+i\tilde D_9=\tilde D_\bC$ is
\ie\label{CDtermsp2}
\tilde D_\bC=\left[\tilde X,\tilde Y\right]+\tilde I\tilde J=0.
\fe
Hence, the moduli space $\cM^{N,K}_{M_1,M_2}$ is the $U(M_1)\otimes U(M_2)$ quotient of the space of solutions of \eqref{Dtermsp1}, \eqref{CDtermsp1}, \eqref{Dtermsp2}, \eqref{CDtermsp2}, where the groups $U(M_1)\otimes U(M_2)$ act as
\ie
\left(X,Y,I,J,S\right)\mapsto\left(g^{-1}Xg,g^{-1}Yg,g^{-1}I,Jg,g^{-1}S\right),\quad\text{where }g\in U(M_1),\\
\left(\tilde X,\tilde Y,\tilde I,\tilde J,S\right)\mapsto\left(\tilde g^{-1}\tilde X\tilde g,\tilde g^{-1}\tilde Y\tilde g,\tilde g^{-1}\tilde I,\tilde J\tilde g,\tilde g^{-1}S\right),\quad\text{where }\tilde g\in U(M_2)
\fe
\newline
\noindent{\textbf{The FI parameter $\xi$ and the non-commutativity parameter $\E_2$}}\\
\newline
There is a relation between the FI parameter $\xi$ and the non-commutativity parameter $\E_2$. We have observed in the $D2_1-D6_1$ string quantization a tachyon with its mass squared given by
\ie\label{tachyonp}
m^2=-(v_1-v_2).
\fe
We can resolve this potential instability and supersymmetry breaking issue by looking at the induced mass term in the 3d Lagrangian \cite{Nekrasov:2016gud}:
\ie\label{massLag}
-\frac{g^2}{2}\text{Tr}\left(-\xi II^\dagger+\xi J^\dagger J\right).
\fe
Mass squared of $I$ is $\xi\frac{g^2}{2}$ and that of $J$ is $-\xi\frac{g^2}{2}$. Hence, identifying this information with \eqref{tachyonp}, we get:
\ie
v_1-v_2=-\xi,
\fe
Because $v_1=-v_2=\E_2$ in our background, the FI parameter is related to the non-commutativity parameter $\E_2$:
\ie
\xi\sim\E_2.
\fe
This condition makes sure the low energy action to be supersymmetric.
For the detailed discussion on the tachyon condensation and restoration of the supersymmetry, see \cite{Nekrasov:2016gud,Pomoni:2021hkn}.

\section{Formal description of the moduli spaces}\label{app:proofs}

\subsection{Moduli space as a subspace of a Nakajima quiver variety}\label{sec:5.1}

If we add an arrow $T$ from $\bC^{M_2}$ to $\bC^{M_1}$, see Figure \ref{double Nakajima quivers}, then it is the double \footnote{For a quiver $Q$, we define its double $\overline{Q}$ to be the quiver with the same nodes as $Q$ and for every arrow of $Q$ we add an arrow of reverse direction.} of the quiver with same nodes but arrows are $X,I,\tilde X,\tilde I,S$. 
\begin{figure}[H]
\centering
\includegraphics[width=10cm]{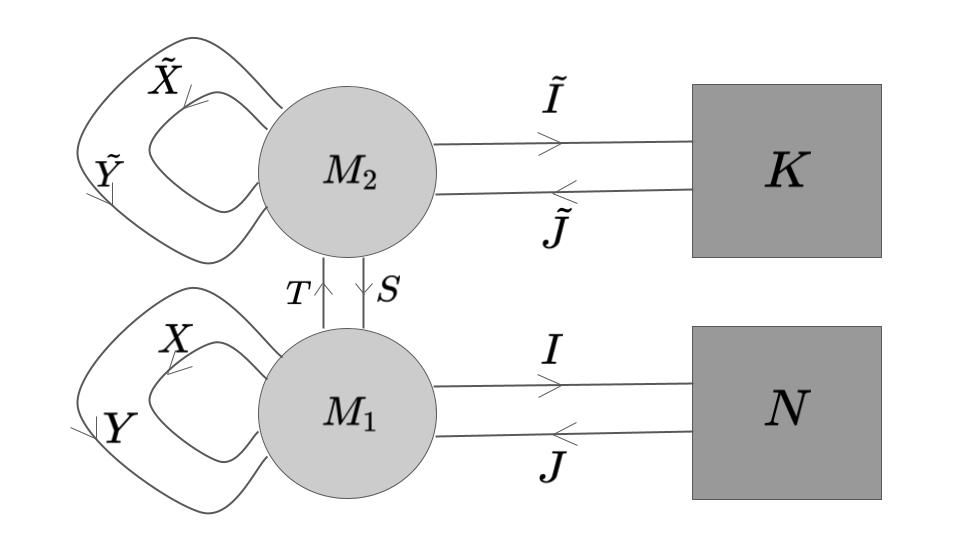}
  \vspace{-10pt}
\caption{}
\vspace{-5pt}
\label{double Nakajima quivers}
 \centering
\end{figure}

Let us denote the corresponding Nakajima quiver variety of Figure \ref{double Nakajima quivers} by $\widetilde{\cM}^{N,K}(M_1,M_2)$. Apparently, $\widetilde{\cM}^{N,K}(M_1,M_2)$ is non-empty and is smooth of dimension $2M_1M_2+2KM_2+2NM_1$. Recall that $\widetilde{\cM}^{N,K}(M_1,M_2)$ is defined as the space of solutions to the equations
\ie\label{eqn for auxiliary NQV}
\left[X,Y\right]&+IJ+TS=0,\\
\left[\tilde X,\tilde Y\right]&+\tilde I\tilde J-ST=0,\\
\left[X,X^\dagger\right]+\left[Y,Y^\dagger\right]&+II^\dagger-JJ^\dagger-S^\dagger S+TT^{\dagger}=\xi\cdot\text{I}_{M_1\times M_1},\\
\left[\tilde X,\tilde X^\dagger\right]+\left[\tilde Y,\tilde Y^\dagger\right]&+\tilde I\tilde I^\dagger-\tilde J\tilde J^\dagger+SS^\dagger-T^{\dagger}T=\xi\cdot\text{I}_{M_2\times M_2},
\fe
modulo the action of $U(M_1)\times U(M_2)$. The first two equations of \eqref{eqn for auxiliary NQV} says that the complex moment map $\mu_{\bC}(X,Y,I,J,\tilde X,\tilde Y,\tilde I,\tilde J,S,T)=0$ and the last two equations of \eqref{eqn for auxiliary NQV} says that the real moment map $\mu_{\bR}(X,Y,I,J,\tilde X,\tilde Y,\tilde I,\tilde J,S,T)=(\xi\cdot\text{I}_{M_1\times M_1},\xi\cdot\text{I}_{M_2\times M_2})$.

On the moduli space $\widetilde{\cM}^{N,K}(M_1,M_2)$, there is a universal rank $M_1$ bundle denoted by $\mathcal V_1$, and a universal rank $M_2$ bundle denoted by $\mathcal V_2$, and a universal map $\mathcal T:\mathcal V_2\to \mathcal V_1$.

Assume that $\xi\neq 0$, and denote by $\widetilde{M}^{N,K}(M_1,M_2)$ the space of solutions to \eqref{eqn for auxiliary NQV}, then it is well-known that $U(M_1)\times U(M_2)$ acts on $\widetilde{M}^{N,K}(M_1,M_2)$ freely, and moreover the pullback of the vanishing locus of $\mathcal T$ is exactly the locus in $\widetilde{M}^{N,K}(M_1,M_2)$ where $T$ in the equations \eqref{eqn for auxiliary NQV} is zero. So we have
\begin{lemma}
Assume that $\xi\neq 0$, then $\cM^{N,K}(M_1,M_2)$ embeds into $\widetilde{M}^{N,K}(M_1,M_2)$ as the vanishing locus of $\mathcal T$.
\end{lemma}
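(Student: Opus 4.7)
The proof is essentially a matter of unwinding definitions, but it has three substantive ingredients that must be checked in order. First, I would perform a direct algebraic comparison: substitute $T = 0$ into the four equations of \eqref{eqn for auxiliary NQV}, observe that the terms $TS$, $-ST$, $TT^{\dagger}$, and $T^{\dagger}T$ all vanish, and verify that what remains is literally the system \eqref{Dtermsp1}, \eqref{CDtermsp1}, \eqref{Dtermsp2}, \eqref{CDtermsp2} that cuts out $\cM^{N,K}(M_1,M_2)$. This identifies $\cM^{N,K}(M_1,M_2)$ set-theoretically with the subvariety $\{T=0\} \subset \widetilde{M}^{N,K}(M_1,M_2)$, compatibly with the diagonal $U(M_1) \times U(M_2)$ action (since $T$ transforms in $\mathrm{Hom}(\bC^{M_2},\bC^{M_1})$ and the action preserves the locus $T = 0$).

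Second, I would invoke the standard hyperk\"ahler-quotient criterion for Nakajima varieties: at a nonzero FI parameter $\xi$, the real moment map condition $\mu_{\bR} = (\xi \cdot \mathbf{I}_{M_1},\xi \cdot \mathbf{I}_{M_2})$ forces the action of $U(M_1) \times U(M_2)$ on $\widetilde{M}^{N,K}(M_1,M_2)$ to be free. The quickest way is to note that any stabilizer would give a nontrivial Hermitian element in the Lie algebra annihilating the entire data, and pairing with $\mu_{\bR}$ one obtains a contradiction with $\xi \neq 0$. Freeness guarantees that the universal bundles $\mathcal V_1$, $\mathcal V_2$ descend to $\widetilde{\cM}^{N,K}(M_1,M_2)$ and that the equivariant map $T$ descends to a well-defined bundle map $\mathcal T : \mathcal V_2 \to \mathcal V_1$.

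Third, I would identify the vanishing locus. Because the quotient map $\widetilde{M}^{N,K}(M_1,M_2) \to \widetilde{\cM}^{N,K}(M_1,M_2)$ is a principal $U(M_1) \times U(M_2)$-bundle, and $\mathcal T$ is the associated-bundle version of the $U(M_1) \times U(M_2)$-equivariant map $T$, the preimage of $\{\mathcal T = 0\}$ is exactly $\{T=0\}$. Combining with step one, the zero locus $\{\mathcal T = 0\} \subset \widetilde{\cM}^{N,K}(M_1,M_2)$ is $\cM^{N,K}(M_1,M_2)$, and this inclusion is a closed embedding (as the zero locus of a section of a vector bundle).

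The main subtlety, which is where I would take the most care, is the freeness in step two: one must be sure that $\xi \neq 0$ rules out stabilizers not only for the full data $(X,Y,I,J,\tilde X,\tilde Y,\tilde I,\tilde J,S,T)$ but also after restricting to $T=0$, so that the embedded $\cM^{N,K}(M_1,M_2)$ inherits a free quotient structure and is itself smooth. This will follow from the same moment-map pairing argument applied separately to the $M_1$- and $M_2$-components, using the ADHM-like equations on each side.
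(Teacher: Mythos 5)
Your proof is correct and takes the same approach as the paper's brief sketch, which consists of exactly your three ingredients: setting $T=0$ recovers the defining equations of $\cM^{N,K}(M_1,M_2)$, the $U(M_1)\times U(M_2)$ action on the full Nakajima solution space is free for $\xi\neq 0$, and the $U(M_1)\times U(M_2)$-quotient of $\{T=0\}$ is precisely the vanishing locus of the descended bundle map $\mathcal T$. Your closing concern about whether freeness survives the restriction to $T=0$ is automatic and needs no separate moment-map argument: any stabilizer of a point with $T=0$ is already a stabilizer of that point in $\widetilde{M}^{N,K}(M_1,M_2)$, where freeness has been established.
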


Recall that $\widetilde{\cM}^{N,K}(M_1,M_2)$ has algebro-geometric description \cite{Nakajima:1994}. Let's assume that $\xi>0$, then $\widetilde{\cM}^{N,K}(M_1,M_2)$ is the space of \textbf{stable} solutions to the equations
\ie\label{eqn for auxiliary NQV 2}
\left[X,Y\right]+IJ+TS=0,\\
\left[\tilde X,\tilde Y\right]+\tilde I\tilde J-ST=0,
\fe
modulo the action of $\mathrm{GL}_{M_1}\times \mathrm{GL}_{M_2}$. Here ``stable" means that the only subspace of $\bC^{M_1}\oplus \bC^{M_2}$, which contains the image of $I$ and $\tilde I$ and is invariant under the actions of $X,Y,\tilde X,\tilde Y,S,T$ is $\bC^{M_1}\oplus \bC^{M_2}$ itself.
This has the following obvious consequences:
\begin{lemma}\label{Lemma: algebro-goemetric description}
Assume that $\xi>0$, then we have an isomorphism:
\ie 
\cM^{N,K}(M_1,M_2)=\mu_{\bC}^{-1}(0)^{\mathrm{stable}}/\mathrm{GL}_{M_1}\times \mathrm{GL}_{M_2},
\fe
where ``stable" means the locus inside $\mu_{\bC}^{-1}(0)$ such that $\{X,Y,I,J,\tilde X,\tilde Y,\tilde I,\tilde J,S\}$ satisfies that $$\bC\langle X,Y\rangle \mathrm{Im}(I)=\bC^{M_1},\;\bC\langle \tilde X,\tilde Y\rangle (\mathrm{Im}(\tilde I)+\mathrm{Im}(S))=\bC^{M_2}.$$
\end{lemma}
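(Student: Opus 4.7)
\noindent The plan is to deduce the algebro-geometric description from three ingredients: (i) the standard King/Nakajima algebro-geometric description of the auxiliary Nakajima quiver variety $\widetilde{\cM}^{N,K}(M_1,M_2)$; (ii) Lemma~1, which cuts $\cM^{N,K}(M_1,M_2)$ out of $\widetilde{\cM}^{N,K}(M_1,M_2)$ as the vanishing locus of the universal map $\mathcal T$; and (iii) a direct check that the Nakajima stability condition for the doubled quiver, restricted to the locus $T=0$, coincides with the stability condition stated in the lemma.

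First I would invoke the hyperk\"ahler Kempf-Ness theorem for quiver varieties: for $\xi>0$ the real moment map quotient $\widetilde{\cM}^{N,K}(M_1,M_2)$ is isomorphic to the GIT quotient $\widetilde\mu_{\bC}^{-1}(0)^{\mathrm{stable}}/\mathrm{GL}_{M_1}\times \mathrm{GL}_{M_2}$, where $\widetilde\mu_{\bC}$ is the full complex moment map including the $T$-terms appearing in \eqref{eqn for auxiliary NQV} and Nakajima-stability is the condition recalled in the text just before the lemma; semistability equals stability for this $\xi$, and the action on the stable locus is a free geometric quotient modulo the common center. By Lemma~1, pulled back to the pre-quotient, $\cM^{N,K}(M_1,M_2)$ is then the quotient of the closed $\mathrm{GL}_{M_1}\times \mathrm{GL}_{M_2}$-invariant subscheme $\{T=0\}\cap \widetilde\mu_{\bC}^{-1}(0)^{\mathrm{stable}}$. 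On the locus $\{T=0\}$ the $T$-dependent terms in \eqref{eqn for auxiliary NQV} drop out and the remaining complex moment map equations are exactly $[X,Y]+IJ=0$ and $[\tilde X,\tilde Y]+\tilde I\tilde J=0$, i.e.\ $\mu_{\bC}^{-1}(0)$. It therefore suffices to match the induced stability with the one stated in the lemma.

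The core step is this equivalence of stability conditions on the $T=0$ locus. For the forward direction, set $W_1:=\bC\langle X,Y\rangle\,\mathrm{Im}(I)\subset\bC^{M_1}$ and $W_2:=\bC\langle \tilde X,\tilde Y\rangle(\mathrm{Im}(\tilde I)+\mathrm{Im}(S))\subset\bC^{M_2}$. Since $S(\bC^{M_1})=\mathrm{Im}(S)\subset W_2$ and $S$ vanishes on $\bC^{M_2}$, the sum $W_1\oplus W_2$ is invariant under $X,Y,\tilde X,\tilde Y,S$ and contains $\mathrm{Im}(I)+\mathrm{Im}(\tilde I)$; Nakajima-stability with $T=0$ then forces $W_1=\bC^{M_1}$ and $W_2=\bC^{M_2}$, i.e.\ the stated condition. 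Conversely, given a subspace $W\subset\bC^{M_1}\oplus\bC^{M_2}$ that is invariant under $X,Y,\tilde X,\tilde Y,S$ and contains $\mathrm{Im}(I)+\mathrm{Im}(\tilde I)$, the intersection $W\cap\bC^{M_1}$ is $X,Y$-invariant and contains $\mathrm{Im}(I)$, so equals $\bC^{M_1}$ by the first stated condition; then $\mathrm{Im}(S)=S(\bC^{M_1})\subset W$, so $W\cap\bC^{M_2}$ is $\tilde X,\tilde Y$-invariant and contains $\mathrm{Im}(\tilde I)+\mathrm{Im}(S)$, hence equals $\bC^{M_2}$ by the second stated condition, and thus $W=\bC^{M_1}\oplus\bC^{M_2}$.

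The main obstacle I anticipate is conceptual rather than computational: one has to make sure that cutting the GIT quotient of the doubled-quiver variety by the closed invariant subscheme $\{T=0\}$ really produces a geometric quotient in the sense of the stability condition quoted in the lemma, i.e.\ that no extra non-closed orbits or subtle GIT pathologies appear along the way. The equivalence of stability conditions established in the previous paragraph is precisely the input that makes this restriction well-behaved, and combined with the standard freeness of the $\mathrm{GL}_{M_1}\times \mathrm{GL}_{M_2}$-action on the Nakajima-stable locus it yields the stated isomorphism.
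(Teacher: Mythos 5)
Your proposal is correct and follows exactly the route the paper has in mind: the paper states this lemma as an ``obvious consequence'' of the Nakajima/King GIT description of $\widetilde{\cM}^{N,K}(M_1,M_2)$ together with Lemma~1, and the step you spell out in detail---the equivalence, on the locus $T=0$, between Nakajima stability for the doubled quiver and the two ``chain-generation'' conditions $\bC\langle X,Y\rangle \mathrm{Im}(I)=\bC^{M_1}$, $\bC\langle \tilde X,\tilde Y\rangle(\mathrm{Im}(\tilde I)+\mathrm{Im}(S))=\bC^{M_2}$---is precisely the content being taken as obvious. Your forward and reverse implications are both sound (in the reverse direction, establishing $\bC^{M_1}\subset W$ \emph{first}, then feeding $\mathrm{Im}(S)$ into $W\cap\bC^{M_2}$, is the right order and the point where the asymmetry of the stated stability condition becomes visible); one tiny phrasing nit is that ``$S$ vanishes on $\bC^{M_2}$'' is not meaningful since $S$ is an arrow $\bC^{M_1}\to\bC^{M_2}$ in the quiver---what you mean is just that $S$ contributes nothing to the $\bC^{M_1}$ component of a subrepresentation, which is what you actually use. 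The concern raised in your final paragraph about GIT pathologies is not an issue: since $\mathrm{GL}_{M_1}\times\mathrm{GL}_{M_2}$ acts freely on the Nakajima-stable locus, the quotient map is a principal bundle, and cutting by the $G$-invariant closed subscheme $\{T=0\}$ upstairs and then quotienting agrees with passing to the vanishing locus of $\mathcal T$ downstairs, as you ultimately note.
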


\begin{remark}
If we assume $\xi<0$ instead, then the stability condition in the above lemma becomes
\begin{itemize}
    \item[*] The only subspace of $\bC^{M_1}\oplus \bC^{M_2}$, which is annihilated by $J\oplus \tilde J$ and invariant under the actions of $X,Y,\tilde X,\tilde Y,S,T$ is $0$.
\end{itemize}
\end{remark}

\begin{remark}
Lemma \ref{Lemma: algebro-goemetric description} says that $$(\mu_{\bC}\times \mu_{\bR})^{-1}(0,\xi\cdot\text{I}_{M_1\times M_1},\xi\cdot\text{I}_{M_2\times M_2})/U(M_1)\times U(M_2)=\mu_{\bC}^{-1}(0)^{\mathrm{stable}}/\mathrm{GL}_{M_1}\times \mathrm{GL}_{M_2},$$and the following technical result \cite[Proposition 3.5]{Nakajima:1994} provide a geometric interpretation of this equality:
\begin{itemize}
    \item Let $x$ be a point in $\mu_{\bC}^{-1}(0)$, then $x$ is stable if and only if the $\mathrm{GL}_{M_1}\times \mathrm{GL}_{M_2}$ orbit through $x$ intersects with $\mu_{\bR}^{-1}(\xi\cdot\text{I}_{M_1\times M_1},\xi\cdot\text{I}_{M_2\times M_2})$.
\end{itemize}
Namely, if we consider the union of $\mathrm{GL}_{M_1}\times \mathrm{GL}_{M_2}$ orbits of $(\mu_{\bC}\times \mu_{\bR})^{-1}(0,\xi\cdot\text{I}_{M_1\times M_1},\xi\cdot\text{I}_{M_2\times M_2})$, then this set is exactly $\mu_{\bC}^{-1}(0)^{\mathrm{stable}}$, and it is known that $\mathrm{GL}_{M_1}\times \mathrm{GL}_{M_2}$ acts on $\mu_{\bC}^{-1}(0)^{\mathrm{stable}}$ freely, so we have
\ie 
&\mu_{\bC}^{-1}(0)^{\mathrm{stable}}/\mathrm{GL}_{M_1}\times \mathrm{GL}_{M_2}\\
=&(\mathrm{GL}_{M_1}\times \mathrm{GL}_{M_2})\cdot (\mu_{\bC}\times \mu_{\bR})^{-1}(0,\xi\cdot\text{I}_{M_1\times M_1},\xi\cdot\text{I}_{M_2\times M_2})/\mathrm{GL}_{M_1}\times \mathrm{GL}_{M_2}\\
=&(\mu_{\bC}\times \mu_{\bR})^{-1}(0,\xi\cdot\text{I}_{M_1\times M_1},\xi\cdot\text{I}_{M_2\times M_2})/U(M_1)\times U(M_2).
\fe 
\end{remark}

Notice that when $\xi>0$, the stability condition implies that the subset of quiver data $(X,Y,I,J)$ is stable as an individual ADHM quiver. This shows that:
\begin{lemma}\label{Fibration over Hilb}
Assume that $\xi> 0$, then there is a map $\pi: \cM^{N,K}(M_1,M_2)\to \cM^N(M_1)$, where $\cM^N(M_1)$ is the moduli space of torsion-free sheaves on $\mathbb{P}^2$ framed at $\mathbb{P}^1_{\infty}$ with rank $N$ and second Chern number $M_1$. Moreover, $\pi$ is a locally trivial fibration with fibers isomorphic to the quiver variety associated to the quiver diagram.
\begin{figure}[H]
\centering
\includegraphics[width=10cm]{fig1.jpg}
  \vspace{-10pt}
\caption{}
\label{quiver for fibers}
 \centering
\end{figure}
\noindent The stability condition is that $\bC\langle \tilde X,\tilde Y\rangle (\mathrm{Im}(\tilde I)+\mathrm{Im}(S))=\bC^{M_2}$.
\end{lemma}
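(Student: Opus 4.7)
The plan is to define $\pi$ as the forgetful map that drops the data $(\tilde X,\tilde Y,\tilde I,\tilde J,S)$ and then use the algebro-geometric presentation given in Lemma~\ref{Lemma: algebro-goemetric description} to identify the fibers and prove local triviality. Concretely, given a point of $\cM^{N,K}(M_1,M_2)$ represented by stable quiver data $(X,Y,I,J,\tilde X,\tilde Y,\tilde I,\tilde J,S)$ satisfying the complex moment map equations $[X,Y]+IJ=0$ and $[\tilde X,\tilde Y]+\tilde I\tilde J=0$, I would send it to $[(X,Y,I,J)]$. To see that this is well defined I have to verify two things: first, that the stability condition $\bC\langle X,Y\rangle\mathrm{Im}(I)=\bC^{M_1}$ holds, which is immediate from the more general stability in Lemma~\ref{Lemma: algebro-goemetric description} by restricting to the $\bC^{M_1}$-summand (the subspace generated by $X,Y$ from $\mathrm{Im}(I)$ is $\mathrm{GL}_{M_1}\times\mathrm{GL}_{M_2}$-invariant and contained in $\bC^{M_1}\oplus 0$, hence must equal $\bC^{M_1}$); and second, that the $\mathrm{GL}_{M_1}\times\mathrm{GL}_{M_2}$-orbit of the full tuple maps to the $\mathrm{GL}_{M_1}$-orbit of $(X,Y,I,J)$, which is automatic from the definition of the group actions.

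Next I would identify the scheme-theoretic fiber of $\pi$ over a point $[(X_0,Y_0,I_0,J_0)]\in \cM^N(M_1)$. Choosing a representative $(X_0,Y_0,I_0,J_0)$ in $\mu_{\bC}^{-1}(0)^{\mathrm{stable}}\subset \mathrm{End}(\bC^{M_1})^{\oplus 2}\oplus \mathrm{Hom}(\bC^N,\bC^{M_1})\oplus \mathrm{Hom}(\bC^{M_1},\bC^N)$, the fiber consists of tuples $(\tilde X,\tilde Y,\tilde I,\tilde J,S)$ satisfying $[\tilde X,\tilde Y]+\tilde I\tilde J=0$ together with the stability condition $\bC\langle \tilde X,\tilde Y\rangle(\mathrm{Im}(\tilde I)+\mathrm{Im}(S))=\bC^{M_2}$, modulo the stabilizer of $(X_0,Y_0,I_0,J_0)$ in $\mathrm{GL}_{M_1}\times\mathrm{GL}_{M_2}$. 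By the stability of the ADHM data the $\mathrm{GL}_{M_1}$-stabilizer is trivial, so only the $\mathrm{GL}_{M_2}$ factor acts nontrivially; this precisely reproduces the quiver variety associated to Figure~\ref{quiver for fibers} with framing rank $M_1$ contributed by $S$ and rank $K$ contributed by $(\tilde I,\tilde J)$. Crucially, the complex moment map equation in the fiber is $[\tilde X,\tilde Y]+\tilde I\tilde J=0$ with no $S$-contribution, because the variables $T$ vanish (the fiber data do not couple back to the base through a moment-map relation involving $S$ alone).

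To upgrade this to a locally trivial fibration, I would invoke the standard fact that $\mu_{\bC}^{-1}(0)^{\mathrm{stable}}\to \cM^N(M_1)$ is a principal $\mathrm{GL}_{M_1}$-bundle, hence admits local sections in the analytic (in fact étale) topology. Pulling back along such a local section $s:U\to \mu_{\bC}^{-1}(0)^{\mathrm{stable}}$ over an open $U\subset \cM^N(M_1)$ produces a canonical representative of the ADHM data at every point of $U$, and the relative moduli of $(\tilde X,\tilde Y,\tilde I,\tilde J,S)$ over $U$ becomes literally the constant family $U\times F$, where $F$ is the fiber quiver variety described above. Gauge transformations between two different sections act on $S$ by $g^{-1}S$ with $g\in \mathrm{GL}_{M_1}$; this action preserves $F$ because it is absorbed into an admissible change of framing of $\bC^{M_1}$, so the transition functions between the local trivializations live in $\mathrm{GL}_{M_1}$ acting on $F$, confirming local triviality.

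The main obstacle I anticipate is checking that the stability condition in the full moduli space really reduces cleanly, on the fiber, to the one quoted for $(\tilde X,\tilde Y,\tilde I,\tilde J,S)$: one must show that any $\bC\langle \tilde X,\tilde Y\rangle$-invariant subspace of $\bC^{M_2}$ containing $\mathrm{Im}(\tilde I)+\mathrm{Im}(S)$ which is strictly smaller than $\bC^{M_2}$ would, together with the whole $\bC^{M_1}$, violate the ambient stability in $\bC^{M_1}\oplus\bC^{M_2}$, and conversely. The forward direction uses that the ambient invariant subspace $\bC^{M_1}\oplus W\subset \bC^{M_1}\oplus\bC^{M_2}$ is closed under $X,Y,\tilde X,\tilde Y,S$ (trivially for the first two since it contains the whole $\bC^{M_1}$, and by hypothesis for the rest) and contains $\mathrm{Im}(I)$ and $\mathrm{Im}(\tilde I)$; the converse is analogous. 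Once this compatibility is pinned down, the rest of the argument is bookkeeping.
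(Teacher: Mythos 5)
Your proof is correct and follows essentially the same route as the paper: define $\pi$ by forgetting $(\tilde X,\tilde Y,\tilde I,\tilde J,S)$, read off the fiber from the factorized stability condition, and obtain local triviality by replacing $\bC^{M_1}$ with the universal bundle $\cV_1$ (equivalently, via local sections of the principal $\mathrm{GL}_{M_1}$-bundle). Your closing worry about whether the ambient stability ``reduces cleanly'' to the fiber condition is already settled by the product form of the stability condition recorded in Lemma~\ref{Lemma: algebro-goemetric description}, so that last check is redundant rather than a genuine obstacle.
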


\begin{proof}
$\pi$ is defined by $(X,Y,I,J,\tilde X,\tilde Y,\tilde I,\tilde J,S)\mapsto (X,Y,I,J)$. On the other hand, if we use the universal bundle $\mathcal V_1$ on $\cM^N(M_1)$ to form a fibration of quiver variety associated to \figref{quiver for fibers} with stability condition that $\bC\langle \tilde X,\tilde Y\rangle (\mathrm{Im}(\tilde I)+\mathrm{Im}(S))=\bC^{M_2}$, where $\bC^{M_1}$ is replaced by the bundle $\mathcal V_1$, then the total space is exactly the moduli space of stable solutions to equations \eqref{ADHM eqn}.
\end{proof}

\begin{proposition}\label{Fibration over Hilb, K=0}
Assume that $\xi> 0$ and $K=0$, then $\pi: \cM^{N,0}(M_1,M_2)\to \cM^N(M_1)$ is a locally trivial fibration with fibers isomorphic to $\mathrm{Quot}^{M_2}(\mathcal O_{\bC^2}^{M_1})$ \footnote{ $\mathrm{Quot}^{M_2}(\mathcal O_{\bC^2}^{M_1})$ is the moduli space of quotient sheaf of $\mathcal O_{\bC^2}^{M_1}$ such that the $\bC$-dimension of the sheaf is $M_2$.}.
\end{proposition}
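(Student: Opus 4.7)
My plan is to deduce Proposition~2 as a direct specialization of Lemma~3 combined with the standard quiver description of the Quot scheme of points on $\bC^2$. First, I would invoke Lemma~3, which already establishes that $\pi:\cM^{N,K}(M_1,M_2)\to\cM^N(M_1)$ is a locally trivial fibration whose fibers are the quiver variety associated to Figure~\ref{quiver for fibers} with $\bC^{M_1}$ replaced by the universal bundle $\mathcal V_1$. Setting $K=0$ simply removes the $\bC^K$ node and the arrows $\tilde I,\tilde J$, so the fiber is cut out of $\mathrm{End}(\bC^{M_2})^{\oplus 2}\oplus \mathrm{Hom}(\bC^{M_1},\bC^{M_2})$ by the single complex moment map equation
\begin{equation}
[\tilde X,\tilde Y]=0,
\end{equation}
with stability condition $\bC\langle \tilde X,\tilde Y\rangle\,\mathrm{Im}(S)=\bC^{M_2}$, modulo the $\mathrm{GL}_{M_2}$ action.

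Next I would identify this quiver variety with $\mathrm{Quot}^{M_2}(\mathcal O_{\bC^2}^{M_1})$. Given a triple $(\tilde X,\tilde Y,S)$ satisfying the above constraints, the commuting pair $(\tilde X,\tilde Y)$ endows $\bC^{M_2}$ with the structure of a $\bC[x,y]$-module $\mathcal F$ of length $M_2$, and the map $S:\bC^{M_1}\to\bC^{M_2}$ extends uniquely to a $\bC[x,y]$-module homomorphism $\bC[x,y]^{M_1}\to \mathcal F$. The stability condition says exactly that this homomorphism is surjective, i.e.\ it defines a point of $\mathrm{Quot}^{M_2}(\mathcal O_{\bC^2}^{M_1})$. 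Two triples related by $\mathrm{GL}_{M_2}$ yield the same quotient. Conversely, given a quotient $\mathcal O_{\bC^2}^{M_1}\twoheadrightarrow \mathcal F$ with $\dim_{\bC}\mathcal F=M_2$, choosing a basis of $\mathcal F$ recovers $(\tilde X,\tilde Y,S)$ up to the $\mathrm{GL}_{M_2}$ action. This construction is well known (it is a mild generalization of the Nakajima--Baranovsky ADHM description of $\mathrm{Hilb}^{M_2}(\bC^2)$, to which it reduces when $M_1=1$).

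To promote the set-theoretic bijection to an isomorphism of schemes (and in families over $\cM^N(M_1)$, where $\bC^{M_1}$ is replaced by the universal bundle $\mathcal V_1$), I would compare universal families: on the quiver side, $\cV_2$ together with the commuting $\tilde X,\tilde Y$ and $S:\mathcal V_1\to\mathcal V_2$ produces a coherent sheaf on $\bC^2\times \cM^N(M_1)$ flat over $\cM^N(M_1)$, equipped with a surjection from $\pi^*\mathcal V_1\otimes \mathcal O_{\bC^2}$; this defines the classifying morphism to the relative Quot scheme $\mathrm{Quot}^{M_2}(\mathcal V_1\otimes \mathcal O_{\bC^2})\to\cM^N(M_1)$. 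The inverse morphism comes from pushing forward the universal quotient sheaf along the projection to $\cM^N(M_1)$ and reading off $(\tilde X,\tilde Y,S)$ as multiplication by $x$, $y$ and the canonical surjection, respectively.

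The only nontrivial point is checking that smoothness and local triviality of the fibration are preserved under this identification; this follows because the quiver variety is constructed as a GIT quotient with free $\mathrm{GL}_{M_2}$ action under the stated stability condition, and the Quot functor is representable by a smooth scheme of the expected dimension over the Hilbert scheme stratum (here over $\cM^N(M_1)$). I do not anticipate a genuine obstacle: the main work is the categorical bookkeeping to show the two universal families are canonically isomorphic, at which point local triviality of $\pi$ is inherited from Lemma~3.
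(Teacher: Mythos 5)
Your argument is correct and follows essentially the same route as the paper: specialize Lemma~3 to $K=0$ to reduce to the quiver variety with F-term $[\tilde X,\tilde Y]=0$ and stability $\bC\langle\tilde X,\tilde Y\rangle\,\mathrm{Im}(S)=\bC^{M_2}$, and recognize this as the ADHM-type description of $\mathrm{Quot}^{M_2}(\mathcal O_{\bC^2}^{M_1})$. The paper's proof is terser (it simply asserts that this data is by definition the Quot scheme), while you spell out the $\bC[x,y]$-module identification and the scheme-level comparison of universal families; that extra care is welcome but does not change the method.
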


\begin{proof}
In the quiver \ref{quiver for fibers}, if we set $K=0$, then the F-term equation is $[\tilde X,\tilde Y]=0$, and the stability condition reads that $\bC [\tilde X,\tilde Y]\mathrm{Im}(S)=\bC^{M_2}$, this is exactly the definition of moduli space of quotient sheaf of $\mathcal O_{\bC^2}^{M_1}$ such that the $\bC$-dimension of the sheaf is $M_2$. Note that $\bC^{M_2}$ is the linear space of quotient sheaf and $\tilde X,\tilde Y$ is the coordinate functions of $\bC^2$.
\end{proof}

In general $\cM^{N,K}(M_1,M_2)$ is highly singular, but the following proposition shows that the singularity of $\cM^{N,K}(M_1,M_2)$ is not too bad under some mild assumptions.

\begin{proposition}\label{prop: regular embedding}
Assume that $\xi\neq 0$ and $NK\neq 0$, then the embedding $\cM^{N,K}(M_1,M_2)\hookrightarrow\widetilde{M}^{N,K}(M_1,M_2)$ is regular of codimension $M_1M_2$. Moreover, locally the set of matrix elements of $\mathcal T$ is a regular sequence of length $M_1M_2$ \footnote{A regular sequence of length $n$ in a commutative ring $\cA$ is a set of elements $a_1,\cdots,a_n\in \cA$ such that $a_{i+1}$ does not have zero-divisor in the quotient ring $\cA/(a_1,\cdots,a_i)\cA$, for all $i$ ranging from $0$ to $n-1$ (set $a_0=0$). The key point is that, if $\cA$ is smooth, then a sequence $a_1,\cdots,a_n\in \cA$ is regular if and only if $\dim \mathrm{Spec}\cA-\dim \mathrm{Spec}\cA/(a_1,\cdots,a_i)\cA=n$.}.
\end{proposition}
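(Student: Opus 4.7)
The plan is to interpret $\cM^{N,K}(M_1,M_2)$ via the universal-section picture afforded by the first Lemma of this subsection, and then reduce the regular-embedding statement to a dimension count. Concretely, that Lemma exhibits $\cM^{N,K}(M_1,M_2)$ as the zero scheme $Z(\mathcal T)$ of the universal morphism $\mathcal T:\mathcal V_2\to \mathcal V_1$, which one may rebrand as a global section of the rank-$M_1 M_2$ vector bundle $\mathcal H := \mathcal V_1\otimes \mathcal V_2^{\ast}$ on the smooth ambient space $\widetilde{\cM}^{N,K}(M_1,M_2)$. The standard Koszul criterion says that for a section $\sigma$ of a rank-$r$ bundle on a smooth (hence Cohen--Macaulay) scheme $X$, one always has $\mathrm{codim}_X Z(\sigma)\le r$, and the equality $\mathrm{codim}_X Z(\sigma)=r$ at every point is equivalent to $\sigma$ being a Koszul-regular section, i.e.\ to the components of $\sigma$ forming a regular sequence of length $r$ in any local trivialization of $\mathcal H$. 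Both statements of the proposition therefore follow from the single dimension estimate
\begin{equation*}
\dim \cM^{N,K}(M_1,M_2)\;=\;\dim \widetilde{\cM}^{N,K}(M_1,M_2)-M_1 M_2\;=\;2NM_1+2KM_2+M_1 M_2.
\end{equation*}

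For this estimate I would use the fibration $\pi:\cM^{N,K}(M_1,M_2)\to \cM^{N}(M_1)$ supplied by Lemma \ref{Fibration over Hilb}, valid for $\xi>0$ (the case $\xi<0$ is treated symmetrically by fibering over $\cM^{K}(M_2)$ after swapping $I\leftrightarrow J$, $\tilde I\leftrightarrow \tilde J$). Since $N\neq 0$, the base is the classical Gieseker/ADHM moduli space, smooth and irreducible of complex dimension $2NM_1$. It thus suffices to show that each fiber $F$ has complex dimension $2KM_2+M_1 M_2$. The fiber is the quiver variety of Figure \ref{quiver for fibers}, realized as the GIT quotient of $\mathrm{Rep}(\tilde X,\tilde Y,\tilde I,\tilde J,S)\cong \bC^{2M_2^2+2KM_2+M_1 M_2}$ by $\mathrm{GL}_{M_2}$, at the moment map $\mu_{\bC}=[\tilde X,\tilde Y]+\tilde I\tilde J=0$, subject to the cyclicity stability $\bC\langle \tilde X,\tilde Y\rangle(\mathrm{Im}(\tilde I)+\mathrm{Im}(S))=\bC^{M_2}$.

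The crucial observation is that $\mu_{\bC}$ does not depend on $S$: it factors as the composition of the (trivially flat) forgetful projection $\mathrm{Rep}(\tilde X,\tilde Y,\tilde I,\tilde J,S)\twoheadrightarrow \mathrm{Rep}(\tilde X,\tilde Y,\tilde I,\tilde J)$ followed by the standard moment map $\mu'_{\bC}$ of the framed ADHM quiver with framing dimension $K$. The hypothesis $K\neq 0$ permits invoking the Crawley--Boevey--Nakajima flatness theorem for $\mu'_{\bC}$: its zero fibre is a complete intersection of the expected codimension $M_2^2$ on a neighbourhood of every point satisfying the classical $\tilde I$-cyclic stability. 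Pulling back along the flat projection, $\mu^{-1}_{\bC}(0)$ is a complete intersection of codimension $M_2^2$ in $\mathrm{Rep}(\tilde X,\tilde Y,\tilde I,\tilde J,S)$ on the same locus; a $\mathrm{GL}_{M_2}$-equivariant deformation $S\rightsquigarrow 0$ extends this bound to the larger $(\tilde I,S)$-stable locus, so that $\mu^{-1}_{\bC}(0)\cap\{(\tilde I,S)\text{-stable}\}$ has dimension $M_2^2+2KM_2+M_1 M_2$. Quotienting by the free $\mathrm{GL}_{M_2}$-action yields $\dim F = 2KM_2 + M_1 M_2$, and combined with $\dim \cM^{N}(M_1)=2NM_1$ this gives the required total dimension and hence the proposition.

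The main obstacle is precisely the fiber-dimension step. While the classical $K\neq 0$ flatness is well-documented on the $\tilde I$-cyclic stable locus, the stability used here is weaker, involving $S$ as well, so one must rule out extra irreducible components of $\mu_{\bC}^{-1}(0)$ of larger-than-expected dimension in the region where $\mathrm{Im}(\tilde I)$ is non-cyclic but $\mathrm{Im}(\tilde I)+\mathrm{Im}(S)$ is cyclic. The sketched deformation argument does this, and conversely it clarifies the role of $NK\neq 0$: if $K=0$ the F-term collapses to the commuting-variety equation $[\tilde X,\tilde Y]=0$, which is of excess dimension, and the regular-embedding conclusion genuinely fails, consistently with Proposition \ref{Fibration over Hilb, K=0}.
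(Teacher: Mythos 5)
Your approach differs from the paper's, and the difference is worth highlighting. The paper argues \emph{globally}: it observes that the full complex moment map $\mu_{\bC}$ on the big representation space ignores the $S$-component, hence factors through the projection to $\mathrm{Rep}(X,Y,I,J)\times \mathrm{Rep}(\tilde X,\tilde Y,\tilde I,\tilde J)$ followed by the product $\mu^N_{\bC}\times \mu^K_{\bC}$ of the two individual ADHM moment maps. Crawley--Boevey's flatness criterion (requiring $N\neq 0$ and $K\neq 0$, respectively) makes each of those flat, so $\mu_{\bC}$ is flat, and therefore $\mu^{-1}_{\bC}(0)$ is a local complete intersection of pure codimension $M_1^2+M_2^2$ everywhere; restricting to the open stable locus and descending along the free quotient gives the l.c.i.\ statement directly. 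You instead argue \emph{fiberwise} through the locally trivial fibration $\pi:\cM^{N,K}(M_1,M_2)\to \cM^N(M_1)$ of Lemma~\ref{Fibration over Hilb} and reduce to a fiber-dimension count, using the same two facts ($\mu_{\bC}$ is $S$-independent, Crawley--Boevey flatness) only for the fiber quiver. Your route is closer in spirit to Proposition~\ref{Fibration over Hilb, K=0} and makes the role of $K\neq 0$ visible in a pleasant way; the paper's route is more economical, handles both signs of $\xi$ at once without a separate symmetric argument, and delivers the l.c.i.\ structure in one stroke rather than via the Koszul criterion applied to a section.

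There is one step in your write-up that you should delete rather than fix: the ``$\mathrm{GL}_{M_2}$-equivariant deformation $S\rightsquigarrow 0$'' is neither needed nor, as sketched, correct. You state that Crawley--Boevey gives a complete intersection of expected codimension ``on a neighbourhood of every point satisfying the classical $\tilde I$-cyclic stability,'' but that under-reads the theorem: flatness of the ADHM moment map with $K\geq 1$ holds on the \emph{entire} representation space (this is precisely equivalence of (1) and (4) in Crawley--Boevey's Theorem 1.1, as the paper cites), so $\mu'^{-1}_{\bC}(0)$ has pure codimension $M_2^2$ everywhere, with no stability hypothesis at all. Since your $\mu^{-1}_{\bC}(0)$ is just the product $\mu'^{-1}_{\bC}(0)\times\mathrm{Hom}(\bC^{M_1},\bC^{M_2})$, it too is pure of the expected dimension, and passing to any open locus (in particular the $(\tilde I,S)$-stable one) preserves pure dimensionality for free. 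The proposed scaling $S\mapsto tS$, $t\to 0$ is in any case the wrong tool: it can move an $(\tilde I,S)$-stable point to an unstable one, so it would push you out of, not into, the locus you want to control. Once you replace the misstatement of Crawley--Boevey with the correct global flatness statement, your argument closes cleanly.
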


\begin{proof}
It suffices to prove that $\cM^{N,K}(M_1,M_2)$ is local complete intersection \footnote{Local complete intersection means locally embeds into a smooth ambient space and is cut out by a regular sequence of equations.} (l.c.i) of pure dimension \footnote{Pure dimension means all irreducible components have the same dimension.} $M_1M_2+2M_1N+2M_2K$, then it automatically follows that locally the set of matrix elements of $\mathcal T$ is a regular sequence of length $M_1M_2$. To this end, we need to show that $\mu_{\bC}^{-1}(0)^{\mathrm{stable}}$ is l.c.i of pure dimension $M_1M_2+2M_1N+2M_2K+M_1^2+M_2^2$, since the quotient map $\mu_{\bC}^{-1}(0)^{\mathrm{stable}}\to \mu_{\bC}^{-1}(0)^{\mathrm{stable}}/\mathrm{GL}_{M_2}\times \mathrm{GL}_{M_1}=\cM^{N,K}(M_1,M_2)$ is a principal $\mathrm{GL}_{M_2}\times \mathrm{GL}_{M_1}$ bundle and l.c.i property descends to a smooth morphism. 

We claim that the map 
$$\mu_{\bC}: \mathbf{M_1}\times \mathbf{M_1}^*\times \mathrm{End}(\mathbf{M_1})^{2}\times \mathbf{M_2}\times \mathbf{M_2}^*\times \mathrm{End}(\mathbf{M_2})^{2}\times \mathrm{Hom}(\mathbf{M_1},\mathbf{M_2})\to \mathfrak{gl}_{M_1}\times \mathfrak{gl}_{M_2}$$
is flat \footnote{Flatness means pullback of short exact sequence of sheaves is a short exact sequence.}. Note that the component $\mathrm{Hom}(\mathbf{M_1},\mathbf{M_2})$ where $S$ takes value in, plays no role in the map $\mu_{\bC}$, so $\mu_{\bC}$ factors through the product of moment maps $$\mu_{\bC}^N\times \mu_{\bC}^K:\left( \mathbf{M_1}\times \mathbf{M_1}^*\times \mathrm{End}(\mathbf{M_1})^{2}\right)\times \left( \mathbf{M_2}\times \mathbf{M_2}^*\times \mathrm{End}(\mathbf{M_2})^{2}\right)\to \mathfrak{gl}_{M_1}\times \mathfrak{gl}_{M_2}.$$
And it is known that $\mu_{\bC}^N:\mathbf{M_1}\times \mathbf{M_1}^*\times \mathrm{End}(\mathbf{M_1})^{2}\to \mathfrak{gl}_{M_1}$ is flat, using Crawley-Boevey's criterion on the flatness of moment map \cite[Theorem 1.1]{Crawley-Boevey:2001} \footnote{We use the equivalence between $(1)$ and $(4)$ in \cite[Theorem 1.1]{Crawley-Boevey:2001}, and easy computation shows that $(4)$ holds.}. Hence both $\mu_{\bC}^N$ and $\mu_{\bC}^K$ are flat and we deduce that $\mu_{\bC}$ is flat.

Since $\{0\}\hookrightarrow \mathfrak{gl}_{M_1}\times \mathfrak{gl}_{M_2}$ is a regular embedding, the flatness of $\mu_{\bC}$ implies that $\mu_{\bC}^{-1}(0)$ embeds into the ambient space regularly, and it has dimension
\ie
\dim \mathbf{M_1}\times \mathbf{M_1}^*\times \mathrm{End}(\mathbf{M_1})^{2}\times \mathbf{M_2}\times \mathbf{M_2}^*\times \mathrm{End}(\mathbf{M_2})^{2}\times \mathrm{Hom}(\mathbf{M_1},\mathbf{M_2})-\dim \mathfrak{gl}_{M_1}\times \mathfrak{gl}_{M_2}\\
=M_1M_2+2M_1N+2M_2K+M_1^2+M_2^2.
\fe
This shows that $\mu_{\bC}^{-1}(0)$ is l.c.i of of pure dimension $M_1M_2+2M_1N+2M_2K+M_1^2+M_2^2$, and obviously its open subset $\mu_{\bC}^{-1}(0)^{\mathrm{stable}}$ has the same property.
\end{proof}

In the proof of the proposition, we show that 
\begin{itemize}
    \item Assume that $\xi\neq 0$ and $NK\neq 0$, then every irreducible component of $\cM^{N,K}(M_1,M_2)$ has dimension $M_1M_2+2M_1N+2M_2K$. In particular, the virtual dimension is the actual dimension.
\end{itemize}

\subsection{Virtual tangent bundle}\label{sec:5.2}

Assume that $\xi\neq 0$ in this subsection. It is well known that the tangent bundle of $\widetilde{\cM}^{N,K}(M_1,M_2)$ is the cohomology of the complex
\ie\label{tangent of auxiliary moduli}
\mathrm{End}(\mathcal V_{1})\oplus \mathrm{End}(\mathcal V_{2})\overset{\sigma}{\longrightarrow} \mathrm{End}(\mathcal V_{1})^{\oplus 2}\oplus \mathrm{End}(\mathcal V_{2})^{\oplus 2} \oplus \mathcal V_{1}^{\oplus N} \oplus \mathcal V_{1}^{*\oplus N}\oplus \mathcal V_{2}^{\oplus K} \oplus \mathcal V_{2}^{*\oplus N}\\
\oplus \mathrm{Hom}(\mathcal V_{1},\mathcal V_{2})\oplus \mathrm{Hom}(\mathcal V_{2},\mathcal V_{1})
\overset{d\mu_{\bC}}{\longrightarrow} \mathrm{End}(\mathcal V_{1})\oplus \mathrm{End}(\mathcal V_{2})
\fe
where the middle term has cohomology degree zero. $d\mu_{\bC}$ is the differential of the moment map, and explicitly it is written as
\ie 
(x,y,\tilde x,\tilde y,i,j,\tilde i,\tilde j,s,t)\mapsto ([X,y]+[x,Y]+iJ+Ij+sT,[\tilde X,\tilde y]+[\tilde x,\tilde Y]+\tilde i\tilde J+\tilde I\tilde j-St).
\fe
$\sigma$ descends from the action of $\mathfrak{gl}_{M_1}\times \mathfrak{gl}_{M_2}$, and explicitly it is written as 
\ie 
(u,v)\mapsto ([u,X],[Y,u],[v,\tilde X],[\tilde Y,v],uI,-Ju,v\tilde I,-\tilde J v,vS-Su,uT-Tv).
\fe
It is easy to verify that $d\mu_{\bC}\circ \sigma=0$. Note that $\sigma$ is injective and $d\mu_{\bC}$ is surjective by stability. Our moduli space $\cM^{N,K}(M_1,M_2)$ embeds into $\widetilde{\cM}^{N,K}(M_1,M_2)$ regularly, and is locally defined by $M_1M_2$ equations coming from matrix elements of $\mathcal T\in \mathrm{Hom}(\mathcal V_{2},\mathcal V_{1})$, so $\cM^{N,K}(M_1,M_2)$ has a \textit{perfect obstruction theory} $T\widetilde{\cM}^{N,K}(M_1,M_2)|_{\cM^{N,K}(M_1,M_2)}\to \mathrm{Hom}(\mathcal V_{2},\mathcal V_{1})$, and this complex is quasi-isomorphic to the tangent complex $\mathbb{T}\cM^{N,K}(M_1,M_2)$. This obstruction theory can be written as a complex of tautological bundles:
\ie\label{tangent of our moduli}
\mathrm{End}(\mathcal V_{1})\oplus \mathrm{End}(\mathcal V_{2})\overset{\sigma}{\longrightarrow} \mathrm{End}(\mathcal V_{1})^{\oplus 2}\oplus \mathrm{End}(\mathcal V_{2})^{\oplus 2} \oplus \mathcal V_{1}^{\oplus N} \oplus \mathcal V_{1}^{*\oplus N}\oplus \mathcal V_{2}^{\oplus K} \oplus \mathcal V_{2}^{*\oplus N}\\
\oplus \mathrm{Hom}(\mathcal V_{1},\mathcal V_{2})
\overset{d\mu_{\bC}}{\longrightarrow} \mathrm{End}(\mathcal V_{1})\oplus \mathrm{End}(\mathcal V_{2})
\fe
The difference between \eqref{tangent of our moduli} and \eqref{tangent of auxiliary moduli} is that $\mathrm{Hom}(\mathcal V_{2},\mathcal V_{1})$ drops out and chain maps are restricted to the rest of components. In \eqref{tangent of our moduli}, $\sigma$ is  injective by stability condition, but $d\mu_{\bC}$ is not necessarily surjective. We shall show that $d\mu_{\bC}$ fails to be surjective at some points on $\cM^{N,K}(M_1,M_2)$.

\subsection{Some subvarieties of moduli space}\label{sec:5.3}

We assume that $\xi>0$ in this subsection. Let us investigate some subvarieties of $\cM^{N,K}(M_1,M_2)$. We start with an open subvariety denoted by $\mathring{\cM}^{N,K}(M_1,M_2)$, and it is defined by the open condition \footnote{This means that the set of point which satisfy this condition is open.} that
$$\bC\langle \tilde X,\tilde Y\rangle \mathrm{Im}(\tilde I)=\bC^{M_2}.$$
Note that under this condition, two gauge nodes satisfy their own stability condition when considered as individual ADHM quivers. So we have a projection
\ie 
p: \mathring{\cM}^{N,K}(M_1,M_2)\to \cM^N(M_1)\times \cM(M_2,K)
\fe
and in fact the projection map is a vector bundle where fibers are maps between $\mathcal V_{1}$ and $\mathcal V_{2}$, where $\mathcal V_{1}$ and $\mathcal V_{2}$ are universal bundles on $\cM^N(M_1)$ and $\cM(M_2,K)$ respectively. 
In other words, we have an isomorphism
\ie\label{Open locus is a Hom bundle}
\mathring{\cM}^{N,K}(M_1,M_2)=\mathrm{Hom}(\mathcal V_{1},\mathcal V_{2})
\fe
Let us also consider the closed subvariety denoted by ${\cM}^{N,K}_{\tilde J=0}(M_1,M_2)$, and it is defined by the closed condition \footnote{This means that the set of point which satisfy this condition is closed.} that
$$\tilde J=0.$$
The same argument of Lemma \ref{Fibration over Hilb} shows that ${\cM}^{N,K}_{\tilde J=0}(M_1,M_2)$ is a locally trivial fibration over $\cM^N(M_1)$ with fibers isomorphic to $\mathrm{Quot}^{M_2}(\mathcal O_{\bC^2}^{K+M_1})$.
\begin{proposition}
Assume that $\xi>0$ and $K=1$, then ${\cM}^{N,1}_{\tilde J=0}(M_1,M_2)$ is the closure of $\mathring{\cM}^{N,1}(M_1,M_2)$.
\end{proposition}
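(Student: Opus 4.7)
The plan is to prove the two inclusions of varieties separately. The containment $\overline{\mathring{\cM}^{N,1}(M_1,M_2)}\subset\cM^{N,1}_{\tilde J=0}(M_1,M_2)$ reduces to showing that $\tilde J=0$ on the open stratum, which I will handle first via a standard Hilbert-scheme argument; the reverse inclusion will then follow by proving irreducibility of the ambient closed stratum and observing that $\mathring{\cM}^{N,1}$ is a non-empty open subset of it.

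For the containment, I will argue as follows. On $\mathring{\cM}^{N,1}(M_1,M_2)$ the data $(\tilde X,\tilde Y,\tilde I,\tilde J)$ at the second node, where the framing rank is $K=1$, satisfies the F-term $[\tilde X,\tilde Y]+\tilde I\tilde J=0$ together with the open cyclicity condition $\bC\langle\tilde X,\tilde Y\rangle\mathrm{Im}(\tilde I)=\bC^{M_2}$. This is precisely the ADHM presentation with Hilbert-scheme stability at framing rank one, and by the classical identification $\cM(1,M_2)\cong\mathrm{Hilb}^{M_2}(\bC^2)$ (Nakajima's lectures) the cyclicity of $\tilde I$ forces $\tilde J=0$. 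Since $\cM^{N,1}_{\tilde J=0}(M_1,M_2)$ is closed in $\cM^{N,1}(M_1,M_2)$, it contains the closure of $\mathring{\cM}^{N,1}$.

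For irreducibility of $\cM^{N,1}_{\tilde J=0}(M_1,M_2)$, I will invoke the fibration statement recorded immediately before the proposition: for general $K$, the space $\cM^{N,K}_{\tilde J=0}(M_1,M_2)$ is a locally trivial fibration over $\cM^N(M_1)$ with fibers isomorphic to $\mathrm{Quot}^{M_2}(\mathcal O_{\bC^2}^{K+M_1})$, so for $K=1$ the fiber is $\mathrm{Quot}^{M_2}(\mathcal O_{\bC^2}^{M_1+1})$. The base $\cM^N(M_1)$ is smooth and irreducible, and the fiber is irreducible by the theorem of Ellingsrud--Lehn on Quot schemes of a smooth surface; a locally trivial fibration with irreducible base and irreducible fibers is itself irreducible. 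To finish, $\mathring{\cM}^{N,1}$ is open in $\cM^{N,1}(M_1,M_2)$ by its defining open condition, hence open in the subvariety $\cM^{N,1}_{\tilde J=0}$; I will check non-emptiness by writing down a point $(X,Y,I,J,\tilde X,\tilde Y,\tilde I,0,S)$ obtained from any point of $\cM^N(M_1)$, any Hilbert-scheme datum $(\tilde X,\tilde Y,\tilde I,0)$ with $\bC\langle\tilde X,\tilde Y\rangle\tilde I=\bC^{M_2}$, and an arbitrary $S\in\mathrm{Hom}(\bC^{M_1},\bC^{M_2})$ (both F-terms and the open cyclicity are immediate). A non-empty open subset of an irreducible variety is dense, so combining with the first step gives $\cM^{N,1}_{\tilde J=0}(M_1,M_2)=\overline{\mathring{\cM}^{N,1}(M_1,M_2)}$.

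The only non-trivial input is the irreducibility of $\mathrm{Quot}^{M_2}(\mathcal O_{\bC^2}^{r})$, which I will cite from Ellingsrud--Lehn rather than reprove; everything else is book-keeping together with the standard rank-one ADHM/Hilbert-scheme identification. I do not anticipate a serious obstacle beyond citing these two classical facts, and I note that the argument is genuinely $K=1$-specific: already for $K\geq 2$ the vanishing of $\tilde J$ fails on the open stratum, so the proposition cannot be expected to extend verbatim.
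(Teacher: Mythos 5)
Your proof is correct and follows the same two-step strategy as the paper: force $\tilde J=0$ on the open stratum from rank-one stability together with the F-term $[\tilde X,\tilde Y]+\tilde I\tilde J=0$, then conclude by irreducibility of $\cM^{N,1}_{\tilde J=0}(M_1,M_2)$ using the locally trivial fibration over $\cM^{N}(M_1)$ with $\mathrm{Quot}^{M_2}(\mathcal O_{\bC^2}^{M_1+1})$ fibers. The only substantive difference is that the paper establishes the Quot-scheme irreducibility in-house (via the Hilbert--Chow map and Baranovsky's theorem on the commuting nilpotent variety) whereas you cite Ellingsrud--Lehn, and your explicit non-emptiness and openness check merely spells out what the paper leaves implicit.
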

\begin{proof}
If $K=1$, then the stability condition $\bC\langle \tilde X,\tilde Y\rangle \mathrm{Im}(\tilde I)=\bC^{M_2}$ together with the equation $[\tilde X,\tilde Y]+\tilde I\tilde J=0$ implies that $\tilde J=0$, so $\mathring{\cM}^{N,1}(M_1,M_2)$ is a subvariety of ${\cM}^{N,1}_{\tilde J=0}(M_1,M_2)$. Moreover, we prove in the appendix that ${\cM}^{N,1}_{\tilde J=0}(M_1,M_2)$ is irreducible (see Proposition \ref{Irreducible J=0 variety}), thus ${\cM}^{N,1}_{\tilde J=0}(M_1,M_2)$ is the closure of $\mathring{\cM}^{N,1}$.
\end{proof}

\subsection{Torus fixed points}\label{sec:5.4}

We first introduce some notations. Define the action of tori $\mathbf{T}_{i},i\in \{1,2,3\}$ as follows:
\ie 
a\in \mathbf{T}_{1}&:(X,Y)\mapsto (aX,a^{-1}Y),\\
b\in\mathbf{T}_{2}&:(\tilde X,\tilde Y)\mapsto (b\tilde X,b^{-1}\tilde Y),\\
c\in\mathbf{T}_{3}&:S\mapsto cS,
\fe
and denote by $\mathbf{T}_I,I\subset\{1,2,3\}$ the torus $\prod_{i\in I}\mathbf{T}_i$. 

The next proposition follows from the same argument as the proof of \cite[Proposition 2.3.1]{Maulik:2012}
\begin{proposition}
Assume that $\xi>0$, then the $\mathbf{T}_3$ fixed points have a disjoint union decomposition:
\ie 
\cM^{N,K}(M_1,M_2)^{\mathbf{T}_3}=\bigsqcup_{M_2^{(1)}+M_2^{(2)}=M_2}\cM^{N,0}(M_1,M_2^{(1)})\times \cM^{0,K}(0,M_2^{(2)}).
\fe
\end{proposition}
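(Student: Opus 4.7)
The plan is to reduce the problem to a weight-space decomposition by lifting the $\mathbf{T}_3$-action from the GIT quotient up to the space of stable tuples. First, I will show that any $\mathbf{T}_3$-fixed orbit admits a representative $(X,Y,I,J,\tilde X,\tilde Y,\tilde I,\tilde J,S)$ together with a one-parameter subgroup $\lambda:\mathbf{T}_3\to \mathrm{GL}_{M_1}\times \mathrm{GL}_{M_2}$ satisfying $\lambda(c)\cdot(X,Y,I,J,\tilde X,\tilde Y,\tilde I,\tilde J,S)=(X,Y,I,J,\tilde X,\tilde Y,\tilde I,\tilde J,cS)$ for every $c$. Uniqueness of $\lambda$ is immediate from the triviality of the $\mathrm{GL}_{M_1}\times \mathrm{GL}_{M_2}$-stabiliser on stable tuples (which follows from Lemma~\ref{Lemma: algebro-goemetric description}), and existence on the algebraic level follows from the same equivariant-lift argument used in \cite[Proposition 2.3.1]{Maulik:2012}.

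Next, I will exploit the rigidity of the first ADHM node. Since $(X,Y,I,J)$ is stable as ordinary ADHM data, its $\mathrm{GL}_{M_1}$-stabiliser is trivial, forcing $\lambda$ to project trivially onto the $\mathrm{GL}_{M_1}$ factor. Writing $\lambda(c)=(1,g(c))$, the defining relations become $g(c)\tilde X=\tilde Xg(c)$, $g(c)\tilde Y=\tilde Yg(c)$, $g(c)\tilde I=\tilde I$, $\tilde Jg(c)=\tilde J$ and $g(c)S=cS$. Decomposing $\mathbb{C}^{M_2}=\bigoplus_{w\in\mathbb{Z}}V_w$ into weight spaces of $g$, these relations imply that each $V_w$ is preserved by $\tilde X,\tilde Y$, that $\mathrm{Im}(\tilde I)\subset V_0$, that $\tilde J$ vanishes outside $V_0$, and that $\mathrm{Im}(S)\subset V_1$. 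The weight-preserving nature of $\tilde X,\tilde Y$ combined with the stability condition $\mathbb{C}\langle\tilde X,\tilde Y\rangle(\mathrm{Im}(\tilde I)+\mathrm{Im}(S))=\mathbb{C}^{M_2}$ then forces $V_w=0$ for all $w\notin\{0,1\}$.

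Set $M_2^{(2)}:=\dim V_0$ and $M_2^{(1)}:=\dim V_1$. The restriction $(\tilde X|_{V_0},\tilde Y|_{V_0},\tilde I,\tilde J)$ satisfies the ADHM equation together with the stability condition $\mathbb{C}\langle\tilde X|_{V_0},\tilde Y|_{V_0}\rangle\mathrm{Im}(\tilde I)=V_0$, thereby defining a point of $\cM^{0,K}(0,M_2^{(2)})$. On $V_1$ both $\tilde I$ and $\tilde J$ vanish, hence $[\tilde X|_{V_1},\tilde Y|_{V_1}]=0$, and the remaining data $(X,Y,I,J,\tilde X|_{V_1},\tilde Y|_{V_1},S)$ with stability $\mathbb{C}\langle\tilde X|_{V_1},\tilde Y|_{V_1}\rangle\mathrm{Im}(S)=V_1$ defines a point of $\cM^{N,0}(M_1,M_2^{(1)})$. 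The decomposition is a disjoint union because the pair $(M_2^{(1)},M_2^{(2)})$ is determined by the canonical weight grading of the lift $\lambda$.

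The main obstacle is the initial step: producing the algebraic one-parameter lift $\lambda$ for a given $\mathbf{T}_3$-fixed orbit. The cleanest route is to quotient first by $\mathrm{GL}_{M_1}$ (which is a principal bundle on the stable locus thanks to the ADHM stability of the first node), so that the residual $\mathrm{GL}_{M_2}$-quotient is a conventional Nakajima-type GIT quotient, to which the equivariant lifting principle of \cite{Maulik:2012} applies verbatim. Everything else in the argument is linear algebra on weight spaces.
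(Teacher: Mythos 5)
Your proof is correct and it is exactly the argument the paper invokes by reference: the paper's proof consists of the single line ``follows from the same argument as the proof of \cite[Proposition~2.3.1]{Maulik:2012},'' i.e.\ lift the $\mathbf{T}_3$-action to a cocharacter of the gauge group using triviality of stabilisers on the stable locus, then decompose into weight spaces. Your write-up fills in that argument with the appropriate adaptation to this quiver (in particular the observation that ADHM stability of $(X,Y,I,J)$ forces the $\mathrm{GL}_{M_1}$-component of the lift to be trivial, so only $\mathbb C^{M_2}$ acquires a grading and the weight support is $\{0,1\}$), which is precisely what one would do to make the cited MO argument explicit here.
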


\begin{remark}
As we have mentioned in the Proposition \ref{Fibration over Hilb, K=0}, $\cM^{N,0}(M_1,M_2^{(1)})$ is a locally trivial fibration over the ADHM moduli space $\cM^N(M_1)$ with fibers isomorphic to $\mathrm{Quot}^{M_2^{(1)}}(\mathcal O_{\bC^2}^{M_1})$. And by definition, $\cM^{0,K}_i(0,M_2^{(2)})$ is nothing but the ADHM moduli space $\cM(M_2^{(2)},K)$.
\end{remark}

\begin{corollary}\label{compactness 1}
Assume that $\xi\neq 0$ and $N=K=1$, then $\cM^{1,1}(M_1,M_2)^{\mathbf{T}_{\{1,2,3\}}}$ is proper (compact).
\end{corollary}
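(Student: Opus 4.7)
The plan is to combine the preceding Proposition, which decomposes the $\mathbf{T}_3$-fixed locus as a finite disjoint union of products, with an analysis of the residual $\mathbf{T}_{\{1,2\}}$-action on each piece. By that Proposition,
\[
\cM^{1,1}(M_1,M_2)^{\mathbf{T}_3}=\bigsqcup_{M_2^{(1)}+M_2^{(2)}=M_2}\cM^{1,0}(M_1,M_2^{(1)})\times \cM^{0,1}(0,M_2^{(2)}),
\]
and since $\mathbf{T}_1$ acts only on $(X,Y)$ while $\mathbf{T}_2$ acts only on $(\tilde X,\tilde Y)$, these two tori act on disjoint factors. Hence the $\mathbf{T}_{\{1,2,3\}}$-fixed locus is a finite disjoint union of products of $\mathbf{T}_1$- and $\mathbf{T}_2$-fixed loci, and it suffices to show each factor is proper.

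The right factor $\cM^{0,1}(0,M_2^{(2)})$ is the $N=1$ ADHM moduli space, i.e.\ $\mathrm{Hilb}^{M_2^{(2)}}(\bC^2)$. Since $\mathbf{T}_2$ acts on $\bC^2$ with weights $(b,b^{-1})$ and fixes only the origin, any $\mathbf{T}_2$-invariant ideal of finite colength must be supported at the origin. Thus $\cM^{0,1}(0,M_2^{(2)})^{\mathbf{T}_2}$ sits inside the projective punctual Hilbert scheme $\mathrm{Hilb}^{M_2^{(2)}}_0(\bC^2)$ and is proper. For the left factor $\cM^{1,0}(M_1,M_2^{(1)})$ I appeal to Proposition \ref{Fibration over Hilb, K=0}: it is a $\mathbf{T}_1$-equivariant locally trivial fibration over $\mathrm{Hilb}^{M_1}(\bC^2)$ with fibers $\mathrm{Quot}^{M_2^{(1)}}(\mathcal O^{M_1}_{\bC^2})$. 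The same argument places the $\mathbf{T}_1$-fixed locus of the base in $\mathrm{Hilb}^{M_1}_0(\bC^2)$, and over each such base point the induced $\mathbf{T}_1$-action on the fiber has fixed locus contained in the corresponding projective punctual Quot scheme. A $\mathbf{T}_1$-equivariant local trivialization around any $\mathbf{T}_1$-fixed base point then shows that the restricted projection $\cM^{1,0}(M_1,M_2^{(1)})^{\mathbf{T}_1}\to \mathrm{Hilb}^{M_1}(\bC^2)^{\mathbf{T}_1}$ is locally a product by a projective variety, hence proper, so the total space is proper.

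The delicate point will be correctly identifying the induced $\mathbf{T}_1$-action on the fiber: since $\mathbf{T}_1$ does not act directly on the fiber variables $(\tilde X,\tilde Y,\tilde I,\tilde J,S)$, at each $\mathbf{T}_1$-fixed base point its action on the fiber must be realized through a compensating cocharacter of $U(M_1)$ (read off from the equivariant structure of the universal bundle $\mathcal V_1$), which acts on $\bC^{M_1}$ and hence on $S$. Combined with the native $\mathbf{T}_1$-action on $\bC^2$, this recovers the standard torus action on the Quot scheme, whose fixed points on length-$M_2^{(1)}$ quotients are again forced to be supported at the origin. Granting this, the conclusion follows because a finite disjoint union of products of proper schemes is proper.
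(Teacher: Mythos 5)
Your overall strategy (decompose the $\mathbf{T}_3$-fixed locus via the preceding Proposition, then argue each factor's residual torus-fixed locus is proper via the Hilbert--Chow map into $\mathrm{Sym}^n\bC^2$, which forces support at the origin) is the same as the paper's. However, your torus bookkeeping contains a genuine error that breaks the argument for the left factor.

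You assert that $\mathbf{T}_1$ and $\mathbf{T}_2$ ``act on disjoint factors,'' so that it suffices to take $\mathbf{T}_1$-fixed points of $\cM^{1,0}(M_1,M_2^{(1)})$ and $\mathbf{T}_2$-fixed points of $\cM^{0,1}(0,M_2^{(2)})$. This is wrong: $\cM^{1,0}(M_1,M_2^{(1)})$ contains the fields $\tilde X,\tilde Y$ (its fibers over $\cM^1(M_1)$ are $\mathrm{Quot}^{M_2^{(1)}}(\mathcal O_{\bC^2}^{M_1})$ with $\tilde X,\tilde Y$ as the coordinates of $\bC^2$), and $\mathbf{T}_2$ acts on these. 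So $\mathbf{T}_2$ acts nontrivially on \emph{both} factors, and the correct statement is $\cM^{1,1}(M_1,M_2)^{\mathbf{T}_{\{1,2,3\}}}=\bigsqcup\,\cM^{1,0}(M_1,M_2^{(1)})^{\mathbf{T}_{\{1,2\}}}\times\cM^{0,1}(0,M_2^{(2)})^{\mathbf{T}_2}$, not a product of $\mathbf{T}_1$- and $\mathbf{T}_2$-fixed loci.

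Your final ``delicate point'' paragraph then tries to prove the false stronger claim that $\cM^{1,0}(M_1,M_2^{(1)})^{\mathbf{T}_1}$ is proper. The phrase ``combined with the native $\mathbf{T}_1$-action on $\bC^2$'' is the culprit: the $\bC^2$ underlying the Quot-scheme fiber is the $(\tilde X,\tilde Y)$-plane, on which $\mathbf{T}_1$ does \emph{not} act. The compensating cocharacter of $U(M_1)$ you invoke acts on $\mathcal O^{M_1}$ and on $S$, but it does nothing to constrain the \emph{support} of the quotient sheaf in $\bC^2_{\tilde X,\tilde Y}$. A concrete counterexample: take $M_1=1$; the base $\cM^1(1)\cong\bC^2$ has unique $\mathbf{T}_1$-fixed point the origin, where the compensating cocharacter is trivial, so $\mathbf{T}_1$ acts trivially on the fiber and $\cM^{1,0}(1,M_2^{(1)})^{\mathbf{T}_1}$ contains the entire $\mathrm{Quot}^{M_2^{(1)}}(\mathcal O_{\bC^2})$, which is not proper. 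The fix is exactly what the paper does: for the left factor use $\mathbf{T}_1$-fixedness on the base $\cM^1(M_1)$ (reducing to $\mathrm{Quot}^{M_1}(\mathcal O_{\bC^2})^{\mathbf{T}_1}$) and $\mathbf{T}_2$-fixedness on the fiber $\mathrm{Quot}^{M_2^{(1)}}(\mathcal O_{\bC^2}^{M_1})^{\mathbf{T}_2}$, each of which is proper by your (correct) Hilbert--Chow argument.
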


\begin{proof}
We only prove for the case $\xi>0$, the case $\xi<0$ is similar. It is enough to show that:
\begin{itemize}
    \item[(1)] $\cM(M_1,1)^{\mathbf{T}_1}$ is proper;
    \item[(2)] $\mathrm{Quot}^{M_2^{(1)}}(\mathcal O_{\bC^2}^{M_1})^{\mathbf{T}_2}$ is proper.
\end{itemize}
Note that if we can show $(1)$ then the same argument shows that $\cM(M_2^{(2)},1)^{\mathbf{T}_2}$ is proper as well.

Notice that $(1)$ is a special case of $(2)$, since $\cM(M_1,1)=\mathrm{Quot}^{M_1}(\mathcal O_{\bC^2})$ and both of the torus act on $\bC^2$ by $(x,y)\mapsto (rx,r^{-1}y)$, where $(x,y)$ is a coordinate system on $\bC^2$. Let us omit the subscripts and show that $\mathrm{Quot}^{M}(\mathcal O_{\bC^2}^{L})^{\mathbf{T}}$ is proper. Recall that the Hilbert-Chow map \footnote{For the definition of Hilbert-Chow map, see for example \cite[Section 7.1]{Fantechi:2005}.} $h:\mathrm{Quot}^{M}(\mathcal O_{\bC^2}^{L})^{\mathbf{T}}\to \mathrm{Sym}^{M}\bC^2$ is proper and $\mathbf{T}$-equivariant, so $\mathrm{Quot}^{M}(\mathcal O_{\bC^2}^{L})^{\mathbf{T}}\subset h^{-1}((\mathrm{Sym}^{M}\bC^2)^{\mathbf{T}})$. Since $(\mathrm{Sym}^{M}\bC^2)^{\mathbf{T}}$ consists of a single point $0$, $\mathrm{Quot}^{M}(\mathcal O_{\bC^2}^{L})^{\mathbf{T}}$ is a closed subset of $h^{-1}(0)$, which is proper. This concludes the proof.
\end{proof}

\begin{remark}
For general $N,K$, the fixed point set $\cM^{N,K}(M_1,M_2)^{\mathbf{T}_{\{1,2,3\}}}$ can be non-compact. Note that $\cM^N(M_1)^{\mathbf{T}_1}$ is the moduli space of $\mathbf{T}_1$-weighted representations of ADHM quiver of gauge rank $M_1$ and flavour rank $N$, so the gauge node $\bC^{M_1}$ decomposes into $\mathbf{T}_1$-weight space, and flavour node $\bC^N$ has $\mathbf{T}_1$-weight zero, and $X,Y,I,J$ are $\mathbf{T}_1$-equivariant. A connected component of $\cM^N(M_1)^{\mathbf{T}_1}$ corresponds to a weight decomposition
$$\bC^{M_1}=\bigoplus_{i=0}^{M_1-1}\bC\cdot e_i,$$
where $e_i$ has weight $i$, $X$ maps $e_i$ to $\lambda_{i}e_{i-1}$ and $Y$ maps $e_i$ to $\mu_i e_{i+1}$ (we set $\lambda_i=0$ if $i\notin \{1,\cdots,M_1-1\}$ and set $\mu_i=0$ if $i\notin \{0,\cdots,M_1-2\}$), and $I,J$ map between $\bC\cdot e_0$ and the framing node $\bC^N$. It can be easily deduced from the stability condition that 
$$\mu_i\neq 0,\;\forall i\in\{0,\cdots, M_1-2\}.$$
It follows from the equation $[X,Y]+IJ=0$ that
$$\lambda_i\mu_{i-1}=\lambda_{i+1}\mu_i,\;\forall i\ge 1.$$
Combine the above two facts with the equation $\lambda_{M_1}=0$, and we deduce that $\lambda_i=0,\forall i$. In the moduli space of quiver representations, we can scale each node, so we can assume that $\mu_i=1$ for $i=0,\cdots,M_1-2$. The remaining moduli parameters are $I:\bC^N\to \bC\cdot e_0$ and $J:\bC\cdot e_0\to \bC^N$ satisfying $IJ=0$ and the stability condition that $I\neq 0$, and it is well-known that this moduli space is $T^*\mathbb{P}^{N-1}$, the cotangent bundle of $\mathbb{P}^{N-1}$. Hence we see that $\cM^N(M_1)^{\mathbf{T}_1}$ contains a connected component isomorphic to $T^*\mathbb{P}^{N-1}$, which is non-compact. 
\end{remark}

A peculiar fact about $K=1$ is that $\cM^{0,1}(0,M_2^{(2)})=\cM^{0,1}_{\tilde J=1}(0,M_2^{(2)})$ (this is a consequence of stability condition together with the F-term equation), so we can actually turn on a larger torus action, namely the two dimensional torus acting on both $\tilde X$ and $\tilde Y$. This torus will be denoted by $\mathbf{T}'_2$. Similarly, if $N=1$ then we have a two dimensional torus acting on $X$ and $Y$.

More generally, we have a five dimensional torus $\mathbf{T}'_1\times\mathbf{T}'_2\times \mathbf{T}_3$ acting on $\cM^{N,K}(M_1,M_2)$, which extends the action of $\mathbf{T}_{\{1,2,3\}}$. The action is defined by 
\ie 
(a_1,a_2)\in \mathbf{T}'_{1}&:(X,Y,I,J)\mapsto (a_1X,a_2Y,I,a_1a_2J),\\
(b_1,b_2)\in\mathbf{T}'_{2}&:(\tilde X,\tilde Y,\tilde I,\tilde J)\mapsto (b_1\tilde X,b_2\tilde Y,\tilde I,b_1b_2\tilde J).
\fe
Denote this torus by $\mathbf{T}_{\mathrm{edge}}$, then we have the following:
\begin{proposition}\label{counting fixed pts}
Under the above assumptions, $\cM^{1,1}(M_1,M_2)^{\mathbf{T}_{\mathrm{edge}}}$ is disjoint union of points \footnote{In the scheme-theoretical sense, i.e. there is no infinitesimal deformation of those points inside the $\mathbf{T}_{\mathrm{edge}}$-fixed points loci.}, points one-to-one correspond to $M_1+2$ copies of Young diagrams $Y^{(1)},Y^{(2)}_0,Y^{(2)}_1,\cdots, Y^{(2)}_{M_1}$, where $|Y^{(1)}|=M_1$ and $\sum_{i=0}^{M_1}|Y^{(2)}_i|=M_2$. Moreover, the generating function for the number of $\mathbf{T}_{\mathrm{edge}}$-fixed points can be written as
\ie\label{fixed pts generating function}
\sum_{M_1,M_2}\#\left(\cM^{1,1}(M_1,M_2)^{\mathbf{T}_{\mathrm{edge}}}\right)x^{M_1}y^{M_2}=\frac{1}{\left(\frac{x}{(y;y)_{\infty}};\frac{x}{(y;y)_{\infty}}\right)_{\infty}(y;y)_{\infty}},
\fe
where $\left(a;y\right)_\infty=\prod^\infty_{k=0}(1-ay^k)$ is the Pochhammer symbol.
\end{proposition}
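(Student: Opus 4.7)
The plan is to split the fixed-point computation into two stages by first applying the $\mathbf{T}_3$-fixed locus decomposition from the previous proposition, which gives
\[
\cM^{1,1}(M_1, M_2)^{\mathbf{T}_3} = \bigsqcup_{M_2^{(1)} + M_2^{(2)} = M_2} \cM^{1,0}(M_1, M_2^{(1)}) \times \cM^{0,1}(0, M_2^{(2)}).
\]
Since the residual torus $\mathbf{T}'_1 \times \mathbf{T}'_2$ acts factor-wise, the $\mathbf{T}_{\mathrm{edge}}$-fixed locus reduces to the $(\mathbf{T}'_1 \times \mathbf{T}'_2)$-fixed locus on each summand, which I would then analyse independently.

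The second factor $\cM^{0,1}(0, M_2^{(2)})$ is just $\mathrm{Hilb}^{M_2^{(2)}}(\bC^2)$ (with trivial $\mathbf{T}'_1$-action) and its $\mathbf{T}'_2$-fixed locus is the classical set of monomial ideals, in bijection with Young diagrams $Y^{(2)}_0$ of size $M_2^{(2)}$. For the first factor I would exploit the fibration $\pi:\cM^{1,0}(M_1, M_2^{(1)}) \to \cM^1(M_1) = \mathrm{Hilb}^{M_1}(\bC^2)$ from Proposition \ref{Fibration over Hilb, K=0} with fiber $\mathrm{Quot}^{M_2^{(1)}}(\mathcal O_{\bC^2}^{M_1})$. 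The $\mathbf{T}'_1$-fixed points on the base are indexed by a Young diagram $Y^{(1)}$ of size $M_1$; at such a base point the universal bundle restricts as $\mathcal V_1|_{Y^{(1)}} = \bigoplus_{(i,j) \in Y^{(1)}} \bC_{(i,j)}$ to a direct sum of one-dimensional $\mathbf{T}'_1$-weight spaces with pairwise distinct characters. By distinctness of those characters combined with semisimplicity of $\mathbf{T}'_1$-equivariant coherent sheaves, a $(\mathbf{T}'_1 \times \mathbf{T}'_2)$-equivariant torsion quotient of $\mathcal V_1 \otimes \mathcal O_{\bC^2_{\tilde X, \tilde Y}}$ decomposes as a direct sum of $\mathbf{T}'_2$-equivariant monomial-ideal quotients of the individual summands $\mathcal O_{\bC^2}$, each corresponding to a Young diagram $Y^{(2)}_i$ for $i = 1, \dots, M_1$ with $\sum_{i=1}^{M_1} |Y^{(2)}_i| = M_2^{(1)}$. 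Combining the two stages yields a bijection between fixed points and tuples $(Y^{(1)}, Y^{(2)}_0, Y^{(2)}_1, \dots, Y^{(2)}_{M_1})$ with $|Y^{(1)}| = M_1$ and $\sum_{i=0}^{M_1} |Y^{(2)}_i| = M_2$. Scheme-theoretic isolatedness then follows by reading the $\mathbf{T}_{\mathrm{edge}}$-characters off the virtual tangent complex \eqref{tangent of our moduli} at each fixed tuple: the Nakajima--Yoshioka arm/leg-length formulas give non-trivial Laurent monomials in the equivariant parameters, which never vanish for generic torus values.

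The generating function identity is then a short bookkeeping. Summing over $Y^{(1)}$ gives a factor $\sum_{M_1 \geq 0} p(M_1) x^{M_1}$ weighted by the $y$-series coming from the $M_1+1$ independent Young diagrams $Y^{(2)}_0, \dots, Y^{(2)}_{M_1}$, each of which contributes $1/(y;y)_\infty$. Therefore
\[
\sum_{M_1, M_2} \#\bigl(\cM^{1,1}(M_1, M_2)^{\mathbf{T}_{\mathrm{edge}}}\bigr)\, x^{M_1} y^{M_2} = \frac{1}{(y;y)_\infty} \sum_{M_1 \geq 0} p(M_1) \left(\frac{x}{(y;y)_\infty}\right)^{M_1} = \frac{1}{\bigl(\tfrac{x}{(y;y)_\infty}; \tfrac{x}{(y;y)_\infty}\bigr)_\infty (y;y)_\infty},
\]
where the last equality applies Euler's identity $\sum_{n \geq 0} p(n) t^n = 1/(t;t)_\infty$ with $t = x/(y;y)_\infty$.

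The hard part will be the fiber splitting argument: showing rigorously that a $(\mathbf{T}'_1 \times \mathbf{T}'_2)$-equivariant torsion quotient of $\mathcal V_1 \otimes \mathcal O_{\bC^2}$ at a $Y^{(1)}$-fixed base point splits as a direct sum across the $M_1$ distinct weight components of $\mathcal V_1$ rather than mixing them in a non-semisimple way. The clean route is to note that $\mathcal V_1 \otimes \mathcal O_{\bC^2}$ is already a direct sum of line bundles over $\bC^2$ carrying pairwise distinct $\mathbf{T}'_1$-characters, and then invoke Schur's lemma for the $\mathbf{T}'_1$-isotypic components of the kernel of the quotient map, but one must verify that the kernel as a coherent subsheaf (not just as a $\bC$-vector space) respects the weight decomposition. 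A secondary subtlety is scheme-theoretic reducedness at points where $\cM^{1,1}(M_1, M_2)$ is singular, which I would address by pulling back to the ambient smooth Nakajima variety $\widetilde{\cM}^{1,1}(M_1, M_2)$ and confirming that the $\mathbf{T}_{\mathrm{edge}}$-fixed locus is already discrete there by the same weight analysis.
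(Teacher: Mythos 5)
Your argument is correct and is essentially the paper's proof: both decompose via the $\mathbf{T}_3$-fixed locus, identify the $\mathbf{T}'_2$-fixed points of $\cM^{0,1}(0,M_2^{(2)})$ with partitions, split the Quot-type fiber over a base fixed point $Y^{(1)}$ into $M_1$ independent monomial-ideal pieces using the pairwise distinct $\mathbf{T}'_1$-weights of $\cV_1$, and then count exactly as you do. The kernel-splitting subtlety you flag as the hard part evaporates in the quiver-matrix picture the paper uses, since $\tilde X,\tilde Y$ carry $\mathbf{T}'_1$-weight zero and $S$ is $\mathbf{T}'_1$-equivariant, so the finite-dimensional node $\bC^{M_2^{(1)}}$ decomposes into $\mathbf{T}'_1$-isotypic blocks that $\tilde X,\tilde Y$ automatically preserve and $S$ maps onto weight-by-weight, which together with stability forces each block to be an ADHM datum framed by the corresponding $e_{i,j}$.
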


\begin{proof}
Recall that the edge torus fixed points of the ADHM quiver variety $\cM^{1,0}(M,0)$ is disjoint union of points which are one-to-one correspond to Young diagrams $Y$ with $|Y|=M$. More precisely, a fixed point $p\in \cM^{1,0}(M,0)^{\mathbf{T}'_1}$ corresponds to a unique $\mathbf{T}'_1$-equivariant quiver representation of which $\mathbb C^M$ decomposes into $\mathbf{T}'_1$ weight spaces
$$\mathbb C^M=\bigoplus_{(i,j)\in Y} \mathbb C\cdot e_{i,j},$$
where $(i,j)\in Y$ is the $(i,j)$ box in the Young diagram $Y$ and $e_{i,j}$ is a vector of weight $(-i,-j)$ under the $\mathbf{T}'_1$ action. We set $e_{i,j}=0$ if $(i,j)\notin Y$, then $X$ maps $e_{i,j}$ to $e_{i+1,j}$ and $Y$ maps $e_{i,j}$ to $e_{i,j+1}$ and $I$ maps the flavour vector to $e_{0,0}$ and all other arrows are zero.

Back to our situation, we have 
$$\cM^{1,1}(M_1,M_2)^{\mathbf{T}_{\mathrm{edge}}}=\bigsqcup_{M_2^{(1)}+M_2^{(2)}=M_2}\cM^{1,0}(M_1,M_2^{(1)})^{\mathbf{T}'_1\times\mathbf{T}'_2}\times \cM^{0,1}(0,M_2^{(2)})^{\mathbf{T}'_2}.$$From the discussion above, we see that $\cM^{0,1}(0,M_2^{(2)})^{\mathbf{T}'_2}$ is disjoint union of points labelled by Young diagrams with $M_2^{(2)}$ boxes, and there is a map $\cM^{1,0}(M_1,M_2^{(1)})^{\mathbf{T}'_1\times\mathbf{T}'_2}\to \cM^{1,0}(M_1,0)^{\mathbf{T}'_1}$ and the latter is also disjoint union of points labelled by Young diagrams with $M_1$ boxes. Let us fix Young diagrams $Y^{(1)}$ and $Y_0^{(2)}$ such that $|Y^{(1)}|=M_1$ and $|Y_0^{(2)}|=M_2^{(2)}$. Then $Y^{(1)}$ corresponds to a point $p$ in $\cM^{1,0}(M_1,0)^{\mathbf{T}'_1}$, and a point in the preimage of $p$ in $\cM^{1,0}(M_1,M_2^{(1)})^{\mathbf{T}'_1\times\mathbf{T}'_2}$ corresponds to a $\mathbf{T}'_1\times\mathbf{T}'_2$-equivariant quiver representation of which $\mathbb C^M$ decomposes into $\mathbf{T}'_1\times \mathbf{T}'_2$ weight spaces
$$\mathbb C^M=\bigoplus_{(i,j)\in Y^{(1)}} \mathbb C\cdot e_{i,j},$$
where $(i,j)\in Y$ is the $(i,j)$ box in the Young diagram $Y^{(1)}$ and $e_{i,j}$ is a vector of weight $(-i,-j)$ under the $\mathbf{T}'_1$ action and of weight zero under the $\mathbf{T}'_2$ action. Next we decompose $\mathbb C^{M_2^{(1)}}$ into $\mathbf{T}'_1$ weight spaces:
$$\mathbb C^{M_2^{(1)}}=\bigoplus_{(i,j)\in Y^{(1)}} V_{i,j},$$where $V_{i,j}$ has $\mathbf{T}'_1$ weight $(-i,-j)$. By equivariance, $S$ maps $e_{i,j}$ to $V_{i,j}$ and $\tilde X,\tilde Y$ maps $V_{i,j}$ to itself, and by stability, we have $\mathbb C\langle \tilde X,\tilde Y\rangle S(e_{i,j})=V_{i,j}$, this shows that the restriction of $\tilde X,\tilde Y$ to $V_{i,j}$ is a stable ADHM quiver representation, where the flavour vector is $e_{i,j}$. In other word, we show that the preimage of $p$ in $\cM^{1,0}(M_1,M_2^{(1)})^{\mathbf{T}'_1\times\mathbf{T}'_2}$ is isomorphic to
$$\bigsqcup_{\substack{N_1,\cdots,N_{M_1}\ge 0\\N_1+\cdots +N_{M_1}=M_2^{(1)}}}\cM^{0,1}(0,N_{1})^{\mathbf{T}'_2}\times \cdots\times \cM^{0,1}(0,N_{M_1})^{\mathbf{T}'_2}.$$ Each product factor $\cM^{0,1}(0,N_i)^{\mathbf{T}'_2}$ is disjoint union of points which are one-to-one correspond to Young diagrams $Y^{(2)}_i$ with $N_i$ boxes. Hence we prove that $\cM^{1,1}(M_1,M_2)^{\mathbf{T}_{\mathrm{edge}}}$ is disjoint union of points which are one-to-one correspond to $M_1+2$ copies of Young diagrams $Y^{(1)},Y^{(2)}_0,Y^{(2)}_1,\cdots, Y^{(2)}_{M_1}$, where $|Y^{(1)}|=M_1$ and $\sum_{i=0}^{M_1}|Y^{(2)}_i|=M_2$.

To enumerate fixed points, notice that 
\ie
    \sum_{M_1} \left(\cM^{1,0}(M_1,0)^{\mathbf{T}_{\mathrm{edge}}}\right)x^{M_1}=\sum_{M_1}p(M_1)x^{M_1}=\prod_{i\ge 1}^{\infty}\frac{1}{1-x^i}=\frac{1}{(x;x)},
\fe
where $p(n)$ is Euler's partition function counting the number of partitions of $n$ in to positive integers. Thus we have 
\begin{align*}
        &\sum_{M_1,M_2}\left(\cM^{1,1}(M_1,M_2)^{\mathbf{T}'_{1}}\right)x^{M_1}y^{M_2}\\
        =&\left(\sum_{M_1}\left(\cM^{1,0}(M_1,0)^{\mathbf{T}'_{1}}\right)\sum_{N_1,\cdots,N_{M_1}\ge 0}\prod_{i=1}^{M_1}\left(\cM^{0,1}(0,N_{i})^{\mathbf{T}'_2}\right)x^{M_1}y^{N_1+\cdots+N_{M_1}}\right)\\
        &\times \left(\sum_{M_2^{(2)}} \left(\cM^{0,1}(0,M_2^{(2)})^{\mathbf{T}'_{2}}\right)y^{M_2^{(2)}}\right)\\
        =&\left(\sum_{M_1}p(M_1)\left(\frac{x}{(y;y)}\right)^{M_1}\right)\times \left(\sum_{M_2^{(2)}}p(M_2^{(2)})y^{M_2^{(2)}}\right)\\
        =&\frac{1}{\left(\frac{x}{(y;y)_{\infty}};\frac{x}{(y;y)_{\infty}}\right)_{\infty}(y;y)_{\infty}}.
\end{align*}
\end{proof}

In the next subsection, we compute the equivariant K-theory index of $\cM^{1,1}(M_1,M_2)$ using $\mathbf{T}_{\mathrm{edge}}$ localization. This is a weighted version of \eqref{fixed pts generating function}, where the $\bC$ coefficients will be replaced by $K_{\mathbf{T}_{\mathrm{edge}}}(\mathrm{pt})_{\mathrm{loc}}$ coefficients, the localized ring of characters of $\mathbf{T}_{\mathrm{edge}}$.

\begin{proposition}\label{compactness 2}
Assume that $\xi\neq 0$, then $\cM^{N,K}(M_1,M_2)^{\mathbf{T}_{\mathrm{edge}}}$ is proper (compact).
\end{proposition}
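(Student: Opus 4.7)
The plan is to follow the strategy of Corollary \ref{compactness 1}, but to use the larger torus $\mathbf{T}_{\mathrm{edge}}=\mathbf{T}'_{1}\times\mathbf{T}'_{2}\times\mathbf{T}_{3}$ in place of $\mathbf{T}_{\{1,2,3\}}$ in order to compensate for the fact that, when $N$ or $K$ is larger than $1$, the fixed locus $\cM^{N}(M_1)^{\mathbf{T}_1}$ (resp.\ $\cM(M_2,K)^{\mathbf{T}_2}$) is no longer compact (the ``$T^{*}\mathbb{P}^{N-1}$ obstruction'' discussed in the remark after Corollary \ref{compactness 1}).

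First, since $\mathbf{T}_{3}\subset \mathbf{T}_{\mathrm{edge}}$, we have $\cM^{N,K}(M_1,M_2)^{\mathbf{T}_{\mathrm{edge}}}\subset \cM^{N,K}(M_1,M_2)^{\mathbf{T}_{3}}$, and the latter decomposes as
\[
\cM^{N,K}(M_1,M_2)^{\mathbf{T}_{3}}=\bigsqcup_{M_2^{(1)}+M_2^{(2)}=M_2}\cM^{N,0}(M_1,M_2^{(1)})\times \cM^{0,K}(0,M_2^{(2)}),
\]
as recalled above. Since the residual torus $\mathbf{T}'_{1}\times\mathbf{T}'_{2}$ acts factor by factor, it suffices to show that each factor, when restricted to its fixed locus under the corresponding torus, is proper.

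For the second factor, $\cM^{0,K}(0,M_2^{(2)})$ is the standard rank-$K$ ADHM moduli space, on which $\mathbf{T}'_{2}$ acts with full rank. It is classical that $\cM^{0,K}(0,M_2^{(2)})^{\mathbf{T}'_{2}}$ is a finite set of reduced points, indexed by $K$-tuples of Young diagrams with $M_2^{(2)}$ total boxes, hence it is proper. For the first factor, I would use Proposition \ref{Fibration over Hilb, K=0}: $\cM^{N,0}(M_1,M_2^{(1)})$ is a locally trivial fibration $\pi$ over $\cM^{N}(M_1)$ with fibre $\mathrm{Quot}^{M_2^{(1)}}(\mathcal O_{\bC^2}^{M_1})$. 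Here $\mathbf{T}'_{1}$ acts on both the base and the fibres, while $\mathbf{T}'_{2}$ acts only on the fibres through its action on $\bC^{2}$. Taking $\mathbf{T}'_{1}$-fixed points of the base yields a finite set of isolated points, labelled by $N$-tuples of Young diagrams with $M_{1}$ total boxes (the standard ADHM fixed-point classification). Over each such point $p$ we then intersect the fibre $\mathrm{Quot}^{M_2^{(1)}}(\mathcal O_{\bC^2}^{M_1})$ with the $\mathbf{T}'_{1}\times\mathbf{T}'_{2}$-fixed locus; it is enough to bound this by the $\mathbf{T}'_{2}$-fixed locus.

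The $\mathbf{T}'_{2}$-fixed locus of $\mathrm{Quot}^{M_2^{(1)}}(\mathcal O_{\bC^2}^{M_1})$ is proper by the Hilbert--Chow argument used in the proof of Corollary \ref{compactness 1}: the Hilbert--Chow morphism $h$ to $\mathrm{Sym}^{M_2^{(1)}}\bC^{2}$ is proper and $\mathbf{T}'_{2}$-equivariant, and $(\mathrm{Sym}^{M_2^{(1)}}\bC^{2})^{\mathbf{T}'_{2}}$ consists of the single point $M_2^{(1)}\!\cdot\!0$, so $\mathrm{Quot}^{M_2^{(1)}}(\mathcal O_{\bC^2}^{M_1})^{\mathbf{T}'_{2}}\subset h^{-1}(M_2^{(1)}\!\cdot\!0)$ is a closed subvariety of a proper variety. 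The further $\mathbf{T}'_{1}$-fixed locus inside this proper variety is again proper (a closed subvariety). Putting the two factors together, both $\cM^{N,0}(M_1,M_2^{(1)})^{\mathbf{T}'_{1}\times\mathbf{T}'_{2}}$ and $\cM^{0,K}(0,M_2^{(2)})^{\mathbf{T}'_{2}}$ are proper, so their product is proper, and the disjoint union over the finitely many splittings $M_2^{(1)}+M_2^{(2)}=M_2$ is proper. This gives properness of $\cM^{N,K}(M_1,M_2)^{\mathbf{T}_{\mathrm{edge}}}$.

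The main technical input that has to be checked is the use of the fibration $\pi$ of Proposition \ref{Fibration over Hilb, K=0}, since its conclusion was stated only for $\xi>0$; for $\xi<0$ one would run the parallel argument using the stability condition in the remark after Lemma \ref{Lemma: algebro-goemetric description}, which swaps the roles of $(I,\tilde I,S)$ and $(J,\tilde J)$ and, by replacing $\mathrm{Quot}$ with the dual nested structure, leads to the same Hilbert--Chow properness statement for the $\mathbf{T}'_{2}$-fixed locus. Apart from this sign check, the only real content is the Hilbert--Chow reduction, which is already handled in Corollary \ref{compactness 1}.
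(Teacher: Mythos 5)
There is a genuine gap in your reduction: you assert that ``taking $\mathbf{T}'_1$-fixed points of the base yields a finite set of isolated points, labelled by $N$-tuples of Young diagrams with $M_1$ total boxes,'' and similarly that $\cM^{0,K}(0,M_2^{(2)})^{\mathbf{T}'_2}$ is a finite reduced set indexed by $K$-tuples of Young diagrams. That is the classification of fixed points for $\mathbf{T}'_1\times\mathbf{T}_{\mathrm{framing}}$, but $\mathbf{T}_{\mathrm{edge}}$ does \emph{not} contain the framing torus, and for $N>1$ (resp.\ $K>1$) the $\mathbf{T}'_1$-fixed locus of $\cM^N(M_1)$ is genuinely positive-dimensional. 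Already for $N=2$, $M_1=1$ one computes $\cM^2(1)^{\mathbf{T}'_1}\cong\mathbb{P}^1$ (every component of $(X,Y,I,J)$ is forced to zero by equivariance except $I$, and the quotient of $\{I\neq 0\}$ by $\mathrm{GL}_1$ is $\mathbb{P}^1$). So the ``finite set of isolated points'' step fails, and with it your pointwise reduction to the fibre.

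The missing idea, which the paper uses, is the observation that the $\mathbf{T}'_1$ action forces $J=0$ at any fixed point of $\cM^N(M_1)$: the stability condition $\bC\langle X,Y\rangle\mathrm{Im}(I)=\bC^{M_1}$ forces the $\mathbf{T}'_1$-weights on $\bC^{M_1}$ to be on one side, while the framing node has weight zero and $J$ is scaled by $a_1a_2$, so no nonzero equivariant $J$ can exist. Hence $\cM^N(M_1)^{\mathbf{T}'_1}=\mathrm{Quot}^{M_1}(\mathcal O_{\bC^2}^N)^{\mathbf{T}'_1}$, which is then proper by exactly the Hilbert--Chow argument you already invoke for the fibre. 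Replacing your finiteness claim by this $J=0$ observation (and the analogous one on the other factor) closes the gap, and in fact collapses most of your intermediate steps: once the base fixed locus is identified with a $\mathbf{T}'_1$-fixed Quot scheme, both factors are handled by the same Hilbert--Chow reduction used in Corollary~\ref{compactness 1}. Your remark about the $\xi<0$ case is fine as stated.
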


\begin{proof}
We only need to show that $\cM^N(M_1)^{\mathbf{T}'_1}$ is proper. $\cM^N(M_1)^{\mathbf{T}'_1}$ is the moduli of stable representations of ADHM quivers where the gauge node is a $\mathbf{T}'_1$ representation, and the flavour node is a trivial $\mathbf{T}'_1$ representation, and $(X,Y,I,J)$ are equivariant maps. The stability condition $\bC\langle X,Y,\rangle\mathrm{Im}(I)=\bC^{M_1}$ implies that the $\mathbf{T}'_1$ weights of $\bC^{M_1}$ are positive. Since $J$ has weight $(1,1)$ and $\bC^N$ has weight zero, $J$ must be zero. This shows that $\cM^N(M_1)^{\mathbf{T}'_1}=\mathrm{Quot}^{M_1}(\mathcal O_{\bC^2}^N)^{\mathbf{T}'_1}$, and the latter is proper as we have shown in the proof of Corollary \ref{compactness 1}.
\end{proof}

\subsection{Equivariant K-theory index}\label{sec:5.5}
Following \cite{Nekrasov:2014nea}, the equivariant K-theory index of the moduli space $\cM^{1,1}(M_1,M_2)$ is defined by
\ie 
\chi(\cM^{1,1}(M_1,M_2),\mathcal O^{\mathrm{vir}})
\fe
where $\mathcal O^{\mathrm{vir}}$ is the virtual structure sheaf of $\cM^{1,1}(M_1,M_2)$. Since our moduli space $\cM^{1,1}(M_1,M_2)$ is l.c.i by Proposition \ref{prop: regular embedding}, the virtual structure sheaf is the actual structure sheaf, i.e. $\mathcal O^{\mathrm{vir}}=\mathcal O$. The equivariant K-theory partition function is defined by
\ie
Z_{1,1}=\sum_{M_1,M_2}\chi(\cM^{1,1}(M_1,M_2),\mathcal O^{\mathrm{vir}})x^{M_1}y^{M_2}=\sum_{M_1,M_2}\chi(\cM^{1,1}(M_1,M_2))x^{M_1}y^{M_2}
\fe
Using equivariant localization, the $\mathbf{T}_{\mathrm{edge}}$-equivariant K-theory index of $\cM^{1,1}(M_1,M_2)$ can be written in terms of $\mathbf{T}_{\mathrm{edge}}$-fixed points:
\ie 
\chi(\cM^{1,1}(M_1,M_2),\mathcal O)=\chi\left(\cM^{1,1}(M_1,M_2)^{\mathbf{T}_{\mathrm{edge}}},\left(S^{\bullet}(T^{\mathrm{vir}})^*\right)|_{\cM^{1,1}(M_1,M_2)^{\mathbf{T}_{\mathrm{edge}}}}\right)
\fe
where $T^{\mathrm{vir}}$ is the virtual tangent bundle \eqref{tangent of our moduli}. 

Let us denote the equivariant parameters of $\mathbf{T}'_1$ by $r_1,r_2$, and denote the equivariant parameters of $\mathbf{T}'_2$ by $s_1,s_2$, and that of $\mathbf{T}_3$ by $t$. Then the $\mathbf{T}_{\mathrm{edge}}$-equivariant K-theory class of the virtual tangent bundle \eqref{tangent of our moduli} can be written as 
\ie\label{virtual tangent character}
\cV_1+\cV_1^*r_1r_2-\cV_1\cV_1^*(1-r_1)(1-r_2)+\cV_2+\cV_2^*s_1s_2-\cV_2\cV_2^*(1-s_1)(1-s_2)+t\cV_2\cV_1^*
\fe
where $\cV_1$ and $\cV_2$ are universal bundles on $\cM^{1,1}(M_1,M_2)$ of rank $M_1$ and $M_2$ respectively, note that they are equivariant under the $\mathbf{T}_{\mathrm{edge}}$ action. At a fixed point labelled by $$(Y^{(1)},Y^{(2)}_0,Y^{(2)}_1,\cdots, Y^{(2)}_{M_1})$$ the fibers of $\cV_1$ and $\cV_2$ are representations of $\mathbf{T}_{\mathrm{edge}}$, and their $\mathbf{T}_{\mathrm{edge}}$ characters can be described as following. Let us introduce the notation
\ie 
w\left(Y^{(1)}\right)=\sum_{(i,j)\in Y^{(1)}}r_1^{-i}r_2^{-j},\; w\left(Y^{(2)}_{\alpha}\right)=\sum_{(i,j)\in Y^{(2)}_{\alpha}}s_1^{-i}s_2^{-j}, \alpha\in \{0,\cdots,M_1\}.
\fe
From the description of the $\mathbf{T}_{\mathrm{edge}}$ fixed points, we know that $Y^{(2)}_1,\cdots,Y^{(2)}_{M_1}$ are one-to-one correspond to boxes $(i,j)$ in the diagram $Y^{(1)}$, so we use the notation $Y^{(2)}_{(i,j)}$ to indicate their relationship to $Y^{(1)}$. Then the characters of $\cV_1$ and $\cV_2$ are
\ie 
\cV_1=w\left(Y^{(1)}\right),\; \cV_2=w\left(Y^{(2)}_0\right)+t^{-1}\sum_{(i,j)\in Y^{(1)}}r_1^{-i}r_2^{-j}w\left(Y^{(2)}_{(i,j)}\right)
\fe
For example, if $\Phi=\mathcal O$, then the $\mathbf{T}_{\mathrm{edge}}$-equivariant K-theory index is
\ie\label{Index for Phi=0}
\chi(\cM^{1,1}(M_1,M_2),\mathcal O)=\sum_{Y^{(1)},Y^{(2)}_0,Y^{(2)}_{(i,j)}}S^{\bullet}(T^{\mathrm{vir}})^*
\fe
where $T^{\mathrm{vir}}$ is given by \eqref{virtual tangent character}, and $S^{\bullet}(T^{\mathrm{vir}})$ is the plethystic exponential of the virtual character $T^{\mathrm{vir}}$, and the $*$ operator maps $(r_1,r_2,s_1,s_2,t)$ to $(r_1^{-1},r_2^{-1},s_1^{-1},s_2^{-1},t^{-1})$.
\\
\newline
{\bf{Example}}\\
\newline
Let us compute the example $M_1=M_2=1$ explicitly. In this case there is only one possible $Y^{(1)}$, which is a single box at $(0,0)$. There are two $\mathbf{T}_{\mathrm{edge}}$-fixed points:
\ie
Y^{(2)}_0&=\square, Y^{(2)}_1=\emptyset,\; \text{and}\\
Y^{(2)}_0&=\emptyset, Y^{(2)}_1=\square, 
\fe
and the virtual tangent characters at these fixed points are
\ie 
T^{\mathrm{vir}}&=r_1+r_2+s_1+s_2+t,\; \text{and}\\
T^{\mathrm{vir}}&=r_1+r_2+s_1+s_2+(t-1)s_1s_2+t^{-1},
\fe
respectively. Plug these into \eqref{Index for Phi=0} we get
\ie
\chi(\cM^{1,1}(1,1),\mathcal O)&=\left(\frac{1}{(1-r_1)(1-r_2)(1-s_1)(1-s_2)}\left(\frac{1}{1-t}+\frac{1-s_1s_2}{(1-ts_1s_2)(1-t^{-1})}\right)\right)^*\\
&=\frac{1}{(1-r_1^{-1})(1-r_2^{-1})(1-s_1^{-1})(1-s_2^{-1})}\left(\frac{1}{1-t^{-1}}+\frac{1-s_1^{-1}s_2^{-1}}{(1-t^{-1}s_1^{-1}s_2^{-1})(1-t)}\right)\\
&=\frac{r_1r_2s_1^2s_2^2t}{(1-r_1)(1-r_2)(1-s_1)(1-s_2)(ts_1s_2-1)}
\fe
\noindent{\bf{The limit $s_1s_2\to 1$}}\\
\newline
In the example that we computed above, we notice that when $s_1s_2\to 1$, the contribution from the fixed point $Y^{(2)}_0=\emptyset, Y^{(2)}_1=\square$ vanishes. This is not a coincidence:
\begin{proposition}\label{prop: the limit}
In the limit $s_1s_2\to 1$, we have 
\ie 
\chi(\cM^{1,1}(M_1,M_2),\mathcal O)_{s_1s_2\to 1}=\chi(\mathring{\cM}^{1,1}(M_1,M_2),\mathcal O)_{s_1s_2\to 1}.
\fe
In other words, the fixed points outside of $\mathring{\cM}^{1,1}(M_1,M_2)$ do not contribute in this limit.
\end{proposition}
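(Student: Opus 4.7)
The plan is to apply $\mathbf{T}_{\mathrm{edge}}$-equivariant K-theoretic localization on both sides of the asserted identity and exhibit a universal vanishing factor at each fixed point of $\cM^{1,1}(M_1,M_2)$ lying outside $\mathring{\cM}^{1,1}(M_1,M_2)$. As a preliminary reduction, since $\mathring{\cM}^{1,1}(M_1,M_2)\cong \mathrm{Hom}(\cV_1,\cV_2)$ is a $\mathbf{T}_{\mathrm{edge}}$-equivariant vector bundle over $\cM^{1}(M_1)\times \cM^{1}(M_2)$, its $\mathbf{T}_{\mathrm{edge}}$-fixed locus is contained in the zero section and corresponds exactly to those configurations $(Y^{(1)},Y^{(2)}_0,Y^{(2)}_{(i,j)})$ of Proposition~\ref{counting fixed pts} with $Y^{(2)}_{(i,j)}=\emptyset$ for every $(i,j)\in Y^{(1)}$. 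At any such point the virtual tangent of $\cM^{1,1}$ restricts to the honest tangent of $\mathring{\cM}^{1,1}$, so the localized contributions agree on both sides, and the claim reduces to showing that any fixed point $p$ with some $Y^{(2)}_{(i_0,j_0)}\neq\emptyset$ contributes zero to $\chi(\cM^{1,1}(M_1,M_2),\mathcal O)$ in the limit $s_1s_2\to 1$.

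The mechanism is already visible in the worked $M_1=M_2=1$ example, and I would show that it persists in general by computing the coefficient of the pure monomial $s_1s_2$ in $T^{\mathrm{vir}}|_p$. Write $\cV_2|_p=\alpha+\beta$ with $\alpha=w(Y^{(2)}_0)$ and $\beta=t^{-1}\sum_{(i,j)\in Y^{(1)}} r_1^{-i}r_2^{-j}\,w(Y^{(2)}_{(i,j)})$. The $\cV_1$-only summands of $T^{\mathrm{vir}}$ carry no $s$-weight; the term $t\cV_2\cV_1^*$ is impure in $t$ or in $r$ and so contributes no pure $s_1s_2$ monomial; the cross pieces $-(\alpha\beta^*+\beta\alpha^*)(1-s_1)(1-s_2)$ are likewise impure in $t$; the ``$\tilde I$-sector'' $\alpha+\alpha^* s_1s_2-\alpha\alpha^*(1-s_1)(1-s_2)$ is the tangent character of $\mathrm{Hilb}^{|Y^{(2)}_0|}(\bC^2)$ at $Y^{(2)}_0$, equal to the standard hook sum $\sum_{\square}(s_1^{l+1}s_2^{-a}+s_1^{-l}s_2^{a+1})$ and thus devoid of $s_1s_2$. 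The only remaining contribution comes from the ``$S$-sector'' piece $-\beta\beta^*(1-s_1)(1-s_2)$, whose pure-in-$s$ part comes from the diagonal $r$-pairings in $\beta\beta^*$ and equals $\sum_{(i,j)\in Y^{(1)}}|w(Y^{(2)}_{(i,j)})|^2$. Invoking the elementary identity $[\,|w(Y)|^2(1-s_1)(1-s_2)\,]_{s_1s_2}=\mathbf 1_{Y\neq\emptyset}$ (an immediate consequence of the rank-one ADHM tangent formula $W^*V+V^*W s_1s_2-V V^*(1-s_1)(1-s_2)=\sum_{\square}(s_1^{l+1}s_2^{-a}+s_1^{-l}s_2^{a+1})$ applied with $W=1$ and $V=w(Y)$) yields
\[
[\,T^{\mathrm{vir}}|_p\,]_{s_1s_2} \;=\; -\,\#\bigl\{(i,j)\in Y^{(1)}:Y^{(2)}_{(i,j)}\neq\emptyset\bigr\}.
\]

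With this coefficient in hand, the conclusion follows immediately: when $p\notin\mathring{\cM}^{1,1}(M_1,M_2)$ the right-hand side is a strictly negative integer $-k$ with $k\ge 1$, so the localized contribution of $p$ contains a factor $(1-(s_1s_2)^{-1})^k$ in its numerator, which vanishes to order $k$ at $s_1s_2=1$. No denominator factor can compensate, because the same computation shows that $T^{\mathrm{vir}}|_p$ contains no positive pure $s_1s_2$ monomials; any weight that degenerates to the trivial character along $s_1s_2=1$ is forced to be of the pure form $s_1s_2$, and these appear only in the numerator. Summing localized contributions over all fixed points, only the contributions from fixed points inside $\mathring{\cM}^{1,1}(M_1,M_2)$ survive in the limit, which is the desired identity.

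The step I expect to be the main obstacle is a clean justification that the $s_1s_2\to 1$ limit can be taken termwise in the localization sum, i.e., that no denominator factor $(1-w^{-1})^{-1}$ along any fixed-point contribution becomes singular on this locus. The coefficient computation above rules this out uniformly, but to make the argument airtight one must also track the mixed weights of the form $s_1s_2\cdot t^{a}r^{b}$ appearing in $T^{\mathrm{vir}}|_p$ and verify that none of them trivialise along $s_1s_2=1$ for generic values of the remaining equivariant parameters; this is a direct consequence of the explicit form of $\beta\beta^*$ and of the hook-sum identity for $Y^{(2)}_0$, so it introduces no genuinely new difficulty.
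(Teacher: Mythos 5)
Your approach matches the paper's in spirit: localize with respect to $\mathbf{T}_{\mathrm{edge}}$, isolate the components of $T^{\mathrm{vir}}|_p$ that can produce weights trivializing along $s_1s_2=1$, and show the net effect is a zero of the localized contribution. Your bookkeeping of which pieces of \eqref{virtual tangent character} matter (the $t$-impure and $r$-impure pieces drop out, the Hilbert-scheme hook sums $T(\cV_1)$ and $T(\alpha)$ contain no pure $(s_1s_2)$-powers, and the residual contribution comes from $-\beta\beta^*(1-s_1)(1-s_2)$ together with the $\cV_2^{(1)}\cV_1^*$ cross-term) is essentially what the paper does, and your identity $[\,|w(Y)|^2(1-s_1)(1-s_2)\,]_{s_1s_2}=\mathbf 1_{Y\neq\emptyset}$, read off from the rank-one ADHM tangent formula, is correct.

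There is, however, a genuine gap in the final step. You extract only the coefficient of the single monomial $(s_1s_2)^1$ and then assert that ``any weight that degenerates to the trivial character along $s_1s_2=1$ is forced to be of the pure form $s_1s_2$.'' This is false. \emph{Every} pure power $(s_1s_2)^k$ with $k\neq 0$ degenerates to the trivial character along $s_1s_2\to 1$, and such weights with $k\geq 2$ genuinely occur in $T^{\mathrm{vir}}|_p$: for instance, the $r$-invariant part of $-t\cV_2\cV_1^*-\cV_2^{(1)}\cV_2^{(1)*}(1-s_1)(1-s_2)$ can be rewritten as $\sum_{(i,j)}T(w(Y^{(2)}_{(i,j)}))-\sum_{(i,j)}w(Y^{(2)}_{(i,j)})^*s_1s_2$, and a diagonal box $(m,m)\in Y^{(2)}_{(i,j)}$ with $m\geq 1$ contributes $-(s_1s_2)^{m+1}$ in the second sum. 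Consequently, knowing $[T^{\mathrm{vir}}|_p]_{(s_1s_2)^1}<0$ does not by itself preclude a compensating pole coming from a \emph{positive}-coefficient pure power $(s_1s_2)^k$ for some $k\neq 0,1$ in $T^{\mathrm{vir}}|_p$; the quantity that actually controls the order of vanishing of $S^\bullet(T^{\mathrm{vir}}|_p)^*$ at $s_1s_2=1$ is the \emph{total} signed count of all pure $(s_1s_2)^k$ weights with $k\neq 0$, which is precisely what the paper's ``constant term of $T^{\mathrm{vir}}_{s_1s_2\to 1}$'' refers to. To close the gap you would need to redo your coefficient calculation for all $k$ simultaneously; doing so one finds that the pure-power content of $T^{\mathrm{vir}}|_p$ equals $-\sum_{(i,j)\in Y^{(1)}}\sum_{m\ge 0}\mathbf 1_{(m,m)\in Y^{(2)}_{(i,j)}}(s_1s_2)^{m+1}$, so every pure-power coefficient is nonpositive and the $(s_1s_2)^0$ coefficient vanishes by cancellation between $-w(Y^{(2)}_{(i,j)})$ and the $\cV_2^{(1)}\cV_1^*$ cross-term. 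Once that is in place your conclusion holds — the total is minus the number of diagonal boxes across all $Y^{(2)}_{(i,j)}$, strictly negative precisely when some $Y^{(2)}_{(i,j)}$ is nonempty — but as written the justification is incomplete and the italicized claim about ``pure form $s_1s_2$'' is wrong.
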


\begin{proof}
Suppose that we have a fixed point labelled by $M_1+2$ Young diagrams $Y^{(1)},Y^{(2)}_0,Y^{(2)}_{(i,j)}$, where $(i,j)$ takes value in coordinate of box in $Y^{(1)}$. We need to show that $S^{\bullet}(T^{\mathrm{vir}})$ vanishes in the limit $s_1s_2\to 1$ when there is an $(i,j)\in Y^{(1)}$ such that $Y^{(2)}_{(i,j)}$ is non-empty. We claim that it is enough to show that the virtual character $T^{\mathrm{vir}}_{s_1s_2\to 0}$ has negative constant term. To prove the claim, we notice that constant term in $T^{\mathrm{vir}}_{s_1s_2\to 0}$ comes from $$\sum_{k}a_k(s_1s_2)^k-\sum_{l}b_l(s_1s_2)^l$$in $T^{\mathrm{vir}}$, where $a_k,b_l$ are positive integers. After taking the plethystic exponential, this term becomes
$$\frac{\prod_l(1-s_1^ls_2^l)^{b_l}}{\prod_k(1-s_1^ks_2^k)^{a_k}}$$and this becomes zero in the limit $s_1s_2\to 1$ if $\sum_l b_l>\sum_ka_k$, i.e. $T^{\mathrm{vir}}_{s_1s_2\to 0}$ has negative constant term.

Therefore, it remains to compute the constant term in $T^{\mathrm{vir}}_{s_1s_2\to 0}$. Let us introduce some notations. Define
\ie 
\cV_2^{(0)}=w\left(Y^{(0)}_0\right),\; \cV_2^{(1)}=\sum_{(i,j)\in Y^{(1)}}r_1^{-i}r_2^{-j}w\left(Y^{(2)}_{(i,j)}\right),
\fe
so $\cV_2=\cV_2^{(0)}+t^{-1}\cV_2^{(1)}$. For every $V\in K_{\bC^2}(\mathrm{pt})=\bC[u_1,u_1^{-1},u_2,u_2^{-1}]$, define
\ie 
T(V)=V+V^*-VV^*(1-u_1)(1-u_2)
\fe
For example, let $\bC^2$ be $\mathbf{T}'_1$ then $u_1=r_1,u_2=r_2$, and $T(\cV_1)=\cV_1+\cV_1^*r_1r_2-\cV_1\cV_1^*(1-r_1)(1-r_2)$. Using this notation, we expand \eqref{virtual tangent character} as following:
\ie\label{expand virtual tangent character}
T^{\mathrm{vir}}&=T(\cV_1)+T(\cV_2^{(0)})+t^{-1}\cV_2^{(1)}+ts_1s_2\cV_2^{(1)}-\cV_2^{(1)}\cV_2^{(1)*}(1-s_1)(1-s_2)\\
&-t\cV_2^{(0)}\cV_2^{(1)*}(1-s_1)(1-s_2)-t^{-1}\cV_2^{(0)*}\cV_2^{(1)}(1-s_1)(1-s_2)+t\cV_2^{(0)}\cV_1^*+\cV_2^{(1)}\cV_1^*.
\fe
Recall that if $V$ is the weight space of a Young diagram $Y$, then \cite[Lemma 6]{Carlsson:2008}
\ie\label{simple tangent character}
T(V)=\sum_{\square\in Y}u_1^{-l(\square)}u_2^{a(\square)+1}+\sum_{\square\in Y}u_1^{l(\square)+1}u_2^{-a(\square)},
\fe
where $a(\square)$ and $l(\square)$ are arm-length and leg-length respectively:
\ie 
a(\square)=\#\{j'>j|(i,j')\in Y\},\; l(\square)=\#\{i'>i|(i',j)\in Y\}.
\fe 
In particular, we see that constant terms in $T(\cV_1)_{s_1s_2\to 0}$ and $T(\cV_2^{(0)})_{s_1s_2\to 0}$ are zero. Next, we simply drop all the other terms in \eqref{expand virtual tangent character} which involve $t$ or $t^{-1}$ because they can not be constant in the limit $s_1s_2\to 0$. Therefore, we only need to compute the constant term in 
\ie\label{reduced virtual tangent character}
\cV_2^{(1)}\cV_1^*-\cV_2^{(1)}\cV_2^{(1)*}(1-s_1)(1-s_2)
\fe
in the limit $s_1s_2\to 0$. Let us take $\mathbf{T}'_1$-invariant first, this amounts to take those terms in \eqref{reduced virtual tangent character} which do not involve $r_1$ or $r_2$, and the result is
\ie
\sum_{(i,j)\in Y^{(1)}}w\left(Y^{(2)}_{(i,j)}\right)-\sum_{(i,j)\in Y^{(1)}}w\left(Y^{(2)}_{(i,j)}\right)w\left(Y^{(2)}_{(i,j)}\right)^*(1-s_1)(1-s_2).
\fe
Notice that this can be rewritten as 
\ie 
\sum_{(i,j)\in Y^{(1)}}T\left(w\left(Y^{(2)}_{(i,j)}\right)\right)-\sum_{(i,j)\in Y^{(1)}}w\left(Y^{(2)}_{(i,j)}\right)^*s_1s_2,
\fe
and each individual $T\left(w\left(Y^{(2)}_{(i,j)}\right)\right)$ has zero constant term in the limit $s_1s_2\to 0$, as can be seen from \eqref{simple tangent character}. Thus the constant term of $T^{\mathrm{vir}}_{s_1s_2\to 1}$ equals to the constant term of 
\ie 
\sum_{(i,j)\in Y^{(1)}}w\left(Y^{(2)}_{(i,j)}\right)^*_{s_1s_2\to 1},
\fe
which is minus the number of non-empty Young diagrams in $Y^{(2)}_{(i,j)}$. In particular, it is negative if one of $Y^{(2)}_{(i,j)}$ is non-empty. This concludes the proof of proposition.
\end{proof}
\noindent{\bf{The index of the open subset $\mathring{\cM}^{1,1}(M_1,M_2)$}}\\
\newline
We have seen in \eqref{Open locus is a Hom bundle} that $\mathring{\cM}^{1,1}(M_1,M_2)$ is the vector bundle $\mathrm{Hom}(\cV_1,\cV_2)$, so its ring of functions is 
\ie 
\bigoplus_{k\ge 0}t^kS^k(\cV_2\otimes \cV_1^*)
\fe 
as a $\mathbf{T}_{\mathrm{edge}}$-equivariant sheaf on $\cM(M_1,1)\times \cM(M_2,1)$. It is easy to see that
\ie\label{decomposition of symmetric product}
S^k(\cV_2\otimes \cV_1^*)=\bigoplus_{|\underline{\lambda}|=k} S^{\underline{\lambda}}(\cV_2)\otimes S^{\underline{\lambda}}(\cV_1^*),
\fe 
where $\underline\lambda$ is a partition of $k$ (given by a Young diagram) and $S^{\underline{\lambda}}(\cV_2)$ is the irreducible representation of $\mathrm{GL}(\cV_2)$ defined by
\ie 
\mathrm{Hom}_{S_k}(R_{\underline\lambda},\cV_2^{\otimes k}).
\fe 
Here $R_{\underline\lambda}$ is the irreducible representation of permutation group $S_k$ with the Young diagram $\underline\lambda$. For example,
\ie 
S^{(k)}(\cV_2)=S^k(\cV_2),\; S^{(1,1,\cdots, 1)}(\cV_2)=\wedge^k(\cV_2).
\fe 
Using the decomposition \eqref{decomposition of symmetric product}, we can write the $\mathbf{T}_{\mathrm{edge}}$-equivariant K-theory index of $\mathring{\cM}^{1,1}(M_1,M_2)$ as
\ie 
\chi(\mathring{\cM}^{1,1}(M_1,M_2),\mathcal O)=\sum_{\underline\lambda}t^{|\underline\lambda|}\chi(\cM(M_1,1),S^{\underline{\lambda}}(\cV_1^*))\chi(\cM(M_2,1),S^{\underline{\lambda}}(\cV_2)),
\fe 
where the sum is for all Young diagrams $\underline\lambda$. Note that
\ie 
\chi(\cM(M_1,1),S^{\underline{\lambda}}(\cV_1^*))=r_1^{-2M_1}r_2^{-2M_1}\chi(\cM(M_1,1),S^{\underline{\lambda}}(\cV_1))^*.
\fe 
Therefore, we have 
\ie\label{factorization of index}
\chi(\mathring{\cM}^{1,1}(M_1,M_2),\mathcal O)=r_1^{-2M_1}r_2^{-2M_1}\sum_{\underline\lambda}t^{|\underline\lambda|}\chi(\cM(M_1,1),S^{\underline{\lambda}}(\cV_1))^*\chi(\cM(M_2,1),S^{\underline{\lambda}}(\cV_2)).
\fe 
Recall that the equivariant K-theory partition function of $\cM^{1,1}$ is defined as
\ie 
Z_{1,1}=\sum_{M_1,M_2}\chi(\cM^{1,1}(M_1,M_2),\mathcal O)x^{M_1}y^{M_2}.
\fe 
Combine \eqref{factorization of index} with Propistion \ref{prop: the limit}, we then have
\begin{proposition}\label{prop: factorization}
In the limit $r_1r_2\to 1,s_1s_2\to 1$, the equivariant K-theory partition function of $\cM^{1,1}$ factorizes as
\ie 
Z_{1,1}=\sum_{\underline\lambda}t^{|\underline\lambda|}\mathcal Z^{\underline{\lambda}}_1(r_1,r_2,x)_{r_1r_2\to 1}^*\mathcal Z^{\underline{\lambda}}_1(s_1,s_2,y)_{s_1s_2\to 1}
\fe 
where $\mathcal Z^{\underline{\lambda}}_1(r_1,r_2,x)$ is the equivariant K-theory partition function for ADHM quiver:
\ie\label{halffactor}
\mathcal Z^{\underline{\lambda}}_1(r_1,r_2,x)=\sum _{M}\chi(\cM(M,1),S^{\underline{\lambda}}(\cV))x^M.
\fe 
\end{proposition}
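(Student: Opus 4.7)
The plan is to combine the two preceding ingredients: Proposition \ref{prop: the limit}, which reduces the index $\chi(\cM^{1,1}(M_1,M_2),\mathcal O)$ in the limit $s_1 s_2 \to 1$ to a sum over fixed points lying in the open subset $\mathring{\cM}^{1,1}(M_1,M_2)$, together with equation \eqref{Open locus is a Hom bundle}, which identifies this open subset with the total space of the Hom bundle $\mathrm{Hom}(\cV_1, \cV_2)$ over the product $\cM(M_1,1) \times \cM(M_2,1)$ of ordinary rank-one ADHM moduli spaces. Granted these two inputs, the factorization statement becomes an essentially formal exercise.

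First I would push the structure sheaf of $\mathrm{Hom}(\cV_1,\cV_2)$ forward along the projection to its base, so that the index is realised as the Euler characteristic of $\bigoplus_{k \ge 0} t^k\, S^k(\cV_2 \otimes \cV_1^*)$ on $\cM(M_1,1) \times \cM(M_2,1)$, with the variable $t$ tracking the $\mathbf{T}_3$-scaling of the fiber. Second, I would apply the Cauchy / Schur-Weyl decomposition $S^k(V \otimes W) = \bigoplus_{|\underline\lambda| = k} S^{\underline\lambda}(V) \otimes S^{\underline\lambda}(W)$ to split each summand as an external tensor product, and then invoke Künneth to separate the index into a product of two indices on the base factors. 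This produces a formula of the shape of \eqref{factorization of index}, but with one factor involving $S^{\underline\lambda}(\cV_1^*)$ rather than $S^{\underline\lambda}(\cV_1)$.

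The final step is to convert between $\cV_1$ and $\cV_1^*$ using the duality $\chi(\cM(M_1,1), S^{\underline\lambda}(\cV_1^*)) = r_1^{-2M_1} r_2^{-2M_1}\, \chi(\cM(M_1,1), S^{\underline\lambda}(\cV_1))^*$, and then to observe that passing to $r_1 r_2 \to 1$ absorbs the prefactor $r_1^{-2M_1} r_2^{-2M_1}$ uniformly in $M_1$, so that the generating series in $x$ reassembles precisely into $\mathcal Z^{\underline\lambda}_1(r_1,r_2,x)^*_{r_1 r_2 \to 1}$. The symmetric factor $\mathcal Z^{\underline\lambda}_1(s_1,s_2,y)_{s_1 s_2 \to 1}$ appears directly from the second Künneth factor.

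The main obstacle along this route is the very first step, namely the geometric reduction to the open locus $\mathring{\cM}^{1,1}(M_1,M_2)$ in the limit $s_1 s_2 \to 1$. This requires a careful analysis of the constant term of the virtual tangent character $T^{\mathrm{vir}}_{s_1 s_2 \to 1}$ at each $\mathbf{T}_{\mathrm{edge}}$-fixed point labelled by $(Y^{(1)}, Y^{(2)}_0, Y^{(2)}_{(i,j)})$, showing that this constant term is strictly negative unless every $Y^{(2)}_{(i,j)}$ attached to a box of $Y^{(1)}$ is empty, which in turn forces the plethystic exponential to vanish. Fortunately this is exactly the content of Proposition \ref{prop: the limit}, so once that is invoked the remaining steps are routine equivariant K-theory manipulations; commutation of the limits with the summations is unproblematic because each summand is a concrete rational function of the equivariant parameters.
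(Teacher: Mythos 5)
Your proposal follows exactly the same route as the paper: invoke Proposition \ref{prop: the limit} to reduce the index to the open locus $\mathring{\cM}^{1,1}(M_1,M_2)$ in the limit $s_1s_2\to 1$, identify that locus with the Hom bundle, push forward to obtain the symmetric algebra $\bigoplus_k t^k S^k(\cV_2\otimes\cV_1^*)$, apply the Cauchy decomposition and Künneth, and finally use the duality $\chi(\cM(M_1,1),S^{\underline\lambda}(\cV_1^*)) = r_1^{-2M_1}r_2^{-2M_1}\chi(\cM(M_1,1),S^{\underline\lambda}(\cV_1))^*$ and the limit $r_1r_2\to 1$ to absorb the prefactor. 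This is correct and matches the paper's argument step for step.
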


\begin{remark}
Consider the ADHM quiver with framing rank one and gauge rank $M_1$, and an extra framing node with rank $M_2$, see Figure \ref{WittenIndexQuiver}, then it is easy to see that the corresponding quiver variety is the universal bundle $\cV$ on the ADHM quiver without extra framing. Note that there is an action of $\mathbf{T}'_1\times \mathbf{T}_3$ on the quiver variety and the equivariant K-theory index for this variety is 
\ie 
\sum_{\underline{\lambda}}\chi(\cM(M,1),S^{\underline{\lambda}}(\cV))t^{|\underline\lambda|}.
\fe
This is exactly the coefficient in the summation \eqref{halffactor}.
\end{remark}

For general $N,K$, we know that $\cM^{N,K}(M_1,M_2)^{\mathbf{T}_{\mathrm{edge}}}$ is compact, so $\mathbf{T}_{\mathrm{edge}}$-equivariant K-theory index of $\cM^{N,K}(M_1,M_2)$ is well-defined. To compute this index, we turn on the maximal torus of the flavour symmtry, i.e. consider the diagonal action of $\mathbf{T}_{\mathrm{framing}}=\bC^{\times N}\times \bC^{\times N}$ on the framing vector space, and then take the $\mathbf{T}_{\mathrm{edge}}\times \mathbf{T}_{\mathrm{framing}}$ fixed points of $\cM^{N,K}(M_1,M_2)$ and this will be disjoint union of points (scheme-theoretically), one could follow the same argument in the computation of $N=K=1$ case and compute the $\mathbf{T}_{\mathrm{edge}}\times \mathbf{T}_{\mathrm{framing}}$-equivariant K-theory index of $\cM^{N,K}(M_1,M_2)$ and then take $\mathbf{T}_{\mathrm{framing}}$ equivariant parameters to $1$ and this would be the $\mathbf{T}_{\mathrm{edge}}$-equivariant K-theory index of $\cM^{N,K}(M_1,M_2)$.

\subsection{Connection to the 5d Chern-Simons algebra of operators}\label{app:c.6}

Notice that in the $t\to 0, r_1r_2\to 1,s_1s_2\to 1$ limit, the equivariant K-theory partition function of $\cM^{1,1}$ equals 
\ie 
\sum_{M,N}\chi(\cM(M,1),\mathcal O)\chi(\cM(N,1),\mathcal O)x^My^N,
\fe 
i.e. it is a product of equivariant K-theory partition functions of ADHM quivers:
\ie 
\lim_{t\to 0}Z_{1,1}=Z_{\mathrm{ADHM}}(r_1,r_2,x)_{r_1r_2\to 1}Z_{\mathrm{ADHM}}(s_1,s_2,y)_{s_1s_2\to 1}.
\fe 
Since the equivariant K-theory of ADHM quiver varieties admits a structure of Verma modules of the affine quantum group $U_q(\hat{\mathfrak{gl}}_1)$, we see that $t=0$ part of $Z_{1,1}$ is a character of a Verma module for $U_q(\hat{\mathfrak{gl}}_1)\otimes U_q(\hat{\mathfrak{gl}}_1)$. After passing from K-theory to cohomology, part of $Z_{1,1}$ is a character of a Verma module for the affine Yangian of $\mathfrak{gl}_1\oplus \mathfrak{gl}_1$, which is exactly the algebra of gauge invariant local observables of the 5d holomorphic Chern-Simons theory with gauge group $U(1)\times U(1)$.

\section{Connectedness of $\cM^{N,K}(M_1,M_2)$}
In this appendix, we show that
\begin{proposition}
$\cM^{N,K}(M_1,M_2)$ is connected.
\end{proposition}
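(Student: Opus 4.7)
The strategy is to use the fibration
\[
\pi : \cM^{N,K}(M_1,M_2) \longrightarrow \cM^N(M_1)
\]
established in Lemma \ref{Fibration over Hilb}, which is locally trivial with typical fiber $F$ the quiver variety of Figure \ref{quiver for fibers}. Since a locally trivial fibration with connected base and connected fiber has connected total space, the problem reduces to connectedness of the base and of $F$. The base $\cM^N(M_1)$ is the moduli space of framed rank-$N$ torsion-free sheaves on $\mathbb{P}^2$ with second Chern number $M_1$, which is smooth and irreducible by classical results, hence connected.

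The crux is therefore to show that $F$ is connected. By Lemma \ref{Lemma: algebro-goemetric description} applied to the fiber quiver,
\[
F = \left\{(\tilde X,\tilde Y,\tilde I,\tilde J,S) \,:\, [\tilde X,\tilde Y]+\tilde I\tilde J = 0\right\}^{\mathrm{stable}} / \mathrm{GL}_{M_2},
\]
with stability $\bC\langle\tilde X,\tilde Y\rangle(\mathrm{Im}(\tilde I)+\mathrm{Im}(S)) = \bC^{M_2}$. Since $\mathrm{GL}_{M_2}$ is connected, it suffices to connect the stable locus in the numerator. The argument splits according to whether $K=0$ or $K\ge 1$.

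For $K=0$, Proposition \ref{Fibration over Hilb, K=0} identifies $F$ with the Quot scheme $\mathrm{Quot}^{M_2}(\mathcal{O}_{\bC^2}^{M_1})$, which is irreducible (hence connected) by classical results on Quot schemes of trivial sheaves on $\bC^2$ due to Ellingsrud-Lehn. For $K\ge 1$, introduce the $\bC^\times$-action $\sigma_t : S\mapsto tS$, which preserves the moment map and commutes with the gauge action. Its fixed locus in $F$ is $\{S=0\}$, where the combined stability degenerates to the standard ADHM stability, so $F^{\bC^\times}\cong \cM^K(M_2)$, which is irreducible. On the open subset $\mathring F\subset F$ where $(\tilde X,\tilde Y,\tilde I,\tilde J)$ is itself ADHM-stable, the $t\to 0$ limit of $\sigma_t$ lies in $F^{\bC^\times}$, yielding a deformation retraction of $\mathring F$ onto $\cM^K(M_2)$, so $\mathring F$ is connected. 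It then remains to show that $\mathring F$ is dense in $F$: given $p=(\tilde X,\tilde Y,\tilde I,\tilde J,S)\in F\setminus\mathring F$, the subspace $W:=\bC\langle\tilde X,\tilde Y\rangle\mathrm{Im}(\tilde I)\subsetneq\bC^{M_2}$ is proper, while the combined stability of $p$ forces $\mathrm{Im}(S)\not\subset W$. Perturbing $\tilde I$ by a small vector in $\mathrm{Im}(S)\setminus W$ and compensating with a simultaneous deformation of $(\tilde X,\tilde Y,\tilde J)$ preserving $[\tilde X,\tilde Y]+\tilde I\tilde J=0$ --- which is possible because the moment map is submersive at stable points by Crawley-Boevey's flatness criterion \cite[Theorem 1.1]{Crawley-Boevey:2001} --- produces a curve through $p$ whose generic member lies in $\mathring F$.

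The main obstacle is the density step in the case $K\ge 1$: one must verify that the compensating perturbation of $(\tilde X,\tilde Y,\tilde J)$ can be carried out globally while remaining inside the stable locus, not merely at the infinitesimal level. A cleaner alternative --- which would moreover yield irreducibility in addition to connectedness --- would be to exhibit $F$ as a flat family of Quot schemes over the ADHM moduli $\cM^K(M_2)$, deducing irreducibility from irreducibility of Quot schemes in families.
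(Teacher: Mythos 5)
You reduce to connectedness of the fiber $F$ of the fibration $\pi$, which is a sensible first step, but the argument for $K\ge 1$ has a genuine gap that you have yourself flagged, and a minor error besides.

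The minor error: the $\sigma_t$-fixed locus inside $F$ is not $\{S=0\}$. If $\tilde g_t\in\mathrm{GL}_{M_2}$ intertwines $\sigma_t$ with the gauge action, then $\bC^{M_2}$ splits into $\bC^\times$-weight spaces $V_0\oplus V_1$, with $\tilde I$ landing in $V_0$, $S$ in $V_1$, and $\tilde X,\tilde Y$ block-diagonal; the piece $\{S=0\}$ is only the component with $V_1=0$. This is exactly the content of the paper's description of $\cM^{N,K}(M_1,M_2)^{\mathbf{T}_3}$ as a disjoint union over $M_2^{(1)}+M_2^{(2)}=M_2$. Your retraction of $\mathring F$ onto $\cM^K(M_2)$ is still fine, since on $\mathring F$ the ADHM datum is already stable and the $t\to 0$ limit exists there, but your description of the full fixed locus is incorrect.

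The decisive gap is the density of $\mathring F$ in $F$. Submersiveness of $\mu_{\bC}$ lets you deform inside $\mu_\bC^{-1}(0)$ at the infinitesimal level, but that does not produce a one-parameter family through a given $p\in F\setminus\mathring F$ whose generic member satisfies the \emph{strictly stronger} open condition $\bC\langle\tilde X,\tilde Y\rangle\mathrm{Im}(\tilde I)=\bC^{M_2}$; one also has to preserve the (coarser) stability defining $F$ along the family, and you give no control on that. Moreover, since $\mathring F$ is a vector bundle over the irreducible variety $\cM^K(M_2)$, density of $\mathring F$ in $F$ is essentially equivalent to irreducibility of $F$, which is what you are trying to establish, so the argument is close to circular as written.

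The paper avoids both problems with a different $\bC^\times$: scale \emph{all} quiver maps $(X,Y,I,J,\tilde X,\tilde Y,\tilde I,\tilde J,S)$ by $r$. This contracts the affine quotient $\cM^{N,K}_{\xi=0}(M_1,M_2)$ to the origin, and the semisimplification map $q:\cM^{N,K}(M_1,M_2)\to\cM^{N,K}_{\xi=0}(M_1,M_2)$ is projective and equivariant, so the valuative criterion guarantees that \emph{every} point has a well-defined $r\to 0$ limit in $\cM^{N,K}(M_1,M_2)$. The weight analysis then forces that limit into $\cM^{N,K}_{\tilde J=0}(M_1,M_2)$, which is irreducible by the Quot-scheme fibration over $\cM^N(M_1)$ (Proposition \ref{Irreducible J=0 variety}). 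Note this argument also handles $K=0$ and $K\ge 1$ uniformly. If you prefer the fiberwise route, the fix is to run the same full contraction inside $F$, scaling $\tilde X,\tilde Y,\tilde I,\tilde J,S$ simultaneously and using properness of $F\to F_{\xi=0}$: then every point of $F$ flows into $F_{\tilde J=0}\cong\mathrm{Quot}^{M_2}(\mathcal O_{\bC^2}^{M_1+K})$, which is irreducible, replacing the uncontrolled density step by a robust degeneration argument.
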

First of all, if $\xi=0$, then the moduli space is the affine quotient of the solutions to ADHM equations \eqref{ADHM eqn} by $\mathrm{GL}_N\times \mathrm{GL}_K$. Consider the action of $\bC^{\times}$ by
\ie\label{loop contraction}
(X,Y,I,J,\tilde X,\tilde Y,\tilde I,\tilde J,S)\mapsto (rX,rY,rI,rJ,r\tilde X,r\tilde Y,r\tilde I,r\tilde J,rS), 
\fe 
then it contracts the $\cM^{N,K}_{\xi=0}(M_1,M_2)$ to a single point $(0,0,\cdots,0)$, thus $\cM^{N,K}_{\xi=0}(M_1,M_2)$ is connected. 

Next, we show that $\cM^{N,K}(M_1,M_2)$ is connected when $\xi>0$, and the $\xi<0$ case is similar. Consider the $\bC^{\times}$ action \eqref{loop contraction}, its fixed points in $\cM^{N,K}(M_1,M_2)$ correspondes to quiver representations such that $\bC^{M_1}$ and $\bC^{M_2}$ are $\bZ$-graded and all arrows in the quiver increase the grade by one. From the stability condition, $\bC\langle \tilde X,\tilde Y\rangle (\mathrm{Im}(\tilde I)+\mathrm{Im}(S))=\bC^{M_2}$ we deduce that $\bC^{M_2}$ must be positively graded. Since the framing vector space has grade zero, we conclude that $\tilde J=0$, i.e. $\cM^{N,K}(M_1,M_2)^{\bC^{\times}}\subset \cM^{N,K}_{\tilde J=0}(M_1,M_2)$.

Note that the natural projection $q:\cM^{N,K}(M_1,M_2)\to \cM^{N,K}_{\xi=0}(M_1,M_2)$ defined by semisimplification is projective and equivariant under the $\bC^{\times}$ action. We claim that $\forall x\in \cM^{N,K}(M_1,M_2)$, its limit under the $\bC^{\times}$ action $\lim_{r\to 0}r(x)$ exists. This follows from the fact that $\lim_{r\to 0}r(q(x))$ exists and the valuative criterion for a proper morphism. Since $\forall s\in \bC^{\times }$, we have 
\ie 
s\left(\underset{r\to 0}{\lim}r(x)\right)=\underset{r\to 0}{\lim}sr(x)=\underset{r'\to 0}{\lim}r'(x),
\fe 
we see that the limit $\lim_{r\to 0}r(x)\in \cM^{N,K}(M_1,M_2)^{\bC^{\times}}$, thus every point in $\cM^{N,K}(M_1,M_2)$ is connected to some point in $\cM^{N,K}(M_1,M_2)^{\bC^{\times}}$. As there is an inclusion $\cM^{N,K}(M_1,M_2)^{\bC^{\times}}\subset \cM^{N,K}_{\tilde J=0}(M_1,M_2)$, we only need to show that $\cM^{N,K}_{\tilde J=0}(M_1,M_2)$ is connected. In fact, we prove a stronger result

\begin{proposition}\label{Irreducible J=0 variety}
Assume that $\xi>0$, the $\cM^{N,K}_{\tilde J=0}(M_1,M_2)$ is irreducible of dimension $2NM_1+M_2(M_1+K+1)$.
\end{proposition}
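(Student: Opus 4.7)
The plan is to exploit the locally trivial fibration structure that was already identified in the text preceding the statement. By the same argument used to prove Lemma \ref{Fibration over Hilb}, the projection
\ie
\pi:\cM^{N,K}_{\tilde J = 0}(M_1,M_2)\longrightarrow \cM^{N}(M_1),\qquad (X,Y,I,J,\tilde X,\tilde Y,\tilde I,0,S)\mapsto (X,Y,I,J),
\fe
is a Zariski-locally trivial fibration whose fibers are isomorphic to the punctual Quot scheme $\mathrm{Quot}^{M_2}(\mathcal O_{\bC^2}^{K+M_1})$; here the framing rank $K+M_1$ packages the $K$ components of $\tilde I$ together with the $M_1$ components of the edge field $S:\mathcal V_1\to \mathcal V_2$, which becomes a trivial rank-$M_1$ framing once $\mathcal V_1$ is locally trivialised on the base.

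Given this, I would reduce both irreducibility and the dimension count to the base and the fiber separately. For the base, I would invoke the classical fact that $\cM^N(M_1)$, the framed moduli of torsion-free sheaves on $\mathbb P^2$ of rank $N$ and $c_2=M_1$, is a smooth irreducible (hyper-K\"ahler) variety of complex dimension $2NM_1$. For the fiber, the relevant input is Baranovsky's theorem on the affine plane Quot scheme, which asserts that $\mathrm{Quot}^m(\mathcal O_{\bC^2}^r)$ is irreducible of dimension $m(r+1)$. Substituting $m=M_2$ and $r=K+M_1$ gives fiber dimension $M_2(K+M_1+1)$.

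Once irreducibility of base and fiber is in hand, irreducibility of the total space is formal: cover $\cM^N(M_1)$ by Zariski opens $U_i$ trivialising $\pi$ so that each $\pi^{-1}(U_i)\cong U_i\times \mathrm{Quot}^{M_2}(\mathcal O_{\bC^2}^{K+M_1})$ is irreducible; since $\cM^N(M_1)$ is irreducible, the pairwise intersections $U_i\cap U_j$ (and hence $\pi^{-1}(U_i\cap U_j)$) are non-empty, so no two of the $\pi^{-1}(U_i)$ can belong to disjoint irreducible components. The dimensions then add to
\ie
\dim \cM^{N,K}_{\tilde J = 0}(M_1,M_2) \,=\, 2NM_1 + M_2(K+M_1+1),
\fe
which matches the claim.

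The only substantive geometric input is Baranovsky's irreducibility theorem, and I expect that to be the main black box; everything else is bookkeeping once the fibration description is accepted. The only step that needs some care in the write-up is verifying that $\pi$ really is Zariski-locally trivial rather than merely a set-theoretic fibration with constant fiber type: this should follow by choosing Zariski-open trivialisations of the universal bundle $\mathcal V_1$ on $\cM^N(M_1)$ and observing that, after such a trivialisation, the fiber quiver data $(\tilde X,\tilde Y,\tilde I,S)$ (with $\tilde J=0$) depend only on the now-trivial rank-$(K+M_1)$ framing bundle $\underline{\bC}^{K}\oplus \mathcal V_1$, making $\pi^{-1}(U_i)$ canonically a product.
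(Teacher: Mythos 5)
Your proof follows the paper's argument essentially verbatim: you use the same locally trivial fibration $\pi:\cM^{N,K}_{\tilde J=0}(M_1,M_2)\to\cM^N(M_1)$ with fiber $\mathrm{Quot}^{M_2}(\mathcal O_{\bC^2}^{K+M_1})$, cite irreducibility and dimension $2NM_1$ for the base, and reduce the whole statement to irreducibility and dimension $M_2(K+M_1+1)$ of the fiber. The only cosmetic difference is that you invoke the irreducibility of the full affine-plane Quot scheme as a black box attributed to Baranovsky, whereas the paper proves it as a standalone lemma (valid for any rank-$r$ bundle on any smooth connected surface) by running the Hilbert--Chow map and reducing to Baranovsky's theorem on the variety of commuting nilpotent matrices.
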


\begin{proof}
The same argument in Lemma \ref{Fibration over Hilb} shows that there is locally trivial fibration $$\pi:\cM^{N,K}_{\tilde J=0}(M_1,M_2)\to \cM^{N}(M_1)$$with fibers isomorphic to $\mathrm{Quot}^{M_2}(\mathcal O_{\bC^2}^{M_1+K})$. It is known that the Nakajima quiver variety $\cM^{N}(M_1)$ is irreducible\footnote{The Kirwan map $K_{\mathrm{GL_{M_1}}}(\mathrm{pt})\to K(\cM^{N}(M_1))$ is surjective by \cite{Nevins:2016}, so every K-theory class on $\cM^{N}(M_1)$ must have constant rank, then it follows that $\cM^{N}(M_1)$ is connected thus it is irreducible.} of dimension $2NM_1$, so we only need to show that $\mathrm{Quot}^{M_2}(\mathcal O_{\bC^2}^{M_1+K})$ is irreducible of dimension $M_2(M_1+K+1)$, and this is proven in the next lemma.
\end{proof}

\begin{lemma}
Let $\mathcal S$ be a smooth connected algebraic surface and $\mathcal V$ is a vector bundle of rank $N$ on $\mathcal S$, then the Quot scheme $\mathrm{Quot}^{M}(\mathcal V)$ is irreducible of dimension $(N+1)M$.
\end{lemma}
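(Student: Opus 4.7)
The plan is to study the Hilbert--Chow support morphism $h\colon \mathrm{Quot}^{M}(\mathcal{V})\to \mathrm{Sym}^{M}\mathcal{S}$ sending $[\mathcal{V}\twoheadrightarrow \mathcal{F}]$ to the cycle $\sum_{p}\mathrm{length}(\mathcal{F}_{p})\cdot [p]$, stratify the Quot scheme by support type, and reduce to the Ellingsrud--Lehn theorem on the punctual Quot scheme.

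First, I identify the open stratum $U\subset \mathrm{Quot}^{M}(\mathcal{V})$ of quotients supported on $M$ distinct reduced points. A point of $U$ is exactly an unordered tuple of $M$ distinct points $p_{i}\in \mathcal{S}$ together with, for each $i$, a surjection $\mathcal{V}|_{p_{i}}\twoheadrightarrow k(p_{i})$, equivalently a point of $\mathbb{P}(\mathcal{V}_{p_{i}}^{\vee})\cong \mathbb{P}^{N-1}$. Hence $U$ is a Zariski-locally trivial $(\mathbb{P}^{N-1})^{M}$-bundle over the configuration space of $M$ unordered distinct points on $\mathcal{S}$. Since $\mathcal{S}$ is smooth and connected of dimension two, this base is smooth and irreducible of dimension $2M$, so $U$ is smooth and irreducible of dimension $2M+M(N-1)=(N+1)M$.

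Second, I bound the deeper strata. Over a cycle $\xi=\sum_{i=1}^{r}m_{i}[p_{i}]\in \mathrm{Sym}^{M}\mathcal{S}$ with $r$ distinct points, the fibre $h^{-1}(\xi)$ is canonically isomorphic to the product of punctual Quot schemes $\prod_{i=1}^{r}\mathrm{Quot}^{m_{i}}_{p_{i}}(\mathcal{V})$. By the theorem of Ellingsrud--Lehn (generalising Brian\c con's theorem for punctual Hilbert schemes on surfaces), each factor is irreducible of dimension $m_{i}N-1$. Combined with the dimension $2r$ of the corresponding base stratum in $\mathrm{Sym}^{M}\mathcal{S}$, the preimage stratum in $\mathrm{Quot}^{M}(\mathcal{V})$ is itself irreducible of dimension $2r+\sum_{i}(m_{i}N-1)=MN+r$, which is strictly less than $(N+1)M$ whenever $r<M$. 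In particular the complement $Z:=\mathrm{Quot}^{M}(\mathcal{V})\setminus U$ has dimension strictly less than $(N+1)M$.

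Third, I prove density of $U$ by a smoothing argument. Given $x=[\mathcal{V}\twoheadrightarrow\mathcal{F}]$ with $h(x)=\xi$, choose a smooth affine curve $C$ and a morphism $C\to \mathrm{Sym}^{M}\mathcal{S}$ sending a closed point $0\in C$ to $\xi$ and the generic point to the distinct-points locus; such a curve exists because $\mathrm{Sym}^{M}\mathcal{S}$ is irreducible and the distinct-points locus is a dense open. The base change $Q_{C}:=\mathrm{Quot}^{M}(\mathcal{V})\times_{\mathrm{Sym}^{M}\mathcal{S}}C$ is projective over $C$ with irreducible fibres (by step two), and the open locus $U\times_{\mathrm{Sym}^{M}\mathcal{S}}C\subset Q_{C}$ is the preimage of the punctured curve $C\setminus \{0\}$ and is dense in the unique irreducible component of $Q_{C}$ which dominates $C$. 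Since $Q_{C}\to C$ is proper, by the valuative criterion every point of $h^{-1}(\xi)\subset Q_{C}$ that lies in the fibre over $0$ is in the closure of this distinguished component; in particular $x\in \overline{U}$. Combining with step two, $\overline{U}=\mathrm{Quot}^{M}(\mathcal{V})$, and the latter is irreducible of dimension $(N+1)M$.

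The main obstacle is step three, which rests on the irreducibility of the punctual Quot schemes $\mathrm{Quot}^{m}_{p}(\mathcal{V})$ (Ellingsrud--Lehn): only with this input does one know that the generic deformation of a fat-point quotient is unique enough to force $x\in \overline{U}$. With that theorem granted, the dimension count on strata plus the properness of Hilbert--Chow reduces everything to the standard fact that the distinct-points locus is dense in $\mathrm{Sym}^{M}\mathcal{S}$.
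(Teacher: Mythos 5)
Your strategy --- Hilbert--Chow map, irreducibility of punctual Quot schemes, stratification of $\mathrm{Quot}^{M}(\mathcal V)$ by the support type of the quotient --- is the same as the paper's, with Ellingsrud--Lehn cited in place of Baranovsky's theorem on the variety of commuting nilpotent matrices; these are essentially equivalent inputs.

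However, step three contains a genuine gap. The valuative criterion of properness produces a \emph{single} extension of a generic $K$-point of $Q_C$ to an $R$-point, i.e.\ one limit point in the special fibre $h^{-1}(\xi)$; it does \emph{not} say that every point of $h^{-1}(\xi)$ arises as such a limit. Properness together with irreducibility of all fibres does not imply irreducibility of the total space: take $Y=\mathbb{A}^1$ and $X = (\mathbb{P}^1\times\{0\}) \cup \Gamma$ where $\Gamma\subset\mathbb{P}^1\times\mathbb{A}^1$ is the graph of a section; then $X\to Y$ is proper, every fibre is a single point or $\mathbb{P}^1$ (hence irreducible), yet $X$ has two components. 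Your dimension count in step two shows only that a hypothetical component $W\not\subset\overline{U}$ would satisfy $\dim W\le NM+r<(N+1)M$; to rule such a $W$ out you still need a lower bound $\dim W\ge (N+1)M$, which you have not supplied --- and the deformation-theoretic estimate $\mathrm{hom}(\mathcal K,\mathcal F)-\mathrm{ext}^1(\mathcal K,\mathcal F)=NM-\mathrm{ext}^2(\mathcal K,\mathcal F)\le NM$ is too small to do it. In fairness, the paper's own proof also passes from ``all Hilbert--Chow fibres are irreducible'' to ``the total space is irreducible'' by bare assertion. The standard way to close this gap is to note that $\mathrm{Quot}^{M}$ is \'etale-locally a free $\mathrm{GL}$-quotient of an open locus in (the full, not merely nilpotent) commuting variety times framing data, so Motzkin--Taussky irreducibility of the commuting variety makes $\mathrm{Quot}^{M}$ unibranch, and connectedness of $\mathrm{Quot}^{M}$ then finishes the argument.
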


\begin{proof}
Consider the Hilbert-Chow map $h:\mathrm{Quot}^{M}(\mathcal V)\to \mathrm{Sym}^{M}\mathcal S$. Its fiber over a point
\ie 
(\underbrace{x_1,\cdots,x_1}_{L_1},\underbrace{x_2,\cdots,x_2}_{L_2},\cdots,\underbrace{x_n,\cdots,x_n}_{L_n})\in \mathrm{Sym}^{M}\mathcal S,\; M=\sum_{i=1}^n L_i
\fe 
is isomorphic to 
\ie 
\prod_{i=1}^{n} \mathrm{Quot}^{L_i}(\bC[\![x,y]\!]^{N})
\fe
where $\mathrm{Quot}^{L_i}(\bC[\![x,y]\!]^{N})$ is the \textit{formal Quot scheme} defined by the moduli space of quotient module of $\bC[\![x,y]\!]^{N}$ of $\bC$-dimension $L_i$ on the ring $\bC[\![x,y]\!]$. We claim that $\mathrm{Quot}^{L_i}(\bC[\![x,y]\!]^{N})$ is irreducible. The irreducibility of $\mathrm{Quot}^{M}(\mathcal V)$ follows from this claim and the irreducibility of $\mathrm{Sym}^{M}\mathcal S$.

To prove this claim, we notice that $\mathrm{Quot}^{L_i}(\bC[\![x,y]\!]^{N})$ has the form $U/\mathrm{GL}_{L_i}$ where $U$ is an open subvariety of the following variety
\ie\label{formal quot scheme before quotient}
\{X,Y\in\mathfrak{gl}_{L_i}|[X,Y]=0,\text{ X and Y are nilpotent}\}\times \mathrm{Hom}(\bC^N,\bC^{L_i}).
\fe 
The first factor in \eqref{formal quot scheme before quotient} is irreducible \cite{Baranovsky:2001}, and the second factor is obviously irreducible, so $U$ is an irreducible variety, thus $\mathrm{Quot}^{L_i}(\bC[\![x,y]\!]^{N})$ is irreducible.

Finally, the dimension of $\mathrm{Quot}^{M}(\mathcal V)$ can be computed using the open locus $h^{-1}(V)$, where $V\subset \mathrm{Sym}^{M}\mathcal S$ is the open subset of points $(x_1,\cdots,x_M)$ where $x_i$ are distinct. Restrict to $h^{-1}(V)$, $h$ is a locally trivial fibration with fibers isomorphic to product of $M$ copies of $\mathbb{P}^{N-1}$, so 
\ie 
\dim \mathrm{Quot}^{M}(\mathcal V)=\dim h^{-1}(V)=(N-1)M+2M=(N+1)M
\fe 
\end{proof}

\bibliographystyle{ytphys}
\bibliography{G2.bib}

\end{document}